\documentclass[a4paper,twocolumn,english,showpacs,nofootinbib,nobibnotes,aps,pra,floatfix,superscriptaddress]{revtex4-2}
\usepackage[utf8]{inputenc}
\usepackage{amsthm}
\usepackage{amssymb}
\usepackage{graphicx}
\usepackage[caption=false]{subfig}
\usepackage{xparse}
\usepackage[table]{xcolor}
\usepackage{physics}
\usepackage{bbm} 
\usepackage{mathtools} 
\usepackage{enumitem} 
\usepackage{stackengine}
\usepackage{hyperref} 
\usepackage{algorithm}

\makeatletter
\newenvironment{protocol}
{
		\renewcommand{\ALG@name}{Protocol}
		\refstepcounter{algorithm}
        \vspace{1em}
		\hrule height.8pt depth0pt \kern2pt
		\renewcommand{\caption}[2][\relax]{
			{\raggedright\textbf{\fname@algorithm~\thealgorithm} ##2\par}%
			\ifx\relax##1\relax % 
			\addcontentsline{loa}{algorithm}{\protect\numberline{\thealgorithm}##2}%
			\else % 
			\addcontentsline{loa}{algorithm}{\protect\numberline{\thealgorithm}##1}%
			\fi
			\kern2pt\hrule\kern2pt
		}
	}{
	\kern2pt\hrule\relax
    \vspace{1em}
}
\makeatother

\usepackage{struktex}

\usepackage{times}

\makeatletter

\newtheorem{theorem}{Theorem}[section]
\newtheorem{proposition}[theorem]{Proposition}
\newtheorem{lemma}[theorem]{Lemma}
\newtheorem{corollary}[theorem]{Corollary}
\newtheorem{definition}[theorem]{Definition}

\theoremstyle{definition}
\newtheorem{examp}[theorem]{Example}

\usepackage{braket}
\usepackage{dsfont}

\usepackage[normalem]{ulem}

\newcommand{\T}{\operatorname{T}}

\begin{document}

\title{Entanglement quantification with randomized measurements is maximally difficult}

\author{Julian Eisfeld}
\email{julian.eisfeld@hhu.de}
\affiliation{Institut für Theoretische Physik III, Heinrich-Heine-Universität Düsseldorf, Universitätsstr. 1, 40225 Düsseldorf, Germany}

\author{Nikolai Wyderka}
\affiliation{Institut für Theoretische Physik III, Heinrich-Heine-Universität Düsseldorf, Universitätsstr. 1, 40225 Düsseldorf, Germany}

\date{\today}

\begin{abstract}
The certification of quantum systems is essential for emerging quantum technologies, particularly in quantum communication, networks, and distributed computing, where maintaining a common reference frame across distant nodes poses significant challenges. Reference frame independent approaches, such as randomized measurement schemes, offer a promising route by reducing experimental demands while granting access to basis-independent quantities, including entanglement. However, the efficiency of such schemes in measuring such local invariants has remained unclear. In this work, we determine the minimal number of measurement settings required to access all two-qubit invariants. We further demonstrate that entanglement certification necessarily involves the most demanding invariants, establishing it as a maximally difficult task. Our results reveal a fundamental hierarchy among invariants, with direct implications for experimental feasibility and theoretical understanding of quantum certification. Finally, we extend our analysis beyond bipartite systems by applying it to the Kempe invariant in three-qubit systems, improving known measurement protocols and providing a first step toward uncovering similar hierarchies in higher dimensions.
\end{abstract}
\maketitle
\section{Introduction}

As quantum technologies find their ways into an increasing number of applications, their precise control and certification becomes an increasingly important task that is challenged by various error sources~\cite{acin2017european, kliesch2021theory, noller2024classical}. While state preparation fidelities and the implementation of precise and robust  measurements are important for any quantum protocol, applications in quantum communication, quantum networks and distributed computing face an additional challenge, namely establishing and keeping alive a common reference frame over large distances~\cite{islam2014spatial, islam2016asynchronous,cieslinski2024analysing} which is especially problematic in satellite-based protocols~\cite{rarity2002ground,kurtsiefer2002step,liao2017satellite,li2025microsatellite}. A common way to circumvent this challenge lies in exploiting reference frame independent degrees of freedom of quantum systems, as accessing them can be achieved with significantly lower experimental demands when compared to tools like state tomography \cite{elben2020mixed, ketterer2022statistically, Wyderka}, while many important features of quantum systems, like entanglement and non-locality, are inherently basis- and therefore reference frame independent. Entanglement is of particular interest, as it acts as a resource for many communication related tasks. Despite its basis-independence, most implemented schemes to quantify it rely on basis dependent schemes like witness measurements \cite{guhne2009entanglement}.

The most popular measurement scheme that does not rely on aligned frames makes use of so-called randomized measurements \cite{Bartkiewicz, knips2020multipartite, knips2020moment, imai2021bound, elben2023randomized, cieslinski2024analysing}: by choosing the measurement directions randomly, they require only short-term stability of mutual alignments of distant parties, without being characterized in detail. Such schemes have been shown to yield access to basically all relevant basis independent quantities in two-qubit systems \cite{Wyderka} and have been used to certify even weak forms of entanglement \cite{imai2021bound}.

While Ref.~\cite{Wyderka} shows that a complete set of invariants can be measured, for some of these invariants only protocols using multiple (randomized) measurement settings were found. This significantly increases the experimental requirements, as reference frame stability has to be ensured while the measurement setting is changed, which usually takes a long time. However, it was unclear if the proposed protocols were optimal in terms of the number of required settings, or if better ones can be constructed.

In this paper, we answer this open question by exactly quantifying the required number of measurement settings in randomized measurement schemes for all two-qubit invariants, showing that the found protocols are indeed optimal.  Furthermore, we show that quantifying entanglement requires access to the most difficult invariants, implying that entanglement certification is a maximally difficult task. While this has direct practical consequences on the difficulty of implementing such schemes, the result is more fundamental: it establishes a previously unknown hierarchy among the invariants. Thus, our results directly pose related questions about similar orderings in larger-dimensional systems and consequences in related certification tasks. Finally, we take first steps in this new direction by generalizing our results to the Kempe invariant, an important invariant in three-qubit systems \cite{kempe1999multiparticle, Barnum}, and provide an improved measurement protocol for it. 

This paper is organized as follows: In Sec.~\ref{basics_ch}, we review the randomized measurement scheme and the theory of local invariants of quantum systems. Further, we formally define the type of a given invariant in terms of the number of required measurement settings, which will form the basis of our results. In Secs.~\ref{det_ch} and \ref{hodge ch}, we then calculate the types of the two-qubit invariants and establish that entanglement quantification requires access to invariants of maximal type. Finally, Sec.~\ref{kempe_ch} extends our result to the example of the Kempe invariant of three-qubit systems.

\section{Basics}\label{basics_ch}
\subsection{Randomized measurements}
We describe $n$-partite quantum states by positive-semidefinite linear maps ${\varrho : H \to H}$, such that $\tr(\varrho) = 1$, where $ H = \bigotimes\limits_{j=1}^n H_j$ is composed of the $d_j$-dimensional Hilbert spaces $H_j$ of the subsystems. Observables are described by Hermitian maps $\mathcal{O} : H\to H$. The mean value of an observable for quantum state $\varrho$ is denoted by $\langle \mathcal{O} \rangle_{\varrho} := \tr(\varrho \mathcal{O})$. To measure $\mathcal{O}$, a projective measurement in its eigenbasis is performed. Rotating the measurement basis in each subsystem $j$ can be described via unitary maps $U_j\in SU(d_j)$, leading to a rotation of the global measurement basis by $U^\dagger= U_1^\dagger \otimes ... \otimes U_n^\dagger$. This rotated measurement now results in a mean value of $\langle U^\dagger \mathcal{O} U \rangle_{\varrho} = \langle \mathcal{O}\rangle_{U\varrho U^\dagger}$. Randomly selecting such local rotations would allow to sample from the probability distribution for $\langle \mathcal{O}\rangle_{U\varrho U^\dagger}$ with $t$-th moment 
\begin{equation}\label{moments}
    \mathcal{R}^{(t)}_{\mathcal{O}} (\varrho) =\int_{SU(d_1)} ... \int_{SU(d_n)} \langle \mathcal{O} \rangle_{U\varrho U^\dagger}^t \dd \mu (U),
\end{equation}
where $\mu = \bigotimes\limits_{j=1}^n \mu_j$ and $\mu_j$ is the Haar measure (the canonical choice for random rotations) on $SU(d_j)$. By construction, the resulting quantities $\mathcal{R}^{(t)}_{\mathcal{O}} (\varrho)$ are properties of the state $\varrho$, that do not depend on a choice of measurement basis. As such, their relation to entanglement and non-locality properties of states have been the subject of previous studies \cite{imai2021bound, ketterer2022statistically, Wyderka_Geometry}.

Although technically challenging, randomly drawing measurement bases according to the Haar measure is experimentally possible \cite{wyderka2023complete, knips2020multipartite, ma2025construct}. Depending on the structure of $\mathcal{O}$, the quantity $\mathcal{R}^{(t)}_{\mathcal{O}} (\varrho)$ can therefore be obtained experimentally as follows:

If $\mathcal{O}$ is a product observable,i.e., of the form $\mathcal{O} = \mathcal{O}_1 \otimes ... \otimes \mathcal{O}_n$), then a canonical measurement protocol for $\mathcal{R}^{(t)}_{\mathcal{O}}$ is available:

\begin{protocol} \label{prot:type_1_measurement} \caption{Type 1 measurement}
\begin{enumerate}[wide, labelwidth=!, labelindent=0pt]
\item Choose a random rotation $U=U_1\otimes...\otimes U_n$.
\item Estimate the quantity $\langle U^\dagger \mathcal{O} U\rangle_\varrho$ by repeating suitably often:
\begin{enumerate}
    \item Prepare a copy of $\varrho$.
    \item In each subsystem $k$, measure the observable $U_k^\dagger \mathcal{O}_k U_k$.
\end{enumerate}
\item Estimate $\mathcal{R}^{(t)}_{\mathcal{O}}$ by repeating steps 1 and 2 suitably often.
\end{enumerate}
\end{protocol}

If, however, $\mathcal{O}$ is not a product observable, we decompose it in terms of $r\geq 2$ product observables,  ${\mathcal{O} = \sum\limits_{j=1}^r \mathcal{O}_1^{(j)} \otimes ... \otimes \mathcal{O}_n^{(j)} \equiv \sum\limits_{j=1}^r \mathcal{O}^{(j)}}$. Then, above measurement protocol is modified as follows:

\begin{protocol} \label{prot:type_r_measurement} \caption{Type $r$ measurement}
\begin{enumerate}[wide, labelwidth=!, labelindent=0pt]\item Choose a random rotation $U=U_1\otimes...\otimes U_n$.
\item[2.1.] Estimate the quantity $\langle U^\dagger \mathcal{O}^{(1)} U\rangle_\varrho$ by repeating suitably often:
\begin{enumerate}
    \item Prepare a copy of $\varrho$.
    \item In each subsystem $k$, measure the observable $U_k^\dagger \mathcal{O}^{(1)}_k U_k$.
\end{enumerate}
$\vdots$
\item[2.$r$.] Estimate the quantity $\langle U^\dagger \mathcal{O}^{(r)} U\rangle_\varrho$ by repeating suitably often:
\begin{enumerate}
    \item Prepare a copy of $\varrho$.
    \item In each subsystem $k$, measure the observable $U_k^\dagger \mathcal{O}^{(r)}_k U_k$.
\end{enumerate}
\item[3.] Estimate $\mathcal{R}^{(t)}_{\mathcal{O}}$ by repeating steps 1 and 2 suitably often.
\end{enumerate}
\end{protocol}

The measurement setup is visualized in Fig.~\ref{fig:measurement} for the example of $n=2$. Experimentally, the difficulty lies in obtaining $\langle U^\dagger\mathcal{O}U\rangle_{\varrho }$ for fixed $U=U_1 \otimes ... \otimes U_n$. This involves keeping a fairly constant reference frame for a lot of measurements. The number of measurements with a constant reference frame needed will grow by a factor of $r$ if $\mathcal{O}$ is not a product observable, and additionally, changing the measured observable is typically time consuming. Therefore, $r$ should be chosen as small as possible. The minimal $r$ for a given observable  $\mathcal{O}$ is called its \emph{tensor rank}.\footnote{Not to be confused with the number of indices of a tensor, which is sometimes also called the tensor rank.} The above described measurement procedure for an observable of tensor rank $r$ will be called \textit{type $r$ measurement} in the following.

\begin{figure}
    \centering
    \includegraphics[width=0.95\linewidth]{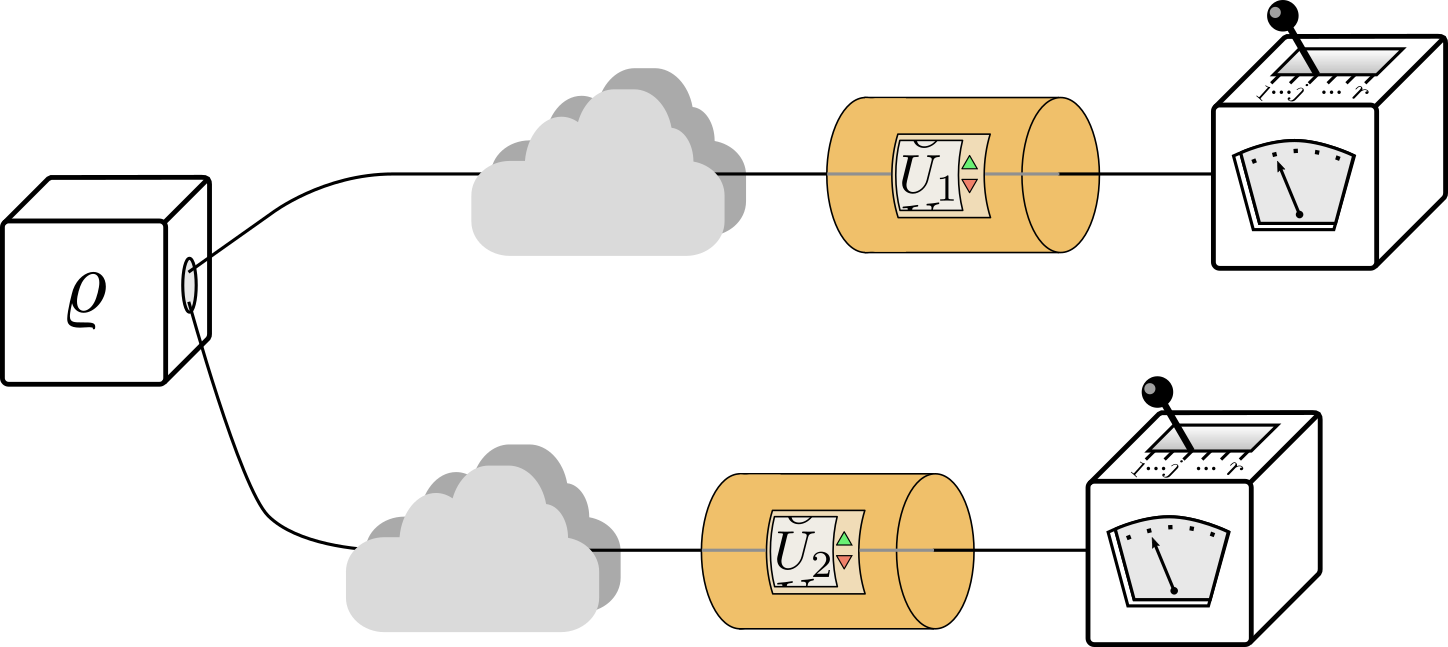}
    \caption{Visualization of the randomized type $r$ measurement protocol to measure the moments $\mathcal{R}_\mathcal{O}^{(t)}(\varrho)$ [Eq.~\eqref{moments}] for a given non-local observable $\mathcal{O}$ with tensor rank $r$. The prepared bipartite state $\varrho$ undergoes an unknown local unitary rotation, visualized by clouds. The two measurement parties then choose random and independent rotations $U_1$ and $U_2$ and cycle through $r$ different local measurement settings to measure $\langle U_1^\dagger \otimes U_2^\dagger \mathcal{O} U_1\otimes U_2 \rangle$. Repeating this procedure for different choices of $U_1$ and $U_2$ allows finally to determine the moments $\mathcal{R}_\mathcal{O}^{(t)}(\varrho)$. }
    \label{fig:measurement}
\end{figure}

In the bipartite case (i.e., $n=2$), the tensor rank equals the operator Schmidt rank, which can be calculated via the operator Schmidt decomposition and is bounded by $r_{\mathrm{max}} = \min(d_1^2,d_2^2)$ \cite{schmidt1907theorie}. Type $r_{\mathrm{max}}$ measurements will be called \textit{maximally difficult}.

Type 1 measurements are more robust against certain types of noise which can be described as a unitary change in basis over time $U_{\mathrm{noise}}(t)$. As long as the rate of change is significantly smaller than the rate at which new quantum states can be produced (which is, for example, the case when the noise takes the form of a slow drift), the noise can be considered constant during the procedure of estimating $\langle U^\dagger \mathcal{O} U\rangle_{\varrho}$ for fixed $U$ and can thus be absorbed into $U$, as it is selected at random. Changing observables takes time, such that any additional term in $\mathcal{O}$ adds on considerable measurement time and the procedure is much less robust against noise.

\subsection{Local invariants}
Any function $I$ that maps quantum states to numbers, satisfying $I(\varrho) = I(U\varrho U^\dagger)$ for all $U = U_1 \otimes ... \otimes U_n$ is called a local unitary invariant \cite{grassl1998computing}. Two quantum states $\varrho, \varrho'$ are called \textit{local unitary equivalent}, if there exists $U = U_1\otimes...\otimes U_n$, such that $\varrho = U \varrho' U^\dagger$.
It can be shown that for any $\varrho$ and $\varrho'$ which are not local unitary equivalent, there exists a polynomial local unitary invariant $I$, such that $I(\varrho) \neq I(\varrho')$ \cite{onishchik2012lie}. Moreover, the ring of polynomial local unitary invariants turns out to be finitely generated, which implies that there is a finite list of polynomials that encode all information about a quantum state, up to a local choice in basis \cite{grassl1998computing,springer2006invariant}.

Let us now focus on the two-qubit case, i.e., $n=2$ and $d_1=d_2=2$. Using the Bloch decomposition, we can write an arbitrary two-qubit state $\varrho$ as
\begin{equation}\label{density matrix main}
    \varrho = \frac{1}{4} \left[\mathbbm{1}^{\otimes 2} + \vec{\alpha} \cdot \vec{\sigma}\otimes \mathbbm{1} + \mathbbm{1} \otimes \vec{\beta} \cdot \vec{\sigma} + \sum_{j,k=1}^3 T_{jk} \sigma_j \otimes \sigma_k\right],
\end{equation}
where $\sigma_1,\sigma_2,\sigma_3$ are the Pauli matrices, $\vec{\alpha},\vec{\beta}\in \mathbb{R}^3$ are called the local Bloch vectors and $T\in \mathbb{R}^{3\times 3}$ is called the correlation matrix of $\varrho$. The Bloch vectors and correlation matrix are subject to constraints imposed by $\varrho \geq 0$.

The case of two-qubit systems is special, as it is the only system size for which a complete and finite list of polynomial invariants is known:
\begin{theorem}[Makhlin - Theorem 3 in \cite{Makhlin}]
    There are $12$ continuous and $6$ discrete local unitary invariants in the bipartite $2$-dimensional case. Writing $\varrho$ as in Eq.~\eqref{density matrix main}, the continuous invariants are given by:
    \allowdisplaybreaks
    \begin{alignat*}{3}
        I_1 &= \det(T), \quad &&I_2 = \tr(T^{\T} T), \quad &&I_3 = \tr((T^{\T} T)^2), \\
        I_4 &= |\vec{\alpha}|^2, \quad &&I_5 = |T^{\T} \vec{\alpha}|^2, \quad &&I_6 = |TT^{\T} \vec{\alpha}|^2, \\
        I_7 &= |\vec{\beta}|^2, \quad &&I_8 = |T \vec{\beta}|^2, \quad &&I_9 = |T^{\T} T \vec{\beta}|^2, \\
        I_{12} &= \langle \vec{\alpha},  T \vec{\beta} \rangle,\quad &&I_{13} = \langle \vec{\alpha}, T T^{\T} T \vec{\beta} \rangle, \quad && I_{14} = 2 \langle \vec{\alpha},\mathrm{adj}(T)\vec{\beta} \rangle.
    \end{alignat*}

    The discrete invariants are given by:
    \begin{align*}
        I_{10} &= \mathrm{sign}\left(\det(\vec{\alpha},TT^{\T}\vec{\alpha},TT^{\T} T T^{\T}  \vec{\alpha})\right),\\I_{11} &= \mathrm{sign}\left(\det(\vec{\beta},T^{\T} T \vec{\beta},T^{\T} T T^{\T}  T \vec{\beta})\right), \\
        I_{15} &= \mathrm{sign}\left(\det(\vec{\alpha},TT^{\T}  \vec{\alpha},T \vec{\beta})\right), \\
        I_{16} &= \mathrm{sign}\left(\det(T^{\T} \vec{\alpha}, \vec{\beta},T^{\T}  T \vec{\beta})\right), \\
        I_{17} &= \mathrm{sign}\left(\det(T^{\T}  \vec{\alpha},T^{\T}  T T^{\T}  \vec{\alpha},\vec{\beta})\right),\\
        I_{18} &= \mathrm{sign}\left(\det(\vec{\alpha}, T\vec{\beta},TT^{\T}  T \vec{\beta})\right).
    \end{align*}
\end{theorem}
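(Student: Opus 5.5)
The statement is Makhlin's classification, which says two two-qubit states are local unitary equivalent iff these eighteen quantities agree. I would prove it by reducing the local unitary action to an orthogonal action on the Bloch data, and then classifying orbits of that action.

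\textbf{Reduction to $SO(3)\times SO(3)$.} Since $U_j\sigma_k U_j^\dagger=\sum_l O^{(j)}_{lk}\sigma_l$ with $O^{(j)}\in SO(3)$, and $SU(2)\to SO(3)$ is surjective, conjugation by $U_1\otimes U_2$ acts on the data of Eq.~\eqref{density matrix main} as
\begin{equation*}
(\vec\alpha,\vec\beta,T)\mapsto(O_1\vec\alpha,\,O_2\vec\beta,\,O_1TO_2^{\T}).
\end{equation*}
Local unitary equivalence is therefore orbit equivalence of triples under $SO(3)\times SO(3)$. Invariance of each listed quantity is then immediate: traces and norms are $O(3)$-invariant, $\det T$ and $\mathrm{adj}(T)$ need $\det O_j=1$, and each determinant of three vectors transforms by a single $\det O_j=1$. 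So the content of the theorem is completeness.

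\textbf{Normal form.} Take a signed singular value decomposition $T=O_1DO_2^{\T}$ with $D=\mathrm{diag}(d_1,d_2,d_3)$ and $O_i\in SO(3)$. The sign of $\det T$ is absorbed into $d_3$, so $d_1\ge d_2\ge |d_3|$. Rotating by $(O_1^{\T},O_2^{\T})$ brings the state to $T=D$. The $d_i$ are recovered from $I_1,I_2,I_3$: the symmetric functions of the $d_i^2$ give the $d_i^2$, and $I_1$ fixes the sign of the product. What remains is the stabilizer of $D$. Generically ($|d_i|$ distinct and nonzero) it consists of pairs $O_1=O_2=S$ with $S$ diagonal, entries $\pm1$, and $\det S=1$.

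\textbf{Recovering the vectors in the generic case.} In the eigenbasis of $TT^{\T}$, the quantities $I_4,I_5,I_6$ are $\sum_i\alpha_i^2 d_i^{2k}$ for $k=0,1,2$. This is a Vandermonde system in the $d_i^2$, so it gives each $\alpha_i^2$. In the same way $I_7,I_8,I_9$ give each $\beta_i^2$. The mixed invariants $I_{12},I_{13},I_{14}$ equal $\sum_i\alpha_i\beta_i d_i$, $\sum_i\alpha_i\beta_i d_i^3$ and $\sum_i\alpha_i\beta_i\,2\prod_{j\ne i}d_j$; this system is again invertible generically and gives each product $\alpha_i\beta_i$.

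The remaining freedom is a sign pattern $s\in\{\pm1\}^3$ applied to both vectors simultaneously. When $\alpha_i\neq0$, the products $\alpha_i\beta_i$ fix the relative sign of $\beta_i$, so only the signs of the $\alpha_i$ matter. The group of diagonal sign matrices with determinant $1$ has index $2$ among all sign patterns. Hence exactly one $\mathbb Z_2$ ambiguity survives, namely the overall orientation.

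\textbf{Orientation and degenerate strata.} The generic orientation is detected by $I_{10}$. That determinant equals $\alpha_1\alpha_2\alpha_3$ times the Vandermonde determinant in the $d_i^2$, so its sign distinguishes the two remaining classes. When some $\alpha_i$ vanishes, the same role is played by $I_{11}$ (via $\vec\beta$) or by the mixed signs $I_{15}$ through $I_{18}$.

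The main obstacle is this degenerate bookkeeping: coinciding or zero $|d_i|$, vanishing components of $\vec\alpha$ or $\vec\beta$, and loss of invertibility of the mixed system. In each such case the stabilizer grows, for instance to $SO(2)$ blocks or to pairs $(O,O)$ for degenerate $D$. I would go stratum by stratum and use the enlarged stabilizer to rotate $\vec\alpha$ into a canonical position inside the degenerate eigenspace. I would then check that the invariants which remain nondegenerate fix the rest; in each stratum either the orientation ambiguity disappears or some discrete invariant detects it. This case analysis is long, and I would follow Makhlin's original treatment for its details.
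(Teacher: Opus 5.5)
The paper gives no proof of this statement. It is imported from Makhlin and used as a black box, so your outline has to stand on its own. Its architecture is the standard orbit-classification argument, and you handle the generic stratum correctly. The following steps are all right:
the reduction to $(\vec\alpha,\vec\beta,T)\mapsto(O_1\vec\alpha,O_2\vec\beta,O_1TO_2^{\T})$;
the recovery of the $d_i$ from $I_1,I_2,I_3$;
the two Vandermonde systems;
the mixed system, whose determinant is $2\prod_{i<j}(d_j^2-d_i^2)$, so it stays invertible even if one $d_i=0$;
the index-two count, and the detection of the leftover orientation by $I_{10}$.

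One correction to your invariance check. Since $\mathrm{adj}(O_1TO_2^{\T})=O_2\,\mathrm{adj}(T)\,O_1^{\T}$, the quantity $\langle\vec\alpha,\mathrm{adj}(T)\vec\beta\rangle$ is not invariant as literally written. The invariant is $2\langle\vec\alpha,\mathrm{adj}(T)^{\T}\vec\beta\rangle$ (the cofactor matrix), which is what the $\varepsilon$-expression \eqref{Hodge exp} actually computes. For diagonal $D$ the two agree, so your normal-form formulas are unaffected.

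The genuine gap is that you defer the degenerate strata, and that is exactly where completeness is tested. The invariants $I_{11}$ and $I_{15}$--$I_{18}$ only matter off the generic stratum. Your claim that they ``play the same role'' needs proof, and it can be organized cleanly.

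On the whole stratum where the $d_i^2$ are pairwise distinct (rank-two $T$ included), the continuous invariants fix each pair $(\alpha_i,\beta_i)$ up to a sign. The stabilizer $\{(S,S): S=\mathrm{diag}(\pm1,\pm1,\pm1),\ \det S=1\}$ realizes every sign pattern on the occupied coordinates unless all three are occupied. In that case the only residual ambiguity is $(\vec\alpha,\vec\beta)\mapsto(-\vec\alpha,-\vec\beta)$, which fixes every continuous invariant and flips every discrete one. So you must show that some discrete invariant is nonzero whenever $\mathrm{supp}\,\vec\alpha\cup\mathrm{supp}\,\vec\beta=\{1,2,3\}$.

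This finite check does succeed, but it is not automatic. Take $T=\mathrm{diag}(d_1,d_2,0)$, $\vec\alpha=(\alpha_1,\alpha_2,0)$, $\vec\beta=(0,0,\beta_3)$, with $(\vec\alpha,\vec\beta,T)$ and $(-\vec\alpha,-\vec\beta,T)$ both states for small parameters. There $I_{10}=I_{11}=I_{15}=I_{16}=I_{18}=0$, and only $I_{17}\propto\beta_3\alpha_1\alpha_2d_1d_2(d_2^2-d_1^2)$ separates the two orbits.

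On strata with a repeated $d_i^2$, the stabilizer contains an $O(2)$ acting on the degenerate plane. The only possible leftover datum is the sign of $\alpha_k(\vec\alpha_\perp\times\vec\beta_\perp)$ or $\beta_k(\vec\alpha_\perp\times\vec\beta_\perp)$ along the remaining axis $k$, with $\vec\beta_\perp$ reflected if the repeated singular values have opposite signs. You must verify that $I_{15}$ and $I_{16}$ compute exactly these.

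Without this analysis, what you have is generic completeness plus the same citation the paper relies on.
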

Here, $T^{\T} $ denotes the transpose of $T$, $\mathrm{adj}(T)$ its adjugate and $\langle\vec{x}, \vec{y} \rangle$ the usual (real space) inner product of vectors $\vec{x}$ and $\vec{y}$.

While it is clear that for any observable $\mathcal{O}$ the function $\mathcal{R}^{(t)}_{\mathcal{O}}(\varrho)$ is a local unitary invariant, the explicit calculation procedure outlined in Appendix~\ref{Haar Calculation} reveals that it is indeed a polynomial invariant of degree $\leq t$ and must therefore be expressible in terms of the Makhlin invariants. Yet, finding an observable that yields information about a particular invariant is not easy. In Ref.~\cite{Wyderka}, observables $\mathcal{O}$ were found that allow measuring all continuous invariants (the discrete invariants turn out to be irrelevant for the task of entanglement quantification). For convenience, we list these observables in Tab.~\ref{obs table}. 
Notably, most of the listed observables are product observables and can therefore be measured with type $1$ measurements, with the exception of the observables for $I_1=\det(T)$ and $I_{14} = 2 \langle \vec{\alpha},\mathrm{adj}(T)\vec{\beta} \rangle$, which have operator Schmidt rank $3$ and $4$, respectively. Considering the step by step measurement procedure explained above, it is of interest to find minimum tensor rank observables for $I_1$ and $I_{14}$. To that end, we define formally what we mean by measuring an invariant.

\begin{definition}\label{def meas}
    A measurement of a local invariant $I$ refers to a series of choices of observables $(\mathcal{O}_j)_j$ and moments $(t_j)_j$ such that $I$ is uniquely determined by the quantities $\left(\mathcal{R}_{O_j}^{(t_j)}\right)_j$.
\end{definition}

In the case of the continuous polynomial invariants of a bipartite qubit system, the procedures and observables given in Ref.~\cite{Wyderka} show that any degree $\deg(I)$ invariant $I$ can be recovered from measurements with moment $t=\deg(I)$, which is the minimal required value of $t$. This is a nontrivial fact, as a priori, invariants could in principle require higher moments than $\deg(I)$ to be measured. Since higher moments $t$ lead to worse error terms in an experimental setting,
the following definitions are sensible:
\begin{definition}
~
\begin{enumerate}
    \item A local invariant of degree $\deg(I)$ is called type 1 if it can be measured through a series of 
    type 1 measurements of $\mathcal{R}^{(t_j)}_{\mathcal{O}_j}(\varrho)$ with $t_j\leq \deg(I)$.
    
     \item A local invariant of degree $\deg(I)$ is called type $\leq$ r if it can be measured through a series of 
     type $r_j\leq r$ measurements of $\mathcal{R}^{(t_j)}_{\mathcal{O}_j}(\varrho)$ with $t_j\leq \deg(I)$.
     
     \item An invariant that is of type $\leq r$ and not of type $\leq (r-1)$ is called type r. Bipartite invariants of type $r_{\mathrm{max}}=\min(d_1^2,d_2^2)$ are called maximally difficult.
\end{enumerate}
\end{definition}

Note that any function of type $\leq r$ invariants is again a type $\leq r$ invariant. This induces some additional structure, implying that the sets of all $\leq r$ invariants are non-decreasingly large subrings of the ring of all invariants.

The results of Ref.~\cite{Wyderka} imply that all continuous Makhlin invariants except of $I_1$ and $I_{14}$ are of type 1. In the following, we consider the two remaining cases individually.

\section{Determinant of correlation matrix}\label{det_ch}
Consider a two-qubit state $\varrho$ in its Bloch decomposed form of Eq.~\eqref{density matrix main}. The entries of its correlation matrix ${T \in \mathbb{R}^{3\times 3}}$ are given by $T_{ij}(\varrho) = \tr(\varrho \sigma_i \otimes \sigma_j)$. Consider now the invariant $I_1 = \det(T)$. First, note that it is impossible to measure $I_1$ using type 1 measurements, i.e., using product observables. This can be seen by noting that the Pauli matrices satisfy $\sigma_j^{\T}  = (-1)^{\delta_{j,2}} \sigma_j$, implying that  $\det(T(\varrho)) = - \det(T(\varrho^{\T_2}))$, where $\varrho^{\T_2}$ denotes the partial transposition of $\varrho$ w.r.t.~the second party. Conversely, for any product observable $\mathcal{O} = \mathcal{O}_1 \otimes \mathcal{O}_2$,
    \begin{equation*}
    \begin{split}
        \mathcal{R}^{(t)}_{\mathcal{O}} (\varrho^{\T_2}) &= \int \tr(\varrho^{\T_2} \left[(U_1 \mathcal{O}_1 U_1^\dagger) \otimes (U_2 \mathcal{O}_2 U_2^\dagger) \right])^t \dd U  \\
        &= \int \tr(\varrho \left[(U_1 \mathcal{O}_1 U_1^\dagger) \otimes (U_2^* \mathcal{O}_2^{\T}  U_2^{T}) \right])^t \dd U\\
       &= \int \tr(\varrho \left[(U_1 \mathcal{O}_1 U_1^\dagger) \otimes (U_2 \mathcal{O}_2^{\T}  U_2^\dagger) \right])^t \dd U\\
        &= \int \tr(\varrho \left[(U_1 \mathcal{O}_1 U_1^\dagger) \otimes (U_2 U' \mathcal{O}_2 U'^\dagger U_2^\dagger) \right])^t \dd U \\
        &= \mathcal{R}^{(t)}_{\mathcal{O}} (\varrho),
    \end{split}    
    \end{equation*}
since the Haar measure is invariant under complex conjugation and there exists a $U' \in SU(2)$ with $\mathcal{O}_2^{\T}  = U' \mathcal{O}_2 U'^\dagger$ leading to the substitution $U_2\mapsto U_2U'^{-1}$. This is due to the double covering of $SO(3)$ by $SU(2)$: $\mathcal{O}_2 = b_0 \mathbbm{1} + \vec{b} \cdot \vec{\sigma}$ and $\mathcal{O}^{\T} _2 = b_0 \mathbbm{1} + \vec{b}' \cdot \vec{\sigma}$ with $b_j' = (-1)^{\delta_{j,2}} b_j$ and thus $|\vec{b}|=|\vec{b}'|$.
Note that $\varrho^{\T_2}$ need not be a quantum state for $\mathcal{R}^{(t)}_{\mathcal{O}} (\varrho^{\T_2})$ to be well defined. Notably from the list of all continuous invariants in the two-qubit case, only $I_1 = \det(T)$ and $I_{14} = 2\langle \vec{\alpha}, \mathrm{adj}(T) \vec{\beta}\rangle$ are not invariant under partial transposition of $\varrho$, both gaining a minus sign instead. This establishes that the types of $I_1$ and $I_{14}$ must be larger than 1.

The following result establishes that the determinant of $T$ must be at least of type $3$:
\begin{theorem}\label{r>2 main}
    Let the moment $t\leq 3$. Then there is no bipartite observable $\mathcal{O}$ of operator Schmidt rank $r \leq 2$ such that $\mathcal{R}^{(t)}_{\mathcal{O}} (\varrho)$ depends on $\det(T)$.
\end{theorem}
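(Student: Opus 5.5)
The plan is to write $\mathcal{R}^{(t)}_{\mathcal{O}}(\varrho)$ as an explicit polynomial in the Bloch data $(\vec\alpha,\vec\beta,T)$, and to show that the coefficient of $\det(T)$ is proportional to $\det(C)$, where $C$ is the correlation block of $\mathcal{O}$. An operator Schmidt rank $r\le 2$ then forces $\det(C)=0$.

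\textbf{Expanding the observable.} Write $\mathcal{O}=\sum_{\mu,\nu=0}^{3} M_{\mu\nu}\,\sigma_\mu\otimes\sigma_\nu$ with $\sigma_0=\mathbbm{1}$ and $M$ a real $4\times4$ matrix. The operator Schmidt rank of $\mathcal{O}$ equals $\mathrm{rank}(M)$. Let $C=(M_{jk})_{j,k=1}^3$ be the correlation block, let $\vec a=(M_{j0})_j$ and $\vec b=(M_{0k})_k$, and let $c=M_{00}$. Through the double cover $SU(2)\to SO(3)$, a local rotation $U_1\otimes U_2$ acts as $\vec\alpha\mapsto O_1\vec\alpha$, $\vec\beta\mapsto O_2\vec\beta$, $T\mapsto O_1TO_2^{\T}$, with $O_1,O_2$ Haar distributed on $SO(3)$. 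Hence
\begin{equation*}
\langle\mathcal{O}\rangle_{U\varrho U^\dagger}=4c+4\big(\langle \vec a,O_1\vec\alpha\rangle+\langle\vec b,O_2\vec\beta\rangle+\tr(C^{\T}O_1TO_2^{\T})\big).
\end{equation*}
Raising this to the power $t$ and averaging gives a polynomial invariant of degree $\le t$ in $(\vec\alpha,\vec\beta,T)$.

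\textbf{The cases $t\le2$.} Every monomial in the average has total degree $\le 2$, while $\det(T)$ is homogeneous of degree $3$ in $T$. The degree-$\le2$ invariants are generated by $I_2,I_4,I_7,I_{12}$, none of which involves $\det(T)$. So $\mathcal{R}^{(t)}_{\mathcal{O}}$ cannot depend on $\det(T)$, and only $t=3$ remains.

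\textbf{The case $t=3$.} Expanding the cube, every term containing a factor $\vec\alpha$, $\vec\beta$ or the constant has degree $\le2$ in $T$. Such terms yield invariants like $\langle\vec\alpha,T\vec\beta\rangle$, $\tr(T^{\T}T)$ and the degree-$\le 2$ generators above, none involving $\det(T)$. Here I will use that the degree-3 invariants form a space spanned by products of Makhlin monomials, among which $\det(T)$ is the only one purely cubic in $T$. Hence the only relevant contribution is
\begin{equation*}
\int\!\!\int \tr(C^{\T}O_1TO_2^{\T})^3\,\dd O_1\,\dd O_2 ,
\end{equation*}
a trilinear-in-$T$, trilinear-in-$C$ form. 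Averaging over $O_1$ projects $(\mathbb{R}^3)^{\otimes3}$ onto its $SO(3)$-invariant subspace, which is one-dimensional and spanned by $\epsilon_{ijk}$; the same holds for $O_2$. Therefore the integral equals $\kappa\,\epsilon_{ijk}\epsilon_{lmn}C_{il}C_{jm}C_{kn}\,\epsilon\epsilon TTT$, that is, $\kappa'\det(C)\det(T)$ plus terms symmetric under the full $O(3)$, which are functions of $\tr(T^{\T}T)$ and $\tr((T^{\T}T)^2)$ only. I expect this Weingarten-type step to be the main obstacle. The cleanest way to handle it is to split the integrand into $O(3)$-even and $O(3)$-odd parts. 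Replacing $O_1\mapsto -O_1$ (an $O(3)$ element) flips the sign of the odd part. So the $\det(T)$ coefficient comes only from the sign-odd sector, which is controlled by $\epsilon\otimes\epsilon$ and hence by $\det(C)$ alone.

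\textbf{Conclusion.} If $\mathrm{rank}(M)\le2$, then its $3\times3$ submatrix $C$ has rank $\le2$, so $\det(C)=0$. The coefficient of $\det(T)$ therefore vanishes, and $\mathcal{R}^{(t)}_{\mathcal{O}}$ is independent of $\det(T)$ for all $t\le3$, which is the claim.
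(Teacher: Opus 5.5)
Your proof is correct, and it takes a genuinely different route from the paper's.

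\textbf{How the two proofs differ.} The paper stays on the $SU(2)$ side. It expands $\Phi^{(3)}(\mathcal{O})$ in the Schur--Weyl operators $V_{\pi_A}\otimes V_{\pi_B}$ and uses the one-dimensional kernel of $M^{(3)}_2$ to gauge away $c_{(132)}$. It then reads off from Lemma~\ref{t3d2 system} that only $V_{(123),(123)}$ carries $\det(T)$. Finally, an eigenvalue computation using the orthonormality $\tr(A_jA_k)=\delta_{jk}$ of the Schmidt operators shows that $x^{(\vec{j})}_{(123)}$ vanishes whenever two entries of $\vec{j}$ coincide, which is unavoidable for $r\le 2$.

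You instead push the Haar measure down to $SO(3)\times SO(3)$ and use only that the $SO(3)$-invariants of $(\mathbb{R}^3)^{\otimes 3}$ are spanned by $\varepsilon$. This buys several things:
\begin{itemize}
\item It is shorter, and needs neither an orthonormal Schmidt decomposition nor the gauge fixing of Lemma~\ref{sym}.
\item It gives the exact coefficient of $\det(T)$ for every observable at once, namely $\det(C)$. Since $C=\tfrac12 M_AM_B^{\T}$, this is precisely the paper's Corollary~\ref{class}.
\item It makes transparent that an identity component is irrelevant, because only the correlation block enters.
\end{itemize}
The paper's bookkeeping, in exchange, does not use the qubit-specific isomorphism $SU(2)/\mathbb{Z}_2\cong SO(3)$ and is designed to be reused in larger local dimension.

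Your viewpoint also works as a cross-check beyond this statement. The same computation at $t=4$ gives the $(\vec\alpha,\vec\beta,T,T)$-component of $\mathcal{R}^{(4)}_{\mathcal{O}}$ as $\tfrac43\langle\vec a,\mathrm{cof}(C)\vec b\rangle\langle\vec\alpha,\mathrm{cof}(T)\vec\beta\rangle$. Here $\mathrm{cof}$ is the cofactor matrix, so this is a multiple of the Hodge invariant. Moreover $\langle\vec a,\mathrm{cof}(C)\vec b\rangle=c\det C-\det M$, which does not vanish for the rank-$3$ observable $(\mathbbm{1}+\sigma_1)\otimes(\mathbbm{1}+\sigma_1)+\sigma_2\otimes\sigma_2+\sigma_3\otimes\sigma_3$. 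This is worth confronting with the paper's rank-$3$ impossibility claim for $I_{14}$.

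\textbf{Tidying needed.} Two slips are harmless. The expectation value is $c+\langle\vec a,O_1\vec\alpha\rangle+\langle\vec b,O_2\vec\beta\rangle+\tr(C^{\T}O_1TO_2^{\T})$, without the factor $4$. Also, $I_{12}$ is cubic, not of degree $\le 2$.

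More substantively, the step you flag as the main obstacle is immediate, and the hedge you attach to it is wrong. Because the invariant subspace is one-dimensional, $\int_{SO(3)}O^{\otimes 3}\,\dd O=\tfrac16\,\varepsilon\varepsilon^{\T}$ exactly. Hence $\iint\tr(C^{\T}O_1TO_2^{\T})^3\,\dd O_1\,\dd O_2=\tfrac{1}{36}(6\det C)(6\det T)=\det(C)\det(T)$, with no further terms.
\begin{itemize}
\item There are no additional ``$O(3)$-symmetric'' pieces. An $O(3)\times O(3)$-invariant polynomial is even under $T\mapsto -T$, so it has no cubic part.
\item The parity substitution $O_1\mapsto -O_1$ is unnecessary. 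It is also unavailable, since $-O_1\notin SO(3)$.
\end{itemize}
Finally, rather than appealing to products of Makhlin monomials, note that the Haar average preserves the separate degrees in $(\vec\alpha,\vec\beta,T)$. So the $T$-cubic part of $\mathcal{R}^{(3)}_{\mathcal{O}}$ is exactly this integral, and no other term can produce $\det(T)$.
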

\noindent This is proven as Theorem \ref{r>2} in Appendix~\ref{app:detT}.

In Ref.~\cite{Wyderka}, it was shown via explicit calculation that measurement of $\mathcal{O}_{\det} = \sum\limits_{j=1}^3 \sigma_j \otimes \sigma_j$ measures the determinant, establishing together with Theorem~\ref{r>2 main} that $I_1$ is of type $3$. However, we can show that for an arbitrary observable $\mathcal{O}$, the dependence of its randomized measurement moments $\mathcal{R}^{(3)}_{\mathcal{O}}$ on the determinant can be determined easily without need for lengthy calculations:
\begin{corollary}
    Let $\mathcal{O} = \sum\limits_{j=1}^r s_j A_j \otimes B_j$ be an arbitrary observable in its operator Schmidt decomposition, i.e., $\sum_j s_j = 1$, $s_j \geq 0$ and $\trace(A_jA_k) = \trace(B_jB_k) = \delta_{jk}$. Furthermore, let ${\tilde{P} : \mathbb{C}^{2\times2}_{\mathrm{herm.}} \to \mathbb{R}^3}$ be the projection mapping given by $A\mapsto \left(\tr(A \frac{\sigma_j}{\sqrt{2}})\right)_j$.
        
    Then
    \begin{enumerate}
        \item $\mathcal{R}^{(3)}_{\mathcal{O}} (\varrho)$ is an affine linear combination of the invariants $\{|\vec{\alpha}|^2,|\vec{\beta}|^2,\tr(T^{\T} T),\langle\vec{\alpha},T\vec{\beta}\rangle,\det(T)\}$,

        \item Let 
    \begin{align}
        M_1 := \left(\sqrt{s_1}\tilde{P}(A_1),...,\sqrt{s_r} \tilde{P}(A_r)\right) \in \mathbb{R}^{3\times r},\\
        M_2 := \left(\sqrt{s_1}\tilde{P}(B_1),...,\sqrt{s_r} \tilde{P}(B_r)\right) \in \mathbb{R}^{3\times r}.
    \end{align}
    Then the prefactor of $\det(T)$ in the affine expression of $\mathcal{R}^{(3)}_{\mathcal{O}} (\varrho)$ is given by $\det(M_1 M_2^{\T} )/8$. It is non-zero if and only if $M_1M_2^{\T} $ is invertible.
    \end{enumerate}
\end{corollary}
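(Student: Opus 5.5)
We expand the third moment directly. Writing $\mathcal{O}=\sum_j s_j A_j\otimes B_j$ and $\varrho$ as in Eq.~\eqref{density matrix main}, we have $\langle \mathcal{O}\rangle_{U\varrho U^\dagger}=\sum_j s_j \tr\!\big(\varrho\,(U_1^\dagger A_jU_1)\otimes(U_2^\dagger B_jU_2)\big)$. Decompose $A_j=a_j^0\mathbbm{1}/\sqrt2+\tilde P(A_j)\cdot\vec\sigma/\sqrt2$ and analogously for $B_j$. Conjugation by $U_k$ acts on the Bloch parts by rotations $R_1,R_2\in SO(3)$, which are Haar distributed. Up to normalisation the expectation value is then
\begin{equation*}
c_0+\vec u\cdot R_1\vec\alpha+\vec v\cdot R_2\vec\beta+\tr\!\big(R_1^{\T}M_1M_2^{\T}R_2\,T^{\T}\big),
\end{equation*}
with $c_0,\vec u,\vec v$ fixed by $\mathcal{O}$. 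The bilinear part arises because $\sum_j s_j\tilde P(A_j)\tilde P(B_j)^{\T}=M_1M_2^{\T}$. The moment $\mathcal{R}^{(3)}_{\mathcal O}$ is the cube of this expression, integrated over independent Haar rotations $R_1,R_2$.

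For part~1, we expand the cube and integrate term by term, using the standard $SO(3)$ moment formulas from Appendix~\ref{Haar Calculation}. Degree-one integrals in the entries of $R$ vanish. Degree-two integrals give $\delta\delta/3$ contractions. Degree-three integrals $\int R_{ia}R_{jb}R_{kc}\,dR=\epsilon_{ijk}\epsilon_{abc}/6$ are nonzero precisely because we are in $SO(3)$ rather than $O(3)$.

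We then check which invariants survive, organised by how often each of $R_1,R_2$ appears. Terms of total degree 3 in one rotation with nothing in the other can only come from the $T$-term cubed, which carries three copies of each rotation. That term gives $\epsilon\epsilon$ contractions on both sides and hence $\det(T)\det(M_1M_2^{\T})$. Mixed terms with two factors of $R_1$ and two of $R_2$ give $|\vec\alpha|^2$, $|\vec\beta|^2$, $\tr(T^{\T}T)$ or $\langle\vec\alpha,T\vec\beta\rangle$, each with an $\mathcal O$-dependent coefficient. Any term with an odd number (one) of a given rotation vanishes. We will enumerate the degree patterns $(p_1,p_2)$ in $R_1,R_2$, keeping only those in $\{0,2,3\}^2$. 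In particular, $(3,3)$ arises only from $T^3$, since $\vec\alpha$ and $\vec\beta$ each carry only one rotation.

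For part~2, we compute the $T^3$ contribution explicitly. With $N=M_1M_2^{\T}$,
\begin{equation*}
\int\!\!\int \big(\tr(R_1^{\T}NR_2T^{\T})\big)^3
=\frac{1}{36}\,\epsilon_{ijk}\epsilon_{abc}\epsilon_{lmn}\epsilon_{def}\,N_{ia}N_{jb}N_{kc}\,T_{ld}T_{me}T_{nf}\cdots,
\end{equation*}
where the index contractions are to be fixed. Using $\epsilon_{ijk}\epsilon_{abc}N_{ia}N_{jb}N_{kc}=6\det N$ and the same identity for $T$, this yields $\det N\det T$ times a numerical constant. Combining that constant with the $1/4$ from the state normalisation and the $1/\sqrt2$ factors in $\tilde P$, we expect it to produce the stated $1/8$. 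The statement that the coefficient is nonzero if and only if $N$ is invertible is then immediate. We also need that no other term contains $\det T$, which follows from the degree bookkeeping in the previous paragraph.

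The main obstacle is getting the normalisation constant right. This means carefully tracking the $\sqrt2$ factors in $\tilde P$ (so that $\tilde P$ is an isometry onto the traceless part), the factor $1/4$ in $\varrho$, and the exact third-moment constant $1/6$ of $SO(3)$. We will fix all of these by cross-checking against $\mathcal O_{\det}=\sum_j\sigma_j\otimes\sigma_j$ from Ref.~\cite{Wyderka}, whose $\det(T)$ coefficient is known.
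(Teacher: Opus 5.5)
Your proof is correct, and it takes a genuinely different route from the paper.

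\textbf{The paper's route.} The paper stays inside the Schur--Weyl calculus on $(\mathbb{C}^2)^{\otimes 3}$:
\begin{itemize}
\item it writes $\Phi^{(3)}(\mathcal O)=\sum c_{\pi_A,\pi_B}V_{\pi_A}\otimes V_{\pi_B}$ and reads off from Lemma~\ref{t3d2 system} that only $V_{(123),(123)}$ carries $\det(T)$;
\item it shows that the one-party coefficient $x^{(\vec j)}_{(123)}$ vanishes whenever two entries of $\vec j$ coincide, and is proportional to $\det(\tilde P(A_{j_1}),\tilde P(A_{j_2}),\tilde P(A_{j_3}))$ otherwise;
\item it then sums over triples and applies Cauchy--Binet.
\end{itemize}

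\textbf{Your route.} You push the Haar measure forward to $SO(3)\times SO(3)$, so the observable enters only through the single matrix $N=M_1M_2^{\T}$. The $\det(T)$ coefficient then follows from the one identity $\int R_{ia}R_{jb}R_{kc}\,\dd R=\epsilon_{ijk}\epsilon_{abc}/6$. The constant comes out exactly as $(\tfrac12)^3\cdot\tfrac1{36}\cdot 6\cdot 6=\tfrac18$, because the $\tfrac14$ in $\varrho$ cancels against $\tr(\sigma_i\sigma_k)\tr(\sigma_j\sigma_l)=4\delta_{ik}\delta_{jl}$.

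The $SO(3)$ moment formulas you cite are not in Appendix~\ref{Haar Calculation}. They do follow in one line: the invariant subspaces of $SO(3)$ on $(\mathbb{R}^3)^{\otimes 2}$ and $(\mathbb{R}^3)^{\otimes 3}$ are spanned by $\delta$ and $\epsilon$, normalized by $\tr(R^{\T}R)=3$ and $\det R=1$.

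\textbf{What each route buys.} Your argument:
\begin{itemize}
\item is shorter, and avoids the kernel of $M^{(3)}_2$ and Cauchy--Binet;
\item does not use orthonormality of the Schmidt decomposition at all, since $N_{kl}=\tfrac12\tr(\mathcal O\,\sigma_k\otimes\sigma_l)$ is basis-free;
\item yields Theorem~\ref{r>2} as a corollary via $\mathrm{rank}\,N\le r$, whereas the paper's proof needs that theorem's computation as input.
\end{itemize}
The paper's route, in exchange, produces the full coefficient vectors $x^{(\vec j)}_\pi$. It also reuses the same machinery for the $t=4$ Hodge and determinant results and for the Kempe invariant.

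\textbf{One step to repair.} You claim that terms of degree $3$ in one rotation and $0$ in the other ``can only come from the $T$-term cubed''. That is wrong. Write $t$ for the bilinear term. Then:
\begin{itemize}
\item $(\vec u\cdot R_1\vec\alpha)^3$ has degree pattern $(3,0)$;
\item $(\vec u\cdot R_1\vec\alpha)\,t^2$ has pattern $(3,2)$;
\item $(\vec v\cdot R_2\vec\beta)\,t^2$ has pattern $(2,3)$.
\end{itemize}
The $SO(3)$ third moment is nonzero, so these are exactly the terms that could a priori produce cubic invariants outside your list. You must show they vanish, and both checks are one line:
\begin{itemize}
\item The $(3,0)$ term gives $\epsilon_{ijk}u_iu_ju_k=0$.
\item In the $(3,2)$ and $(2,3)$ terms, the quadratic average over the rotation that appears twice contracts the two copies of $t$ into symmetric matrices. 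These are $NN^{\T}$ and $TT^{\T}$, respectively $N^{\T}N$ and $T^{\T}T$, and the $\epsilon$ from the cubic average annihilates them.
\end{itemize}
With these checks added, part~1 is complete, as is the claim that $\det(T)$ arises only from $t^3$.
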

\noindent This is proven as Corollary \ref{class} in Appendix~\ref{app:detT}.

Under the additional restriction of a symmetric operator Schmidt decomposition ($\mathcal{O} = \sum\limits_{j=1}^r s_j A_j \otimes A_j$), all observables, whose measurement depends solely on $\det(T)$ can be classified as well:
\begin{proposition}
    Let $a,b \in \mathbb{C}, b\neq 0$. The set of observables $\mathcal{O}$ with a symmetric operator Schmidt decomposition that satisfy
    $\mathcal{R}^{(3)}_{\mathcal{O}} (\varrho)=  a + b \det(T)$ is only nonempty for $a = 0$ and $b\in \mathbb{R}_{>0}$. Its elements are precisely of the form $(U\otimes U)^\dagger \left(\sum\limits_{j=1}^3 s_j \sigma_j \otimes \sigma_j \right) (U\otimes U)$, where $U \in SU(2)$, $s_1s_2s_3 = 8b$ and $s_j>0$.
    Notably, there are no symmetric $r=4$ measurements, that only measure the determinant.
\end{proposition}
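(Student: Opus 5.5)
The plan is to reduce the statement to the explicit third-moment formula behind the preceding Corollary (Corollary~\ref{class}) and then read off which coefficients must vanish. Write a general observable with symmetric operator Schmidt decomposition as $\mathcal{O}=\sum_j s_j A_j\otimes A_j$ with $s_j>0$. Expand each factor as $A_j = a_j\mathbbm{1}+\vec v_j\cdot\vec\sigma$, with $a_j\in\mathbb{R}$ and $\vec v_j\in\mathbb{R}^3$. Then
\begin{equation*}
\mathcal{O} = c\,\mathbbm{1}\otimes\mathbbm{1} + \vec a\cdot\vec\sigma\otimes\mathbbm{1} + \mathbbm{1}\otimes\vec a\cdot\vec\sigma + \sum_{k,l} C_{kl}\,\sigma_k\otimes\sigma_l ,
\end{equation*}
where $c=\sum_j s_j a_j^2$, $\vec a=\sum_j s_j a_j\vec v_j$ and $C=\sum_j s_j \vec v_j\vec v_j^{\T}=MM^{\T}$. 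Here $M=M_1=M_2$ is the matrix of the Corollary, up to normalization. Passing to $SO(3)$ rotations $R_1,R_2$ via the double cover, the rotated expectation value becomes
\begin{equation*}
\langle\mathcal{O}\rangle_{U\varrho U^\dagger}=c+\vec a\cdot R_1\vec\alpha+\vec a\cdot R_2\vec\beta+\tr(C^{\T}R_1TR_2^{\T}).
\end{equation*}

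The first step is to cube this expression and average over $R_1,R_2$ independently. Every monomial that is of odd degree $1$ in either $R_1$ or $R_2$ alone averages to zero. The surviving terms are of the following kinds:
\begin{itemize}
\item the constant $c^3$ together with $c$ times the Haar averages of the squared terms; the latter produce $|\vec\alpha|^2$, $|\vec\beta|^2$ and $\tr(T^{\T}T)$, each with a prefactor proportional to $c$;
\item a mixed term $(\vec a\cdot R_1\vec\alpha)(\vec a\cdot R_2\vec\beta)\tr(C^{\T}R_1TR_2^{\T})$, whose average is proportional to $(\vec a^{\T}C\vec a)\,\langle\vec\alpha,T\vec\beta\rangle$;
\item the cubic term in $C$, which by the Corollary contributes $\det(MM^{\T})\det(T)/8$.
\end{itemize}
This recovers part~1 of the Corollary with explicit prefactors. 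I only need the structural facts above, not exact numerical constants.

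Next I impose $\mathcal{R}^{(3)}_{\mathcal{O}}=a+b\det(T)$ with $b\neq0$. The $\det(T)$ coefficient is $\det(MM^{\T})/8\ge 0$, since $MM^{\T}$ is positive semidefinite. Hence $b\neq 0$ forces $b>0$ and $C=MM^{\T}$ positive definite, so in particular the rank of $C$ is $3$.

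The vanishing of the $|\vec\alpha|^2$ coefficient, or of the $\tr(T^{\T}T)$ coefficient, whose Haar factor is nonzero because $C\neq 0$, forces $c=\sum_j s_ja_j^2=0$. Since $s_j>0$, this gives $a_j=0$ for all $j$, hence $\vec a=0$. With $c=0$ and $\vec a=0$, the constant term $c^3$ and the $\langle\vec\alpha,T\vec\beta\rangle$ term vanish automatically, so $a=0$ necessarily. In particular all $A_j$ are traceless and span a subspace of $\mathrm{span}\{\sigma_1,\sigma_2,\sigma_3\}$. Thus the operator Schmidt rank is at most $3$, which gives the claim that no symmetric $r=4$ observable measures only the determinant.

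Finally, I diagonalize $C=R\,\mathrm{diag}(s_1,s_2,s_3)R^{\T}$ with $s_j>0$. If $\det R=-1$, I replace $R$ by $-R$; this does not change $C$, so I may take $R\in SO(3)$. I then lift $R$ to $U\in SU(2)$. This gives $\mathcal{O}=(U\otimes U)^\dagger(\sum_j s_j\sigma_j\otimes\sigma_j)(U\otimes U)$ (or with $U$ and $U^\dagger$ exchanged, which is immaterial). The relation $s_1s_2s_3=8b$ follows by plugging this form into the Corollary's prefactor $\det(M_1M_2^{\T})/8$, keeping track of the normalization $\tr(A_jA_k)=\delta_{jk}$, i.e.\ $A_j=\sigma_j/\sqrt2$.

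For the converse, every observable of this form has $c=0$ and $\vec a=0$. By the first step, its third moment is then exactly $b\det(T)$.

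The main obstacle is the first step. I have to verify carefully that no other cross terms survive: for example, terms quadratic in $C$ and linear in $\vec a$ must vanish by parity in one of the rotations. I also have to confirm that the Haar prefactors multiplying $c$ in front of the quadratic invariants are genuinely nonzero, since otherwise $c=0$ would not be forced. I would check both points with the standard second- and third-moment formulas for $SO(3)$ from Appendix~\ref{Haar Calculation}.
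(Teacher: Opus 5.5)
Your argument is correct, and it takes a genuinely different route from the paper.

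\textbf{The paper's route.} The paper stays inside the Schur--Weyl formalism. It writes the coefficients $c_{\pi,\pi'}$ as $B\tilde v(\mathcal{O})$ (Lemma~\ref{d=2 sym theorem}) and imposes $D^{\T}\tilde c=(a,0,0,0,b)^{\T}$. It then parametrizes the solutions by the two-dimensional nullspace of $D^{\T}$ and inverts $B$. The difference of two components gives the constraint $\sum_k\tau_k^2 s_k\sum_{j\neq k}s_j^2=0$, which forces $\tr A_k=0$. The traceless case is then handed to Proposition~\ref{class 1}.

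\textbf{Your route.} You work in Bloch coordinates of $\mathcal{O}$ and use only the first three Haar moments of $SO(3)$, the third being $\int R_{i_1j_1}R_{i_2j_2}R_{i_3j_3}\,\dd R=\tfrac16\varepsilon_{i_1i_2i_3}\varepsilon_{j_1j_2j_3}$. Carried out, this gives the closed form
\begin{equation*}
\mathcal{R}^{(3)}_{\mathcal{O}}(\varrho)=c^3+c\,|\vec a|^2\big(|\vec\alpha|^2+|\vec\beta|^2\big)+\frac{c}{3}\tr(C^{\T}C)\tr(T^{\T}T)+\frac{2}{3}\,(\vec a^{\T}C\vec a)\,\langle\vec\alpha,T\vec\beta\rangle+\det(C)\det(T).
\end{equation*}
The paper's positivity mechanism now appears transparently as $c=\tr(\mathcal{O})/4=\frac14\sum_j s_j\tr(A_j)^2$. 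Also, $b=\det C>0$ follows directly from $C\succeq 0$, rather than emerging only through Proposition~\ref{class 1}.

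\textbf{What each approach buys.} Your route avoids the $7\times 7$ coefficient bookkeeping, whose matrices are only obtained by explicit calculation. It also extends verbatim to non-symmetric decompositions, which the paper calls considerably harder: write $\vec a'$ for the second-factor vector, so the cross term becomes $\vec a^{\T}C\vec a'$. The conditions are then $\tr\mathcal{O}=0$, $\vec a^{\T}C\vec a'=0$ and $\det C\neq 0$. Since $\tr\mathcal{O}=\sum_j s_j\tr A_j\tr B_j$ no longer forces tracelessness, observables of operator Schmidt rank $4$ measuring only $\det T$ do exist. An example is $\sigma_1\otimes\sigma_1+2\sigma_2\otimes\sigma_2+3\sigma_3\otimes\sigma_3+(\sigma_1+\sigma_2)\otimes\mathbbm{1}+\mathbbm{1}\otimes(2\sigma_1-\sigma_2)$, with $\mathcal{R}^{(3)}_{\mathcal{O}}=6\det(T)$. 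So the final clause of the statement genuinely hinges on symmetry. In return, the paper's machinery does not depend on the qubit-specific cover $SU(2)\to SO(3)$ and is reused unchanged for its $t=4$ results.

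\textbf{Two minor corrections.}
\begin{itemize}
\item Your ``main obstacle'' is moot. Before $c=0$ is established, you only need the components of degree $2$ and $3$ in $T$ alone. These come exclusively from $3c\,Z^2$ and $Z^3$, where $Z=\tr(C^{\T}R_1TR_2^{\T})$. Once $c=0$ forces $\vec a=0$, every term containing $\vec a$ vanishes identically. Incidentally, a term like $(\vec a\cdot R_1\vec\alpha)Z^2$ does not vanish by parity, since $SO(3)$ has a nonzero third moment. It vanishes because, after averaging over $R_2$, the $R_1$-average contracts $\varepsilon$ with the symmetric matrix $CC^{\T}$.
\item With your $s_j$ defined as the eigenvalues of $C$, your own first step gives $b=\det C=s_1s_2s_3$, not $s_1s_2s_3=8b$. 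The factor $8$ in the statement is only right if the $s_j$ denote Schmidt coefficients with respect to $\sigma_j/\sqrt2$. Compare Example~\ref{pauli examp}, where $\sum_j\sigma_j\otimes\sigma_j$ gives exactly $\det T$. So you should not claim that the factor $8$ ``follows''.
\end{itemize}
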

\noindent This is proven as Proposition \ref{strong char twirl} in Appendix~\ref{app:onlydetT}.

One might hope for the existence of measurements of type $\leq 2$ if we allow for higher moments exceeding three, despite the increased statistical errors. However, we can show that fourth moments yield no non-trivial dependence on the determinant:
\begin{theorem}
    There is no observable $\mathcal{O}$ with operator Schmidt rank $r\leq2$, such that $\mathcal{R}^{(4)}_{\mathcal{O}} (\varrho)$ depends on $\det(T)$.
\end{theorem}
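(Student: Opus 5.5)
We want to show that for $\mathcal{O}=s_1A_1\otimes B_1+s_2A_2\otimes B_2$ the fourth moment $\mathcal{R}^{(4)}_{\mathcal{O}}$ is independent of $\det(T)$. The plan is to follow the structure of the proof of Theorem~\ref{r>2 main}, but use a parity argument under a suitable local reflection. This avoids computing all fourth-moment Haar integrals explicitly.

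\textbf{Step 1: reduce to a polynomial in Bloch data.} Write $U_1A_jU_1^\dagger=a_j\mathbbm{1}+R_1\vec{a}_j\cdot\vec{\sigma}$ and $U_2B_jU_2^\dagger=b_j\mathbbm{1}+R_2\vec{b}_j\cdot\vec{\sigma}$, where $R_1,R_2\in SO(3)$ are Haar distributed. Then
\[
\langle \mathcal{O}\rangle_{U\varrho U^\dagger}=c+\langle\vec{\alpha},R_1\vec{u}\rangle+\langle\vec{\beta},R_2\vec{v}\rangle+\tr\bigl(T^{\T}R_1 M R_2^{\T}\bigr).
\]
Here $c$, $\vec u$ and $\vec v$ are fixed, and $M=\sum_j s_j\vec{a}_j\vec{b}_j^{\T}$ is a $3\times 3$ matrix of rank at most $2$. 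We may right-multiply $R_1$ and $R_2$ by fixed rotations, and Haar invariance leaves the integral unchanged. Using this freedom we bring $M$ to its singular value form $\mathrm{diag}(m_1,m_2,0)$, at the cost of rotating $\vec u$ and $\vec v$.

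\textbf{Step 2: express the moment as an average over $SO(3)\times SO(3)$.} The fourth power of the expression above is a polynomial of degree at most $4$ in the entries of $\vec\alpha$, $\vec\beta$ and $T$, integrated over $R_1,R_2$. The result is a polynomial local invariant of degree at most $4$. It can therefore be written in terms of the Makhlin invariants of degree at most $4$. The invariants in this range that are odd under partial transposition are $\det(T)$, $I_{14}$, and products of these with other invariants, such as $\det(T)\,|\vec\alpha|$-type terms. Products like $\det(T)\langle\vec\alpha,T\vec\beta\rangle$ have degree $5$ and are excluded; the precise list has to be checked against the explicit polynomials.

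\textbf{Step 3: the parity argument.} Partial transposition acts on the Bloch data as $T\mapsto TD$, $\vec\beta\mapsto D\vec\beta$ with $D=\mathrm{diag}(1,-1,1)$. Its effect on the integrand can be absorbed by substituting $R_2\mapsto DR_2D'$ for some matrix $D'$. This works only if there is an orthogonal $D'$ with $\det D'=-1$ that fixes $\vec v$ and satisfies $MD'=M$; then $D R_2 D'\in SO(3)$. Since $M$ has rank at most $2$, its kernel contains the direction $e_3$. So we need a reflection $D'$ that fixes the row space of $M$ (the $e_1,e_2$-plane) and also fixes $\vec v$.

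The natural candidate is the reflection $e_3\mapsto-e_3$. It fixes $\vec v$ only if $v_3=0$, so the argument goes through exactly when $\vec v$ lies in $\mathrm{span}(\vec b_1,\vec b_2)$. Because each $B_j$ is Hermitian, $\vec v=\sum_j s_j a_j\vec b_j$ does lie in that span, which is the row space of $M$ and is fixed by $D'$. If $\vec b_1$ and $\vec b_2$ are linearly dependent, the span is one-dimensional and there is even more freedom. With a valid $D'$ in hand, $\mathcal{R}^{(4)}_{\mathcal{O}}(\varrho^{\T_2})=\mathcal{R}^{(4)}_{\mathcal{O}}(\varrho)$ for every $t$, in particular $t=4$.

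\textbf{Step 4: conclude.} Every Makhlin invariant appearing at degree at most $4$ is either PT-even or carries a PT-odd factor ($\det T$ or $I_{14}$). The moment is PT-even, so its coefficient on every PT-odd component vanishes, and hence it does not depend on $\det(T)$.

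\textbf{Main obstacle.} This last step is the delicate one. A PT-even polynomial could still involve $\det(T)$ through PT-even products such as $\det(T)\,I_{14}$; here I would check whether its degree of $5$ exceeds $4$. It could also involve $\det(T)$ through polynomial relations among the Makhlin invariants. To handle the latter, I would show directly that the only degree-$\le 4$ polynomial local invariants that are odd under $T\mapsto TD$, $\vec\beta\mapsto D\vec\beta$ are spanned by $\det T$ and $I_{14}$, times scalars and $|\vec\alpha|^2$-independent constants. I would do this by an $O(3)\times O(3)$ versus $SO(3)\times SO(3)$ invariant-theory count via Weyl's first fundamental theorem: odd invariants must contain an $\epsilon$-tensor. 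Degree $4$ then allows only $\det T\cdot(\text{linear invariant})$, and no nonconstant linear invariant exists, or $\epsilon\epsilon$ contractions, which are even. So $\mathcal{R}^{(4)}$ is independent of $\det T$.
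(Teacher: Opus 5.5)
Your proof is correct in substance and takes a genuinely different route from the paper.

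\textbf{How the two proofs differ.} The paper works inside the Schur--Weyl expansion. It inverts the $14\times 14$ matrix $M_2^{(4)}$ on $\tilde S_4$ and reuses the Hodge computation to discard the $3$-cycles. It then shows that only the pairs $V_{\pi,\pi}$ and $V_{\pi,\pi^{-1}}$, with $\pi$ a $4$-cycle, can carry $\det(T)$. This reduces the claim to $x^{(\vec j)}_\pi=x^{(\vec j)}_{\pi^{-1}}$ for $\vec j\in\{1,2\}^4$, which it checks case by case with the identities in Eq.~\eqref{cond}. You instead extend the paper's own partial-transposition argument for product observables from Sec.~\ref{det_ch}. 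Since $\vec v$ and the row space of $M$ both lie in $\mathrm{span}(\vec b_1,\vec b_2)$, which is at most two-dimensional, one reflection $D'$ fixes all the second-party data. Hence $R_2\mapsto DR_2D'$ is a Haar-preserving bijection of $SO(3)$; equivalently, a single $U'\in SU(2)$ gives $B_j^{\T}=U'B_jU'^\dagger$ for both $j$.

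\textbf{What each approach buys.} Your route is much shorter, needs no matrix inversion, and gives $\mathcal{R}^{(t)}_{\mathcal{O}}(\varrho^{\T_2})=\mathcal{R}^{(t)}_{\mathcal{O}}(\varrho)$ for every $t$. Apply it to separable states: their partial transposes are again states, with opposite values of $\det(T)$ and $I_{14}$. This rules out rank-$\le 2$ protocols for both invariants at all moment orders, a question the paper leaves open in its Discussion. The paper's computation gives explicit coefficients. It also still works when no such reflection exists, namely rank-$3$ observables whose $\vec b_j$ span $\mathbb{R}^3$, which is exactly the Hodge case.

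\textbf{Step 3 fix.} $\mathrm{span}(\vec b_1,\vec b_2)$ contains the row space of $M$ but need not equal it; for example, take $\vec a_2=0$. Simply let $D'$ be a reflection fixing $\mathrm{span}(\vec b_1,\vec b_2)$ pointwise; then the SVD normalization is unnecessary.

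\textbf{Step 4 errors.} The ``obstacle'' you raise is not real, but your proposed resolution contains mistakes.
\begin{enumerate}
\item $\det(T)\,I_{14}$ has degree $7$, not $5$.
\item $\det(T)\langle\vec\alpha,T\vec\beta\rangle$ has degree $6$, not $5$.
\item Contractions with one $\varepsilon$ on each party are odd under partial transposition; $\det T$ and $I_{14}$ are themselves of this form. So ``$\varepsilon\varepsilon$ contractions are even'' is false.
\end{enumerate}

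\textbf{A clean finish.} Compose partial transposition with the proper rotation $\mathrm{diag}(-1,1,-1)$ on the second party. The moment is then invariant under $(\vec\alpha,\vec\beta,T)\mapsto(\vec\alpha,-\vec\beta,-T)$. This kills every monomial with $\deg_{\vec\beta}+\deg_T$ odd. In particular it kills the part cubic in $T$ alone, which is exactly where the paper locates the $\det(T)$ dependence (one $\mathbbm{1}\otimes\mathbbm{1}$ factor and three $T$ factors).

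\textbf{Equivalent finish.} In any expansion up to degree $4$ in the Makhlin invariants, the only monomial containing $I_1$ is $I_1$ itself. The odd part $\lambda I_1+\mu I_{14}$ must vanish, which forces $\lambda=\mu=0$ because the two terms have different degrees.
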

\noindent This is proven as Theorem \ref{det t=4} in Appendix~\ref{app:t4det}.

\begin{table}[t]
    \begin{tabular}{l|l|l|c}
      \textbf{Invar.} & $t$ & \textbf{Observable} & \textbf{Type}\\
      \hline
      $I_1$ & $3$ & $\sigma_x\otimes\sigma_x+\sigma_y\otimes \sigma_y+\sigma_z\otimes \sigma_z$ & $\cellcolor{orange!50}3$\\
      $I_2$ & $2$ & $3\sigma_z \otimes \sigma_z$ & \cellcolor{green!50}$1$\\
      $I_3$ & $4$ & $\sigma_z \otimes \sigma_z$ and knowledge of $I_2$ & \cellcolor{green!50}$1$\\
      $I_4$ & $2$ & $\sqrt{3} \sigma_z\otimes \mathbbm{1}$ & \cellcolor{green!50}$1$\\
      $I_5$ & $4$ & $\sigma_z \otimes (\mathbbm{1}+\sigma_z)$ and knowledge of $I_2,I_3,I_4$ & \cellcolor{green!50}$1$\\
      $I_6$ & $6$ & $\sigma_z \otimes (\mathbbm{1}+\sigma_z)$ and knowledge of $I_2-I_5$ & \cellcolor{green!50}$1$\\
      $I_7$ & $2$ & $\sqrt{3} \mathbbm{1} \otimes \sigma_z$ & \cellcolor{green!50}$1$\\
      $I_8$ & $4$ & $(\mathbbm{1}+\sigma_z)\otimes \sigma_z$ and knowledge of $I_2,I_3,I_7$& \cellcolor{green!50}$1$\\
      $I_9$ & $6$ & $(\mathbbm{1}+\sigma_z)\otimes \sigma_z$ and knowledge of $I_2,I_3,I_7,I_8$ & \cellcolor{green!50}$1$\\
      $I_{12}$ & $3$ & $(\mathbbm{1}+\sigma_z)^{\otimes 2}$ and knowledge of $I_2,I_4,I_7$ & \cellcolor{green!50}$1$\\
      $I_{13}$ & $5$ & $(\mathbbm{1} \otimes \sigma_z)^{\otimes 2}$ and knowledge of $I_2-I_5,I_7,I_8,I_{12}$ & \cellcolor{green!50}$1$\\
      $I_{14}$ & $4$ & $\mathbbm{1}\otimes \sigma_x + \sigma_x \otimes \mathbbm{1} + \sigma_y\otimes \sigma_z \pm \sigma_z \otimes \sigma_y$ & \cellcolor{red!50}$4$\\
    \end{tabular}

    \caption{List of continuous invariants by type with optimal observables. If a single observable is listed, then $\mathcal{R}^{(t)}_{\mathcal{O}} = I_j$. If an observable is listed alongside a list of other Invariants, then $\mathcal{R}^{(t)}_{\mathcal{O}} = aI_j+b$, where $a \in \mathbb{R} \backslash \{0\}$ and $b$ only depends on the listed known invariants. $I_{14}$ is given (up to a prefactor) as the difference between the measurements of the two listed observables. These observables were first found by Wyderka et al. in \cite{Wyderka}. This is also where the prefactors can be found, or at least (in the case of $I_5,I_6,I_8$ and $I_9$) can be derived from.}\label{obs table}
\end{table}

\section{Hodge invariant}\label{hodge ch}
The Hodge invariant\footnote{The name was chosen in Ref.~\cite{Wyderka} due to the fact that it can be expressed using the Hodge star operator} $I_{14} = 2\langle \vec{\alpha}, \mathrm{adj}(T)\vec{\beta} \rangle$ is the only other continuous Makhlin invariant that cannot be of type 1. In Ref.~\cite{Wyderka}, it was shown that it is possible to reconstruct it from a type $r=4$ measurement, which is the worst possible option for an invariant of a bipartite qubit system from the standpoint of measurement difficulty.
There, the observable
\begin{equation}
    \mathcal{O}_{\text{Hodge}} = \mathbbm{1} \otimes \sigma_1 + \sigma_1 \otimes \mathbbm{1} + \sigma_2 \otimes \sigma_3 + \sigma_3 \otimes \sigma_2
\end{equation}
was used. 

Using the same general proof strategy as for Theorem~\ref{r>2 main}, we can show that $I_{14}$ indeed requires type $4$ measurements and no type $\leq 3$ measurement can be found:
\begin{theorem}
    $\mathcal{R}^{(4)}_{\mathcal{O}}(\varrho)$ does not depend on $I_{14}$ if the operator Schmidt rank $r$ of $\mathcal{O}$ satisfies $r\leq 3$. Thus, the Hodge invariant is of type 4 and is therefore maximally difficult.
\end{theorem}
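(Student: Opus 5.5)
We compute the $I_{14}$-dependence of $\mathcal{R}^{(4)}_{\mathcal{O}}$ explicitly via Haar integration and show that it is controlled by a polynomial in the local data of $\mathcal{O}$ that vanishes whenever the operator Schmidt rank is at most $3$.

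\textbf{Setup.} Write $\mathcal{O}=\sum_{j=1}^r s_j A_j\otimes B_j$ in operator Schmidt form. Decompose each factor as
\[
A_j=a_j^0\mathbbm{1}+\vec a_j\cdot\vec\sigma,\qquad B_j=b_j^0\mathbbm{1}+\vec b_j\cdot\vec\sigma.
\]
Collect the coefficients into $4\times r$ real matrices $\hat M_1=(\sqrt{s_j}(a_j^0,\vec a_j))_j$ and $\hat M_2=(\sqrt{s_j}(b_j^0,\vec b_j))_j$; these extend the $M_1,M_2$ of the Corollary. Under local rotations $O_1,O_2\in SO(3)$,
\[
\langle U^\dagger\mathcal{O}U\rangle_\varrho=\hat a^{\T} \hat M_1\hat M_2^{\T}\hat b\big|_{\text{rotated}},
\]
which is a bilinear form $\tilde\varrho(\hat O_1 \hat K \hat O_2^{\T})$. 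Here $\hat K:=\hat M_1\hat M_2^{\T}\in\mathbb{R}^{4\times4}$, $\hat O_i=1\oplus O_i$, and $\tilde\varrho$ is the $4\times4$ matrix with blocks $1,\vec\beta^{\T},\vec\alpha,T$. Hence
\[
\mathcal{R}^{(4)}_{\mathcal{O}}=\int\tr\!\big(\tilde\varrho^{\T}\hat O_1\hat K\hat O_2^{\T}\big)^4\,\dd O_1\,\dd O_2
\]
depends on $\mathcal{O}$ only through $\hat K$, and $\operatorname{rank}\hat K\le r$.

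\textbf{Extracting the $I_{14}$ coefficient.} Expanding the fourth power and integrating each $O_i$ with the $SO(3)$ Weingarten formulas up to degree four gives a polynomial in the entries of $\tilde\varrho$ and $\hat K$. The polynomial is bi-invariant under $\hat K\mapsto \hat O_1\hat K\hat O_2^{\T}$. The $\tilde\varrho$-monomials of degree four that realize $I_{14}$ are those of the form $\alpha\, T\, T\,\beta$ contracted with two Levi-Civita tensors, one per $SO(3)$. Any $SO(3)\times SO(3)$ invariant of degree four in $\tilde\varrho$ that is odd under $\varrho\mapsto\varrho^{\T_2}$ must contain exactly one $\epsilon$ on each side. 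These $\epsilon$'s can only come from the degree-three $SO(3)$ integrals $\int O_{i_1j_1}O_{i_2j_2}O_{i_3j_3}\,\dd O=\tfrac16\epsilon_{i_1i_2i_3}\epsilon_{j_1j_2j_3}$. Consequently the coefficient of $I_{14}$ is, up to a constant, a sum of terms
\[
\hat K_{00}\cdot(\text{term with }\epsilon\epsilon\text{ on }K)
\quad\text{or}\quad
k^{(1)}\cdot\epsilon\epsilon\cdot K\,k^{(2)},
\]
where $\hat K=\begin{pmatrix}\kappa&(k^{(2)})^{\T}\\ k^{(1)}&K\end{pmatrix}$. Collecting these, I expect the coefficient to assemble into a multiple of
\[
\kappa\det K-(k^{(1)})^{\T}\operatorname{adj}(K)\,k^{(2)}=\det\hat K
\]
by the Schur-complement/cofactor expansion along the first row and column. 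This matches the fact that $\mathcal{O}_{\text{Hodge}}$ has full rank $4$, and mirrors the Corollary's result that the $\det T$ coefficient of $\mathcal{R}^{(3)}$ is $\det(M_1M_2^{\T})/8$. It is also consistent with the partial-transpose argument, since $\varrho\mapsto\varrho^{\T_2}$ flips the sign of $\det\hat K$.

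\textbf{Conclusion.} If $r\le3$, then $\operatorname{rank}\hat K\le3$, so $\det\hat K=0$ and $\mathcal{R}^{(4)}_{\mathcal{O}}$ is independent of $I_{14}$. Moments $t<4$ cannot contain the degree-four $I_{14}$ at all. A measurement with $t_j\le 4=\deg I_{14}$ and $r_j\le3$ therefore yields data independent of $I_{14}$. Some states differ only in the sign of $I_{14}$ (e.g.\ $\varrho$ versus $\varrho^{\T_2}$ when the latter is a state), so $I_{14}$ cannot be uniquely determined. Together with $\mathcal{O}_{\text{Hodge}}$ this establishes type $4$.

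\textbf{Main obstacle.} The main obstacle is the explicit degree-four integration. One must verify that the mixed $\kappa\det K$ and $k^{(1)}\operatorname{adj}(K)k^{(2)}$ contributions come with exactly the relative coefficient $-1$, so that they combine into $\det\hat K$, and that no other odd-parity terms survive. I would first try to sidestep the full computation with an invariance argument. The coefficient is a degree-four polynomial in $\hat K$, bi-invariant under $SO(3)\times SO(3)$, odd under $K\mapsto K\,\mathrm{diag}(1,-1,1)$, and linear in each column index block. Showing that these constraints force a multiple of $\det\hat K$ would suffice, with one explicit evaluation at $\mathcal{O}_{\text{Hodge}}$ fixing the nonzero constant. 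If that proves too loose, I would fall back on direct computation using the Weingarten tables from Appendix~\ref{Haar Calculation}.
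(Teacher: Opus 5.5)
There is a genuine gap, and it sits in the step you flagged as the main obstacle. Your $SO(3)$ reduction itself is correct and is the right tool. $\hat K$ is just the Pauli-coefficient matrix of $\mathcal{O}$, so $\mathcal{R}^{(4)}_{\mathcal{O}}=\mathbb{E}[L^4]$ with $L=\kappa+\vec\alpha^{\T}O_1k^{(1)}+\vec\beta^{\T}O_2k^{(2)}+\tr(T^{\T}O_1KO_2^{\T})$.

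What fails is the claimed form of the $I_{14}$ coefficient. In $L$, $\kappa$ only ever appears multiplied by $X_{00}=1$. So every monomial containing $\kappa$ has degree at most $3$ in $(\vec\alpha,\vec\beta,T)$. It feeds the $\det T$ term $4\kappa\det K\det T$ and can never produce the degree-four $I_{14}$. The only source of $I_{14}$ is the $\vec\alpha\vec\beta TT$ part. Using $\mathbb{E}[O_{i_1j_1}O_{i_2j_2}O_{i_3j_3}]=\frac16\varepsilon_{i_1i_2i_3}\varepsilon_{j_1j_2j_3}$, it evaluates exactly to
\[
12\,\mathbb{E}\big[(\vec\alpha^{\T}O_1k^{(1)})(\vec\beta^{\T}O_2k^{(2)})\tr(T^{\T}O_1KO_2^{\T})^2\big]=\frac{4}{3}\big(\vec\beta^{\T}\mathrm{adj}(T)\,\vec\alpha\big)\big((k^{(2)})^{\T}\mathrm{adj}(K)\,k^{(1)}\big).
\]
So the coefficient is the Hodge-type invariant $h:=(k^{(2)})^{\T}\mathrm{adj}(K)k^{(1)}$ of $\hat K$ alone. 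It is not $\det\hat K=\kappa\det K-h$. (The $SO(3)\times SO(3)$-invariant ordering is $(k^{(2)})^{\T}\mathrm{adj}(K)k^{(1)}$, not your $(k^{(1)})^{\T}\mathrm{adj}(K)k^{(2)}$.) Your invariance shortcut could not have fixed a relative coefficient anyway. $\kappa\det K$ and $h$ each satisfy every constraint you list on their own.

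Since $h=-\det\hat K|_{\kappa=0}$, it vanishes for $r\le2$, but not for $r=3$. Take $\mathcal{O}=(\mathbbm{1}+\sigma_1)^{\otimes 2}+\sigma_2\otimes\sigma_2+\sigma_3\otimes\sigma_3$, which has operator Schmidt rank $3$, with $\kappa=1$, $k^{(1)}=k^{(2)}=e_1$ and $K=\mathbbm{1}_3$. Then $h=1$. As an independent check at $(\vec\alpha,\vec\beta,T)=(e_1,e_1,\mathbbm{1})$, substituting $O_1=O_2Q$ gives $\mathbb{E}[(O_1)_{11}(O_2)_{11}\tr(O_1O_2^{\T})^2]=\frac13\mathbb{E}[Q_{11}\tr(Q)^2]=\frac19\mathbb{E}[\tr(Q)^3]=\frac19\neq0$.

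So the relative coefficient $-1$ you hoped to verify is false. Your own framework then shows the statement as written cannot be proved. For every rank-3 $\hat K$ one has $\kappa\det K=h$, so the part of $\mathcal{R}^{(4)}_{\mathcal{O}}$ that is odd under $\varrho\mapsto\varrho^{\T_2}$ equals $h\,(4\det T+\frac43\vec\beta^{\T}\mathrm{adj}(T)\vec\alpha)$. Combined with $\mathcal{O}^{\T_2}$ (also rank 3) and the rank-3, $t=3$ measurement of $\det T$, this already determines $I_{14}$.

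The paper argues via Schur--Weyl coefficients instead. It claims that permutation pairs $(\pi_A,\pi_B)$ without fixed points contribute no $I_{14}$. It then shows the remaining coefficient $x^{(\vec j)}_{(123)}$ vanishes for $r\le3$. The same problem sits in the first claim. For $\pi_A=\pi_B=(1234)$ with $\vec\alpha$ in slot $1$, the three placements of $\vec\beta$ contribute in the ratio $-1:+1:-1$, so they do not cancel in pairs.
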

\noindent This is proven as Theorem \ref{hodge imp} in Appendix~\ref{app:t4hodge}.

This result completes the hierarchy of invariants listed in Tab.~\ref{obs table}. Let us conclude by linking the results to the problem of entanglement quantification. It is known that knowledge of all continuous Makhlin invariants allows to determine the negativity $N(\varrho) = (\tr(|\varrho^{\T_2}|)-1)/2$ of a two-qubit quantum state $\varrho$, which is a faithful entanglement measure \cite{vidal2002computable, Bartkiewicz}. Notably, it explicitly involves both the determinant and the Hodge invariant. Together with the fact that all entanglement measures of two-qubit systems are equivalent in the sense that the induced orderings of quantum states are compatible, this implies: 
\begin{theorem}
    Evaluating entanglement measures in a two-qubit system via randomized measurements is maximally difficult, i.e., requires type $4$ measurements.
\end{theorem}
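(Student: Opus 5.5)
The plan has two parts: an upper bound showing type $4$ suffices, and a lower bound showing type $\leq 3$ is impossible.

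\emph{Upper bound.} By Tab.~\ref{obs table}, every continuous Makhlin invariant can be measured with type $\leq 4$ measurements at moment $t\leq\deg(I)$. The negativity is a function of the spectrum of $\varrho^{\T_2}$. That spectrum is determined by the coefficients of the characteristic polynomial, i.e.\ by $\tr((\varrho^{\T_2})^k)$ for $k=2,3,4$. Each of these is a polynomial in the Bloch data, invariant under local unitaries, and hence a polynomial in the continuous invariants. Since functions of type $\leq 4$ invariants are again type $\leq 4$, the negativity, and any entanglement measure expressible through the invariants, is of type $\leq 4$.

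\emph{Lower bound.} The strategy is to exhibit two states that every type $\leq 3$ measurement cannot tell apart, but whose entanglement differs. The natural candidates are $\varrho$ and a state with $\det(T)$ and $I_{14}$ flipped in sign and all other invariants unchanged, for instance $\varrho' $ locally equivalent to $\varrho^{\T_2}$ whenever the latter is a valid state. By the preceding theorem, no $\mathcal{R}^{(t)}_{\mathcal{O}}$ with $r\leq 3$ and $t\leq 4$ depends on $I_{14}$.

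The main obstacle is twofold. First, type $3$ measurements \emph{can} see $\det(T)$, so flipping $\det(T)$ would be detected. Second, Definition~\ref{def meas} allows arbitrary moments, while the quoted theorem covers only $t\leq 4$.

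To handle the first issue, I would choose a one-parameter family of states in which $\det(T)$ and all type~1 invariants are held fixed and only $I_{14}$ varies. A concrete choice is $T=\mathrm{diag}(t_1,t_2,t_3)$ with fixed entries and $\vec\alpha,\vec\beta$ of fixed norms, with $I_5,I_6,I_8,I_9,I_{12},I_{13}$ fixed. Within this family, $I_{14}=2\sum_i \alpha_i\beta_i\, t_jt_k$ can still vary, because the diagonal products $t_jt_k$ weight the components differently from $t_i$ and $t_i^3$. I would then check numerically or explicitly that the negativity, through $\tr((\varrho^{\T_2})^4)$, in which $I_{14}$ enters linearly, actually varies across this family while the state stays positive.

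For the second issue, I would interpret ``type'' as in the definition, with moments $t_j\leq\deg(N)$. Here $\deg(N)$ is taken to mean the maximal degree of the invariants it depends on, which is $4$ for $I_{14}$. The preceding theorem then applies directly.

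Combining the two parts: no measurement series of type $\leq 3$ can separate two members of the family with different negativity, so the negativity cannot be determined. By the equivalence of orderings of two-qubit entanglement measures, the same holds for every faithful measure, because distinct negativities imply distinct values of the measure. This forces type $4$.
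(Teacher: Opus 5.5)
Your upper bound is fine. Your lower-bound strategy is also the right one: exhibit valid states that no type $\leq 3$ data with $t_j\leq 4$ can tell apart but whose negativity differs. It makes rigorous what the paper only asserts, namely that $N$ ``explicitly involves'' $I_1$ and $I_{14}$. The concrete family you propose, however, does not exist. With $T=\mathrm{diag}(t_1,t_2,t_3)$ fixed and pairwise distinct $|t_i|$, the conditions $I_4=\sum_i\alpha_i^2$, $I_5=\sum_i t_i^2\alpha_i^2$, $I_6=\sum_i t_i^4\alpha_i^2$ form an invertible Vandermonde system, so every $\alpha_i^2$ is pinned. Likewise $I_7,I_8,I_9$ pin every $\beta_i^2$. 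Hence $|\alpha_i\beta_i|$ is fixed and only finitely many sign patterns remain, so $I_{14}=2\sum_i\alpha_i\beta_i t_jt_k$ takes finitely many values. In fact, keeping $I_{12}=\sum_i t_i\alpha_i\beta_i$ and $I_{13}=\sum_i t_i^3\alpha_i\beta_i$ fixed permits only the global flip $\alpha_i\beta_i\to-\alpha_i\beta_i$. That flip requires $I_{12}=I_{13}=0$ and merely sends $I_{14}\to -I_{14}$. (Coinciding $|t_i|$ add continuous freedom only in the form of local rotations, which leave $I_{14}$ unchanged.) The linear independence of $(t_i)$, $(t_i^3)$, $(t_jt_k)$ does not help, because the products $\alpha_i\beta_i$ are not free variables. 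This is no accident: generically $I_{14}$ is determined by the remaining continuous invariants. So no family fixing all type 1 invariants and $\det(T)$ can move $I_{14}$ continuously, and your planned numerical check would find nothing to vary.

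The repair is to fix less. You restrict to $t_j\leq 4$, which is the only regime Theorem~\ref{hodge imp} covers anyway. Then each $\mathcal{R}^{(t_j)}_{\mathcal{O}_j}$ is a local-unitary-invariant polynomial of degree $\leq 4$ in $(\vec\alpha,\vec\beta,T)$, hence a polynomial in $I_1,I_2,I_3,I_4,I_5,I_7,I_8,I_{12},I_{14}$, and for $r\leq 3$ its $I_{14}$ coefficient vanishes. So drop $I_6,I_9,I_{13}$, which have degrees $6,6,5$. Fixing only $I_4,I_5$ leaves a one-parameter segment of $(\alpha_i^2)$, and fixing $I_7,I_8$ leaves one of $(\beta_i^2)$. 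The constraint from $I_{12}$ cuts this to a curve along which $I_{14}$ generically moves. Along that curve $\tr((\varrho^{\T_2})^k)$ is constant for $k\leq 3$, while $\det(\varrho^{\T_2})$ is affine in $I_{14}$ with nonzero slope. Hence for entangled members the simple negative eigenvalue, and with it $N$, changes; you still have to check positivity. Finally, your last step uses the claim that all two-qubit entanglement measures induce compatible orderings. The paper asserts this too, but it is false for mixed states: Eisert and Plenio showed that negativity and entanglement of formation order some mixed two-qubit states differently. You can avoid this step by choosing the curve so that it crosses $\det(\varrho^{\T_2})=0$. Then one member is separable and another entangled, and every faithful measure distinguishes them.
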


\section{Kempe invariant}\label{kempe_ch}
Let us generalize some of our results to the three-partite scenario. Analogous to the two-qubit case, we can decompose a three-qubit state in terms of Pauli matrices. A generic state $\varrho$ can then be written as
\begin{equation}\label{bloch 3}
\begin{split}
    \varrho = \frac{1}{8} &\big[\mathbbm{1}^{\otimes 3}+\vec{\alpha}\!\cdot\!\vec{\sigma}\!\otimes\! \mathbbm{1}\!\otimes\! \mathbbm{1} + \mathbbm{1}\!\otimes\!\vec{\beta} \!\cdot\!  \vec{\sigma}\!\otimes \!\mathbbm{1}+ \mathbbm{1}\!\otimes\!\mathbbm{1}\!\otimes\!\vec{\gamma}\!\cdot\!\vec{\sigma} \\
    &+\sum_{j,k=1}^3 \big(T_{jk}^{AB} \sigma_j \!\otimes\! \sigma_k \!\otimes\! \mathbbm{1} + T_{jk}^{BC} \mathbbm{1} \!\otimes\! \sigma_j\!\otimes\! \sigma_k \\
    &+ T^{CA}_{jk} \sigma_k \!\otimes\! \mathbbm{1}\!\otimes\! \sigma_j \big) 
    + \!\!\!\sum_{j,k,\ell =1}^3 W_{jk\ell} \sigma_j \!\otimes\! \sigma_k \!\otimes\! \sigma_\ell\big].
\end{split}
\end{equation}

In contrast to the two-qubit case, there is no complete set of local unitary invariants of three-qubit states known. However, some interesting invariants are known, including the \textit{Kempe invariant}, first introduced in Ref.~\cite{kempe1999multiparticle} and later generalized to mixed states in Ref.~\cite{Barnum}. In terms of above decomposition, it can be written as \cite{Wyderka}
\begin{equation*}
\begin{split}
    I_{\mathrm{Kempe}} =& \frac{1}{8} \big[1+|\vec{\alpha}|^2 + |\vec{\beta}|^2 + |\vec{\gamma}|^2 + \langle \vec{\alpha}, T^{AB} \vec{\beta} \rangle \\
    &+ \langle \vec{\beta}, T^{BC} \vec{\gamma} \rangle + \langle \vec{\gamma}, T^{CA} \vec{\alpha} \rangle + \tr(T^{AB}T^{BC}T^{CA})\big],
\end{split}
\end{equation*}
Notice that each of the first seven terms is a type $1$ invariants of a $2$-qubit subsystem and can thus be easily obtained using observables of the form $\mathcal{O}_1 \otimes \mathcal {O}_2\otimes \mathbbm{1}$, $\mathcal{O}_1 \otimes  \mathbbm{1} \otimes \mathcal {O}_3$ and ${\mathbbm{1}\otimes\mathcal{O}_2 \otimes \mathcal {O}_3}$. Only the last term, $\tr(T^{AB}T^{BC}T^{CA})$, requires further classification.

Notably, partial transposition of $\varrho$ w.r.t.~any of the subsystems induces minus signs in two of the three correlation matrices, leaving the invariant unchanged. 
Comparing to the bipartite case, one might think that the only obstruction of measuring invariants using product observables is the non-invariance under partial transposition, implying that there should be a type $1$ measurement procedure for $\tr(T^{AB}T^{BC}T^{CA})$. However, this is just partially true, as it can only be measured in a linear combination with other invariants:
\begin{proposition}
~
\begin{enumerate}
    \item There is no product observable $\mathcal{O} = A \otimes B \otimes C$ such that the invariant $\tr(T^{AB}T^{BC}T^{CA})$ is uniquely recoverable from randomized measurement moments of order up to 3.

    \item There exists a product observable such that
    \begin{equation}\label{r=1 Kempe Eq}
        \begin{split}
            3 ||W||_2^2 &+ 6 \sum_{jk\ell} W_{jk\ell} \left[ T^{AB}_{jk} \gamma_\ell + \alpha_j T^{BC}_{k\ell} + T^{CA}_{\ell j} \beta_k\right] \\
        &+ 6 \tr(T^{AB}T^{BC}T^{CA}).
        \end{split}
    \end{equation}
    can be obtained from a $t=3$ randomized measurement moment.
\end{enumerate}
\end{proposition}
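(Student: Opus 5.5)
The approach is to compute $\mathcal{R}^{(t)}_{\mathcal{O}}(\varrho)$ for an arbitrary product observable $\mathcal{O}=A\otimes B\otimes C$ with $t\le 3$, using the Haar calculus of Appendix~\ref{Haar Calculation}.

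First I reduce the observable. Write $A=a_0\mathbbm{1}+\vec a\cdot\vec\sigma$, and similarly $B$ with $(b_0,\vec b)$ and $C$ with $(c_0,\vec c)$. A random $U_k\in SU(2)$ acts on each local Bloch vector as a Haar-random rotation $R_k\in SO(3)$. This gives
\[
\langle \mathcal{O}\rangle_{U\varrho U^\dagger}=\sum_{\mu\nu\lambda} a_\mu b_\nu c_\lambda\, X_{\mu\nu\lambda}(R),
\]
where $X$ is the $4\times4\times4$ rotated correlation tensor containing $1,\vec\alpha,\vec\beta,\vec\gamma,T^{AB},T^{BC},T^{CA},W$.

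The $t$-th moment requires integrals of the form $\int R_{i_1j_1}\cdots R_{i_tj_t}\,\dd R$ for each party independently. For $t\le 3$ these are known:
\begin{itemize}
\item first-order terms vanish;
\item second-order terms give $\delta\delta/3$;
\item third-order terms give $\epsilon\epsilon/6$.
\end{itemize}
Each party therefore contributes either $\delta$-contractions from pairs of its indices or one $\epsilon$-contraction from a triple.

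\textbf{Step 1 (enumerate invariants).} I will list every invariant that can appear. The only term cubic in $T$ that contains $\tr(T^{AB}T^{BC}T^{CA})$ arises when each party's index appears exactly twice. That is, the three factors are $T^{AB}$, $T^{BC}$, $T^{CA}$, and each party is contracted by $\delta$. Its coefficient is proportional to $|\vec a|^2|\vec b|^2|\vec c|^2/27$, times a combinatorial factor $3!$.

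Other terms arise in the same $t=3$ moment from pairing index patterns that give every party exactly two indices:
\begin{itemize}
\item $W\cdot W\cdot(\text{const})$, giving $\|W\|_2^2$;
\item $W\cdot T^{AB}\cdot\gamma$ and its cyclic versions;
\item $W\cdot T^{XY}\cdot(\text{local vector})$ patterns generally.
\end{itemize}
All of these carry the same monomial prefactor $a_0|\vec a|^2|\vec b|^2|\vec c|^2\cdots$ structure. The key observation is that the relative coefficients of these terms are fixed by the multinomial counting and the single scalar $(a_0,b_0,c_0)$ weights. Terms like $W\,T^{AB}\gamma$ come with the same factor $|\vec a|^2|\vec b|^2|\vec c|^2$ multiplied by $c_0^{0}$-type weights only through the choice of which slot carries the identity.

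\textbf{Step 2 (part 2).} I will choose, say, $A=B=C=\sigma_z$ or $\mathbbm{1}+\sigma_z$, and read off the combination. I expect the choice $A=B=C=\sigma_z$ to kill all terms with $\mathbbm{1}$ slots, leaving only $W^2$-type terms at $t=2$. So to obtain Eq.~\eqref{r=1 Kempe Eq} at $t=3$, I will use $\mathbbm{1}+\sigma_z$ on each party. The resulting expression is a linear combination of:
\begin{itemize}
\item type-1 two-party invariants;
\item $\|W\|_2^2$, which comes from the $t=2$ moment of $\sigma_z^{\otimes3}$;
\item the displayed combination.
\end{itemize}
Subtracting the known pieces, obtained from lower moments and subsystem product observables, isolates Eq.~\eqref{r=1 Kempe Eq} with coefficients $3,6,6$, which I will verify from the counting.

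\textbf{Step 3 (part 1).} I will show that for every product $\mathcal{O}$ and every $t\le3$, the mixed terms $W_{jk\ell}T^{AB}_{jk}\gamma_\ell$ (and their cyclic versions) always accompany $\tr(T^{AB}T^{BC}T^{CA})$ with a fixed ratio. The obstacle is that the first-, second- and third-moment data of all product observables must be shown to span a space that never separates $\tr(TTT)$ from $\sum W T\gamma$.

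I plan to exhibit a two-parameter family of states, or simply formal Bloch data, on which all accessible invariants agree but $\tr(T^{AB}T^{BC}T^{CA})$ differs. For example, I will perturb $W$ along a direction $W\mapsto W+\epsilon(T^{AB}\otimes\gamma+\ldots)$-like shift chosen so that $\|W\|^2$ and the combination are preserved while the trace changes. Here I rely on the fact that uniqueness in Definition~\ref{def meas} concerns all states, so a local counterexample suffices. I must check that every other invariant appearing at $t\le3$ is unaffected; this is the main difficulty.
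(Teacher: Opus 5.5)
Your Steps 1--2 are essentially correct and reach the paper's key fact by a different, more transparent route. The paper works with Schur--Weyl coefficients. For $\mathcal{O}=A\otimes B\otimes C$ it shows $x^{(1,1,1)}_{(123)}=0$ and $x^{(1,1,1)}_{(12)}=x^{(1,1,1)}_{(13)}=x^{(1,1,1)}_{(23)}$, so all $27$ triples of transpositions $(\pi_A,\pi_B,\pi_C)$ carry the same weight. Counting the $3$, $18$ and $6$ configurations then yields the displayed combination, and $(\mathbbm{1}+\sigma_3)^{\otimes 3}$ makes the weight nonzero. In your $SO(3)$ calculus the equal weights are manifest: each party contributes $a_0|\vec a|^2/3$ times a $\delta$ between whichever two copies carry $\vec\sigma$.

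Two points in Steps 1--2 need fixing. First, you must state that a party carrying $\vec\sigma$ in all three copies contributes $\tfrac16\epsilon_{i_1i_2i_3}\epsilon_{j_1j_2j_3}a_{j_1}a_{j_2}a_{j_3}=0$. This is the counterpart of $x_{(123)}=0$ and is exactly what makes your enumeration exhaustive; as written, you list the $\epsilon$-contraction as a possible contribution and then silently drop it. Second, at $t=3$ there is no separate $\|W\|_2^2$ to subtract. It enters only through the three configurations in which all parties pair the same two copies, i.e.\ inside the combination, whose overall prefactor is $a_0b_0c_0|\vec a|^2|\vec b|^2|\vec c|^2/27$. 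The only subsystem terms are those where some party is idle in all copies.

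The genuine gap is Step 3. The paper stops at the structural statement and treats non-recoverability as evident, so supplying a separating example is worthwhile. But your construction cannot work: $\tr(T^{AB}T^{BC}T^{CA})$ does not involve $W$ at all, so no shift of $W$ changes it. You have to move the two-body correlation matrices and use $W$ only to compensate. You also need a nonzero local Bloch vector. If $\vec\alpha=\vec\beta=\vec\gamma=0$, the bracket vanishes and, writing $K$ for the combination, $\tr(T^{AB}T^{BC}T^{CA})=(K-3\|W\|_2^2)/6$ is recoverable, since $\|W\|_2^2$ is available from the $t=2$ moment of $\sigma_z^{\otimes 3}$, as you note.

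A working choice is $\vec\alpha=\vec\beta=0$, $\vec\gamma=g\hat e$, $T^{BC}=T^{CA}=\mathbbm{1}_3$, $T^{AB}=R_\theta\in SO(3)$ and $W=w\,\hat u\otimes\hat v\otimes\hat e$. Choose the unit vectors so that $wg\,\hat u^{\T}R_\theta\hat v+\tr R_\theta$ is the same for both states. For example, take $wg=2$ and compare $\theta=0$, $\hat u=-\hat v$ with $\theta=\pi/2$, $\hat u\perp R_\theta\hat v$. Then every quantity reachable by product observables at $t\le 3$ coincides for the two states. These are $|\vec\alpha|^2,|\vec\beta|^2,|\vec\gamma|^2$, the three $\tr((T^{XY})^{\T}T^{XY})$, the three mixed terms $\langle\cdot,T^{XY}\cdot\rangle$, $\|W\|_2^2$ and $K$. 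Meanwhile $\tr(T^{AB}T^{BC}T^{CA})=1+2\cos\theta$ equals $3$ resp.\ $1$. Rescaling all Bloch data by a small $\epsilon$ gives valid states and, by homogeneity, preserves all these equalities.
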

\noindent This is proven as Proposition \ref{Kempe Prop} in the Appendix.

Due to the large amount of symmetry in the explicit calculations for  $r=1$, the five different invariants that appear in Eq.~\eqref{r=1 Kempe Eq} only ever appear together. They can only be decoupled using type $2$ measurements:
\begin{proposition}
    The invariant $\tr(T^{AB}T^{BC}T^{CA})$ (and therefore $I_{\mathrm{Kempe}}$) can be uniquely recovered by measuring the randomized measurement moments of the tensor rank $2$ observables
    \begin{align}\label{Kempe obs main}
    \mathcal{O}_1&=\mathbbm{1}\otimes \mathbbm{1}\otimes \mathbbm{1} + \sigma_3 \otimes \sigma_3 \otimes \sigma_3,\\
    \mathcal{O}_2&=\mathbbm{1} \otimes \mathbbm{1} \otimes \left(\mathbbm{1} -\sigma_1 \right)+ \sigma_3\otimes \sigma_3 \otimes \left(\mathbbm{1} -\sigma_1 \right),\\
    \mathcal{O}_3&=\mathbbm{1} \otimes \left(\mathbbm{1} -\sigma_1 \right) \otimes \mathbbm{1}+ \sigma_3 \otimes \left(\mathbbm{1} -\sigma_1 \right) \otimes \sigma_3,\\
    \mathcal{O}_4&=\left(\mathbbm{1} -\sigma_1 \right) \otimes \mathbbm{1}\otimes \mathbbm{1}+ \left(\mathbbm{1} -\sigma_1 \right) \otimes \sigma_3 \otimes \sigma_3
    \end{align}
in addition to some rank $1$ observables. The first observable allows measurement of $||W||_2^2$.
\end{proposition}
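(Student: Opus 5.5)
The approach is to compute the third moments $\mathcal{R}^{(3)}_{\mathcal{O}_k}$ explicitly. We use the three-qubit analogue of the Haar-integration procedure of Appendix~\ref{Haar Calculation}. On each qubit, only the Bloch part of a local operator is rotated, and the third-moment integral over $SU(2)$ reduces to an $SO(3)$ average of products of rotated Bloch vectors. Expand $\langle U^\dagger \mathcal{O} U\rangle_\varrho^3$ as a multilinear expression in the rotated Bloch data $\vec\alpha,\vec\beta,\vec\gamma,T^{AB},T^{BC},T^{CA},W$. The local averages then follow from the standard $SO(3)$ moment formulas:
\begin{itemize}
\item the first moment vanishes;
\item the second moment is $\frac13\delta_{jk}$;
\item the third moment is $\frac16\epsilon_{jk\ell}$ up to the normalisation of the vectors.
\end{itemize}
Each $\mathcal{R}^{(3)}_{\mathcal{O}_k}$ thereby becomes a linear combination of degree $\leq 3$ invariants. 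Its coefficients are contractions of the Bloch vectors of the local factors of $\mathcal{O}_k$ with $\delta$'s and $\epsilon$'s.

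First I would treat $\mathcal{O}_1=\mathbbm{1}^{\otimes 3}+\sigma_3^{\otimes 3}$. Here $\langle U^\dagger\mathcal{O}_1U\rangle=1+\langle \vec u,\vec v,\vec w ; W\rangle$ with independent random unit vectors. The odd powers average to zero, since any single unpaired vector integrates to zero. The square term gives $3\cdot\frac{1}{27}\|W\|_2^2$. So $\mathcal{R}^{(3)}_{\mathcal{O}_1}=1+\frac19\|W\|_2^2$, which gives $\|W\|_2^2$.

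Next I would treat $\mathcal{O}_2,\mathcal{O}_3,\mathcal{O}_4$, which are the same observable with the $(\mathbbm{1}-\sigma_1)$ factor placed on each party in turn. Take $\mathcal{O}_2$. Its expectation is $(1-\langle\vec n_C,\vec\gamma\rangle)+\langle \vec u\otimes\vec v,T^{AB}\rangle-\langle \vec u\otimes\vec v\otimes\vec n_C,W\rangle$, where $\vec u,\vec v$ are random and $\vec n_C\perp$-independent of $\sigma_3$ on party $C$ after rotation. Cubing and averaging, I expect surviving terms of the following forms:
\begin{itemize}
\item pure local and two-body terms ($|\vec\gamma|^2$, $\|T^{AB}\|^2$), which are type $1$ and known;
\item the triple-product cross term $\propto\sum W_{jk\ell}T^{AB}_{jk}\gamma_\ell$;
\item $\|W\|^2$ terms, already known from $\mathcal{O}_1$.
\end{itemize}
The $\epsilon$-contributions vanish, because the first two parties carry the same vector $\vec e_3$ in both product terms. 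Hence $\mathcal{O}_2$ isolates $\sum W_{jk\ell}T^{AB}_{jk}\gamma_\ell$, and $\mathcal{O}_3,\mathcal{O}_4$ isolate the analogous $\alpha_jT^{BC}_{k\ell}W_{jk\ell}$ and $T^{CA}_{\ell j}\beta_kW_{jk\ell}$ terms.

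Finally, I would take a rank-$1$ observable from part 2 of the preceding proposition (e.g.\ $(\mathbbm{1}+\sigma_3)^{\otimes3}$). Its third moment gives Eq.~\eqref{r=1 Kempe Eq} plus type-$1$ subsystem invariants. Subtracting the now known $\|W\|_2^2$ and the three cross terms leaves $6\,\tr(T^{AB}T^{BC}T^{CA})$. Combined with the type-$1$ two-qubit pieces, this gives $I_{\mathrm{Kempe}}$.

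The main obstacle is the bookkeeping for $\mathcal{O}_2$–$\mathcal{O}_4$. I must verify two things: that the coefficient of the desired cross term is nonzero, and that no $\tr(T^{AB}T^{BC}T^{CA})$ or other unknown invariant (for example $\langle\vec\alpha,T^{AB}\vec\beta\rangle$-type mixing) sneaks in with an uncancelled coefficient. I would organise this by expanding the cube into monomials indexed by which of the four product terms appear. Each local $SO(3)$ average vanishes unless every party's vectors pair up or form an $\epsilon$-triple. This quickly shows which invariants survive.
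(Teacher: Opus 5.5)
Your proposal is correct. Its logical skeleton is the paper's: $\mathcal{O}_1$ gives $\|W\|_2^2$, the three observables with a $(\mathbbm{1}-\sigma_1)$ factor give the three $W$-cross terms, and the rank-one moment of $(\mathbbm{1}+\sigma_3)^{\otimes 3}$ gives Eq.~\eqref{r=1 Kempe Eq} plus subsystem invariants, so subtraction leaves $\tr(T^{AB}T^{BC}T^{CA})$.

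What differs is how the moments are computed. The paper expands $\Phi^{(3)}(\mathcal{O})$ in permutation operators $V_{\pi_A}\otimes V_{\pi_B}\otimes V_{\pi_C}$ via Schur--Weyl duality. It gets the coefficients $x^{(\vec j)}_\pi$ from the inverted Gram matrix in Eq.~\eqref{3x3new}, shows $x^{(\vec j)}_{(123)}=0$ for $\vec j\in\{1,2\}^3$, and reads the result off the aggregated vector $\hat c$. You instead push the $SU(2)$ Haar measure forward to $SO(3)$ and average rotated Bloch vectors directly.

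For these observables your route is more elementary and more transparent. Every party carries a single Bloch direction, so all odd local moments vanish outright. Your $\epsilon$-bullet only matters when three independent directions are rotated by the same $R$, through $\mathbb{E}[R_{ja}R_{kb}R_{\ell c}]=\tfrac16\epsilon_{jk\ell}\epsilon_{abc}$. The bookkeeping you flag as the main obstacle is then a few lines. Write the rotated expectation of $\mathcal{O}_2$ as $1-g+t-w$, with $g=\langle\vec n,\vec\gamma\rangle$, $t=\vec{u}^{\T}T^{AB}\vec v$ and $w=\sum_{jk\ell}W_{jk\ell}u_jv_kn_\ell$. Parity in $\vec u,\vec v,\vec n$ kills every monomial of the cube except $1,t^2,g^2,w^2,tgw$, so
\[
\mathcal{R}^{(3)}_{\mathcal{O}_2}=1+|\vec\gamma|^2+\tfrac13\tr({(T^{AB})}^{\T} T^{AB})+\tfrac19\|W\|_2^2+\tfrac29\sum_{jk\ell}W_{jk\ell}T^{AB}_{jk}\gamma_\ell .
\]
This agrees with the paper's $\hat c=(8/9,16/9,0,0,0)$ once its factor $\tfrac18$ is included, and your $1+\tfrac19\|W\|_2^2$ for $\mathcal{O}_1$ matches $\hat c=(8/9,0,0,0,0)$. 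Neither $\tr(T^{AB}T^{BC}T^{CA})$ nor $\langle\vec\alpha,T^{AB}\vec\beta\rangle$ can leak in, because $\vec\alpha,\vec\beta,T^{BC},T^{CA}$ never enter the expectation value.

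The paper's heavier formalism buys generality. The same permutation-operator machinery yields its structural statements: only five nontrivial invariants can appear for $r\le 2$, the rank-one no-go, and the bipartite $t=3,4$ results. It is also not tied to the qubit-specific identification of $SU(2)$ conjugation with $SO(3)$ rotations of Bloch vectors.

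One small slip: $\mathcal{O}_3$ (factor on $B$) isolates the $\beta$-term $\sum_{jk\ell}W_{jk\ell}T^{CA}_{\ell j}\beta_k$, and $\mathcal{O}_4$ (factor on $A$) isolates the $\alpha$-term, not the other way round.
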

\noindent It is proven as Proposition \ref{kempe prop 2} in Appendix~\ref{app:kempe}.

Notably, this result implies that the Kempe invariant is accessible through a type $2$ measurement. This improves the previously known result from Ref.~\cite{Wyderka}, where only a type $3$ measurement was found.
Further note that the first observable in Eq.~\eqref{Kempe obs main} is effectively a type 1 measurement, since a measurement of $\mathbbm{1}^{\otimes 3}$ is independent of the quantum state.

\section{Discussion}\label{discussion_sec}
In this paper, we fully classified all continuous local unitary invariants of bipartite qubit systems w.r.t.~the difficulty of obtaining their value using randomized measurements, which lead to the results in Table \ref{obs table}. Complete characterization was possible since a complete set of invariants, determining any quantum state up to local unitary transformations, is known. We established that previously known measurement schemes for the invariants are indeed optimal. 

While in the case of two-qubit systems, the invariance under partial transposition proved to be decisive on whether a given invariant can be trivially measured with type $1$ measurements, we showed that this criterion fails in three-partite systems. There, we established that the Kempe invariant requires at least type $2$ measurements and show the optimality of this bound by providing an explicit protocol, thereby improving previously known results which required type $3$. 

Building on these results, a couple of further questions and possible generalizations arise.

First of all, this paper did not consider the discrete invariants from Ref.~\cite{Makhlin}. It remains open whether they can be measured by randomized measurements and of what type their measurement is. The fact that they are all observables of degree six or higher enforces forces $t \geq 6$. This makes the approach from this paper infeasible, since this would mean generating a $6! \times 6! = 720 \times 720$-dimensional matrix and analyzing its likely massive nullspace. For $I_{10}$ and $I_{11}$, a $9! \times 9!$ matrix would be required, which has around 131 billion entries.

Furthermore, we restricted ourselves to measurement moments of order $t \leq 4$, and we cannot rule out the existence of a type $2$ measurement of $I_1$ or $I_{14}$ to exist using  higher $t$ with the tools developed in this paper. Proving or disproving its existence is an interesting open question for further research, although such measurement would suffer from high statistical errors, making their advantage over type $3$ measurements questionable.

Another peculiar observation concerning the measurement of the Kempe invariant raises questions of a different kind: we found the invariant $||W||_2^2$ to be measurable using the observable $\mathcal{O}_1$ in Eq.~\eqref{Kempe obs main}. It is of type $2$, but effectively only requires type 1 measurements, since the result of the measurement $\lambda \mathbbm{1}\otimes \mathbbm{1}\otimes \mathbbm{1}$ is always deterministically given by $\lambda$ due to the trace of the state being fixed. Taking this into account could lead to a more precise classification of invariants up to trivial measurements. It is already known from the explicit characterization in Sec.~\ref{det_ch} that the determinant remains to be of type 3 even when accounting for additional trivial measurements (since $\tilde{P}(\lambda \mathbbm{1}) = 0$). A similar result is not yet known for the Hodge or Kempe invariant, but should be derivable using the methods from this paper.

Finally, we note that most of the formalism of this paper can be adapted to systems of local dimension $d \geq 3$ and could be used to extend our results to larger system sizes.

\bibliographystyle{apsrev4-1}
\bibliography{cite}

%merlin.mbs apsrev4-1.bst 2010-07-25 4.21a (PWD, AO, DPC) hacked
%Control: key (0)
%Control: author (72) initials jnrlst
%Control: editor formatted (1) identically to author
%Control: production of article title (-1) disabled
%Control: page (0) single
%Control: year (1) truncated
%Control: production of eprint (0) enabled
\begin{thebibliography}{34}%
\makeatletter
\providecommand \@ifxundefined [1]{%
 \@ifx{#1\undefined}
}%
\providecommand \@ifnum [1]{%
 \ifnum #1\expandafter \@firstoftwo
 \else \expandafter \@secondoftwo
 \fi
}%
\providecommand \@ifx [1]{%
 \ifx #1\expandafter \@firstoftwo
 \else \expandafter \@secondoftwo
 \fi
}%
\providecommand \natexlab [1]{#1}%
\providecommand \enquote  [1]{``#1''}%
\providecommand \bibnamefont  [1]{#1}%
\providecommand \bibfnamefont [1]{#1}%
\providecommand \citenamefont [1]{#1}%
\providecommand \href@noop [0]{\@secondoftwo}%
\providecommand \href [0]{\begingroup \@sanitize@url \@href}%
\providecommand \@href[1]{\@@startlink{#1}\@@href}%
\providecommand \@@href[1]{\endgroup#1\@@endlink}%
\providecommand \@sanitize@url [0]{\catcode `\\12\catcode `\$12\catcode
  `\&12\catcode `\#12\catcode `\^12\catcode `\_12\catcode `\%12\relax}%
\providecommand \@@startlink[1]{}%
\providecommand \@@endlink[0]{}%
\providecommand \url  [0]{\begingroup\@sanitize@url \@url }%
\providecommand \@url [1]{\endgroup\@href {#1}{\urlprefix }}%
\providecommand \urlprefix  [0]{URL }%
\providecommand \Eprint [0]{\href }%
\providecommand \doibase [0]{http://dx.doi.org/}%
\providecommand \selectlanguage [0]{\@gobble}%
\providecommand \bibinfo  [0]{\@secondoftwo}%
\providecommand \bibfield  [0]{\@secondoftwo}%
\providecommand \translation [1]{[#1]}%
\providecommand \BibitemOpen [0]{}%
\providecommand \bibitemStop [0]{}%
\providecommand \bibitemNoStop [0]{.\EOS\space}%
\providecommand \EOS [0]{\spacefactor3000\relax}%
\providecommand \BibitemShut  [1]{\csname bibitem#1\endcsname}%
\let\auto@bib@innerbib\@empty
%</preamble>
\bibitem [{\citenamefont {Ac{\'\i}n}\ \emph {et~al.}(2018)\citenamefont
  {Ac{\'\i}n}, \citenamefont {Bloch}, \citenamefont {Buhrman}, \citenamefont
  {Calarco}, \citenamefont {Eichler}, \citenamefont {Eisert}, \citenamefont
  {Esteve}, \citenamefont {Gisin}, \citenamefont {Glaser}, \citenamefont
  {Jelezko} \emph {et~al.}}]{acin2017european}%
  \BibitemOpen
  \bibfield  {author} {\bibinfo {author} {\bibfnamefont {A.}~\bibnamefont
  {Ac{\'\i}n}}, \bibinfo {author} {\bibfnamefont {I.}~\bibnamefont {Bloch}},
  \bibinfo {author} {\bibfnamefont {H.}~\bibnamefont {Buhrman}}, \bibinfo
  {author} {\bibfnamefont {T.}~\bibnamefont {Calarco}}, \bibinfo {author}
  {\bibfnamefont {C.}~\bibnamefont {Eichler}}, \bibinfo {author} {\bibfnamefont
  {J.}~\bibnamefont {Eisert}}, \bibinfo {author} {\bibfnamefont
  {D.}~\bibnamefont {Esteve}}, \bibinfo {author} {\bibfnamefont
  {N.}~\bibnamefont {Gisin}}, \bibinfo {author} {\bibfnamefont {S.~J.}\
  \bibnamefont {Glaser}}, \bibinfo {author} {\bibfnamefont {F.}~\bibnamefont
  {Jelezko}},  \emph {et~al.},\ }\href {\doibase 10.1088/1367-2630/aad1ea}
  {\bibfield  {journal} {\bibinfo  {journal} {New J. Phys.}\ }\textbf {\bibinfo
  {volume} {20}},\ \bibinfo {pages} {080201} (\bibinfo {year}
  {2018})}\BibitemShut {NoStop}%
\bibitem [{\citenamefont {Kliesch}\ and\ \citenamefont
  {Roth}(2021)}]{kliesch2021theory}%
  \BibitemOpen
  \bibfield  {author} {\bibinfo {author} {\bibfnamefont {M.}~\bibnamefont
  {Kliesch}}\ and\ \bibinfo {author} {\bibfnamefont {I.}~\bibnamefont {Roth}},\
  }\href {\doibase 10.1103/PRXQuantum.2.010201} {\bibfield  {journal} {\bibinfo
   {journal} {PRX Quantum}\ }\textbf {\bibinfo {volume} {2}},\ \bibinfo {pages}
  {010201} (\bibinfo {year} {2021})}\BibitemShut {NoStop}%
\bibitem [{\citenamefont {N{\"o}ller}\ \emph {et~al.}(2025)\citenamefont
  {N{\"o}ller}, \citenamefont {Miklin}, \citenamefont {Kliesch},\ and\
  \citenamefont {Gachechiladze}}]{noller2024classical}%
  \BibitemOpen
  \bibfield  {author} {\bibinfo {author} {\bibfnamefont {J.}~\bibnamefont
  {N{\"o}ller}}, \bibinfo {author} {\bibfnamefont {N.}~\bibnamefont {Miklin}},
  \bibinfo {author} {\bibfnamefont {M.}~\bibnamefont {Kliesch}}, \ and\
  \bibinfo {author} {\bibfnamefont {M.}~\bibnamefont {Gachechiladze}},\ }\href
  {\doibase 10.22331/q-2025-08-08-1825} {\bibfield  {journal} {\bibinfo
  {journal} {Quantum}\ }\textbf {\bibinfo {volume} {9}},\ \bibinfo {pages}
  {1825} (\bibinfo {year} {2025})}\BibitemShut {NoStop}%
\bibitem [{\citenamefont {Islam}\ \emph {et~al.}(2014)\citenamefont {Islam},
  \citenamefont {Magnin}, \citenamefont {Sorg},\ and\ \citenamefont
  {Wehner}}]{islam2014spatial}%
  \BibitemOpen
  \bibfield  {author} {\bibinfo {author} {\bibfnamefont {T.}~\bibnamefont
  {Islam}}, \bibinfo {author} {\bibfnamefont {L.}~\bibnamefont {Magnin}},
  \bibinfo {author} {\bibfnamefont {B.}~\bibnamefont {Sorg}}, \ and\ \bibinfo
  {author} {\bibfnamefont {S.}~\bibnamefont {Wehner}},\ }\href {\doibase
  10.1088/1367-2630/16/6/063040} {\bibfield  {journal} {\bibinfo  {journal}
  {New J. Phys.}\ }\textbf {\bibinfo {volume} {16}},\ \bibinfo {pages} {063040}
  (\bibinfo {year} {2014})}\BibitemShut {NoStop}%
\bibitem [{\citenamefont {Islam}\ and\ \citenamefont
  {Wehner}(2016)}]{islam2016asynchronous}%
  \BibitemOpen
  \bibfield  {author} {\bibinfo {author} {\bibfnamefont {T.}~\bibnamefont
  {Islam}}\ and\ \bibinfo {author} {\bibfnamefont {S.}~\bibnamefont {Wehner}},\
  }\href {\doibase 10.1088/1367-2630/18/3/033018} {\bibfield  {journal}
  {\bibinfo  {journal} {New J. Phys.}\ }\textbf {\bibinfo {volume} {18}},\
  \bibinfo {pages} {033018} (\bibinfo {year} {2016})}\BibitemShut {NoStop}%
\bibitem [{\citenamefont {Cie{\'s}li{\'n}ski}\ \emph
  {et~al.}(2024)\citenamefont {Cie{\'s}li{\'n}ski}, \citenamefont {Imai},
  \citenamefont {Dziewior}, \citenamefont {G{\"u}hne}, \citenamefont {Knips},
  \citenamefont {Laskowski}, \citenamefont {Meinecke}, \citenamefont
  {Paterek},\ and\ \citenamefont {V{\'e}rtesi}}]{cieslinski2024analysing}%
  \BibitemOpen
  \bibfield  {author} {\bibinfo {author} {\bibfnamefont {P.}~\bibnamefont
  {Cie{\'s}li{\'n}ski}}, \bibinfo {author} {\bibfnamefont {S.}~\bibnamefont
  {Imai}}, \bibinfo {author} {\bibfnamefont {J.}~\bibnamefont {Dziewior}},
  \bibinfo {author} {\bibfnamefont {O.}~\bibnamefont {G{\"u}hne}}, \bibinfo
  {author} {\bibfnamefont {L.}~\bibnamefont {Knips}}, \bibinfo {author}
  {\bibfnamefont {W.}~\bibnamefont {Laskowski}}, \bibinfo {author}
  {\bibfnamefont {J.}~\bibnamefont {Meinecke}}, \bibinfo {author}
  {\bibfnamefont {T.}~\bibnamefont {Paterek}}, \ and\ \bibinfo {author}
  {\bibfnamefont {T.}~\bibnamefont {V{\'e}rtesi}},\ }\href {\doibase
  10.1016/j.physrep.2024.09.009} {\bibfield  {journal} {\bibinfo  {journal}
  {Phys. Rep.}\ }\textbf {\bibinfo {volume} {1095}},\ \bibinfo {pages} {1}
  (\bibinfo {year} {2024})}\BibitemShut {NoStop}%
\bibitem [{\citenamefont {Rarity}\ \emph {et~al.}(2002)\citenamefont {Rarity},
  \citenamefont {Tapster}, \citenamefont {Gorman},\ and\ \citenamefont
  {Knight}}]{rarity2002ground}%
  \BibitemOpen
  \bibfield  {author} {\bibinfo {author} {\bibfnamefont {J.~G.}\ \bibnamefont
  {Rarity}}, \bibinfo {author} {\bibfnamefont {P.}~\bibnamefont {Tapster}},
  \bibinfo {author} {\bibfnamefont {P.}~\bibnamefont {Gorman}}, \ and\ \bibinfo
  {author} {\bibfnamefont {P.}~\bibnamefont {Knight}},\ }\href {\doibase
  10.1088/1367-2630/4/1/382} {\bibfield  {journal} {\bibinfo  {journal} {New J.
  Phys.}\ }\textbf {\bibinfo {volume} {4}},\ \bibinfo {pages} {82} (\bibinfo
  {year} {2002})}\BibitemShut {NoStop}%
\bibitem [{\citenamefont {Kurtsiefer}\ \emph {et~al.}(2002)\citenamefont
  {Kurtsiefer}, \citenamefont {Zarda}, \citenamefont {Halder}, \citenamefont
  {Weinfurter}, \citenamefont {Gorman}, \citenamefont {Tapster},\ and\
  \citenamefont {Rarity}}]{kurtsiefer2002step}%
  \BibitemOpen
  \bibfield  {author} {\bibinfo {author} {\bibfnamefont {C.}~\bibnamefont
  {Kurtsiefer}}, \bibinfo {author} {\bibfnamefont {P.}~\bibnamefont {Zarda}},
  \bibinfo {author} {\bibfnamefont {M.}~\bibnamefont {Halder}}, \bibinfo
  {author} {\bibfnamefont {H.}~\bibnamefont {Weinfurter}}, \bibinfo {author}
  {\bibfnamefont {P.}~\bibnamefont {Gorman}}, \bibinfo {author} {\bibfnamefont
  {P.}~\bibnamefont {Tapster}}, \ and\ \bibinfo {author} {\bibfnamefont
  {J.}~\bibnamefont {Rarity}},\ }\href {\doibase 10.1038/419450a} {\bibfield
  {journal} {\bibinfo  {journal} {Nature}\ }\textbf {\bibinfo {volume} {419}},\
  \bibinfo {pages} {450} (\bibinfo {year} {2002})}\BibitemShut {NoStop}%
\bibitem [{\citenamefont {Liao}\ \emph {et~al.}(2017)\citenamefont {Liao},
  \citenamefont {Cai}, \citenamefont {Liu}, \citenamefont {Zhang},
  \citenamefont {Li}, \citenamefont {Ren}, \citenamefont {Yin}, \citenamefont
  {Shen}, \citenamefont {Cao}, \citenamefont {Li} \emph
  {et~al.}}]{liao2017satellite}%
  \BibitemOpen
  \bibfield  {author} {\bibinfo {author} {\bibfnamefont {S.-K.}\ \bibnamefont
  {Liao}}, \bibinfo {author} {\bibfnamefont {W.-Q.}\ \bibnamefont {Cai}},
  \bibinfo {author} {\bibfnamefont {W.-Y.}\ \bibnamefont {Liu}}, \bibinfo
  {author} {\bibfnamefont {L.}~\bibnamefont {Zhang}}, \bibinfo {author}
  {\bibfnamefont {Y.}~\bibnamefont {Li}}, \bibinfo {author} {\bibfnamefont
  {J.-G.}\ \bibnamefont {Ren}}, \bibinfo {author} {\bibfnamefont
  {J.}~\bibnamefont {Yin}}, \bibinfo {author} {\bibfnamefont {Q.}~\bibnamefont
  {Shen}}, \bibinfo {author} {\bibfnamefont {Y.}~\bibnamefont {Cao}}, \bibinfo
  {author} {\bibfnamefont {Z.-P.}\ \bibnamefont {Li}},  \emph {et~al.},\ }\href
  {\doibase 10.1038/nature23655} {\bibfield  {journal} {\bibinfo  {journal}
  {Nature}\ }\textbf {\bibinfo {volume} {549}},\ \bibinfo {pages} {43}
  (\bibinfo {year} {2017})}\BibitemShut {NoStop}%
\bibitem [{\citenamefont {Li}\ \emph {et~al.}(2025)\citenamefont {Li},
  \citenamefont {Cai}, \citenamefont {Ren}, \citenamefont {Wang}, \citenamefont
  {Yang}, \citenamefont {Zhang}, \citenamefont {Wu}, \citenamefont {Chang},
  \citenamefont {Wu}, \citenamefont {Jin} \emph
  {et~al.}}]{li2025microsatellite}%
  \BibitemOpen
  \bibfield  {author} {\bibinfo {author} {\bibfnamefont {Y.}~\bibnamefont
  {Li}}, \bibinfo {author} {\bibfnamefont {W.-Q.}\ \bibnamefont {Cai}},
  \bibinfo {author} {\bibfnamefont {J.-G.}\ \bibnamefont {Ren}}, \bibinfo
  {author} {\bibfnamefont {C.-Z.}\ \bibnamefont {Wang}}, \bibinfo {author}
  {\bibfnamefont {M.}~\bibnamefont {Yang}}, \bibinfo {author} {\bibfnamefont
  {L.}~\bibnamefont {Zhang}}, \bibinfo {author} {\bibfnamefont {H.-Y.}\
  \bibnamefont {Wu}}, \bibinfo {author} {\bibfnamefont {L.}~\bibnamefont
  {Chang}}, \bibinfo {author} {\bibfnamefont {J.-C.}\ \bibnamefont {Wu}},
  \bibinfo {author} {\bibfnamefont {B.}~\bibnamefont {Jin}},  \emph {et~al.},\
  }\href {\doibase 10.1038/s41586-025-08739-z} {\bibfield  {journal} {\bibinfo
  {journal} {Nature}\ }\textbf {\bibinfo {volume} {640}},\ \bibinfo {pages}
  {47} (\bibinfo {year} {2025})}\BibitemShut {NoStop}%
\bibitem [{\citenamefont {Elben}\ \emph {et~al.}(2020)\citenamefont {Elben},
  \citenamefont {Kueng}, \citenamefont {Huang}, \citenamefont {van Bijnen},
  \citenamefont {Kokail}, \citenamefont {Dalmonte}, \citenamefont {Calabrese},
  \citenamefont {Kraus}, \citenamefont {Preskill}, \citenamefont {Zoller} \emph
  {et~al.}}]{elben2020mixed}%
  \BibitemOpen
  \bibfield  {author} {\bibinfo {author} {\bibfnamefont {A.}~\bibnamefont
  {Elben}}, \bibinfo {author} {\bibfnamefont {R.}~\bibnamefont {Kueng}},
  \bibinfo {author} {\bibfnamefont {H.-Y.}\ \bibnamefont {Huang}}, \bibinfo
  {author} {\bibfnamefont {R.}~\bibnamefont {van Bijnen}}, \bibinfo {author}
  {\bibfnamefont {C.}~\bibnamefont {Kokail}}, \bibinfo {author} {\bibfnamefont
  {M.}~\bibnamefont {Dalmonte}}, \bibinfo {author} {\bibfnamefont
  {P.}~\bibnamefont {Calabrese}}, \bibinfo {author} {\bibfnamefont
  {B.}~\bibnamefont {Kraus}}, \bibinfo {author} {\bibfnamefont
  {J.}~\bibnamefont {Preskill}}, \bibinfo {author} {\bibfnamefont
  {P.}~\bibnamefont {Zoller}},  \emph {et~al.},\ }\href {\doibase
  10.1103/PhysRevLett.125.200501} {\bibfield  {journal} {\bibinfo  {journal}
  {Phys. Rev. Lett.}\ }\textbf {\bibinfo {volume} {125}},\ \bibinfo {pages}
  {200501} (\bibinfo {year} {2020})}\BibitemShut {NoStop}%
\bibitem [{\citenamefont {Ketterer}\ \emph {et~al.}(2022)\citenamefont
  {Ketterer}, \citenamefont {Imai}, \citenamefont {Wyderka},\ and\
  \citenamefont {G{\"u}hne}}]{ketterer2022statistically}%
  \BibitemOpen
  \bibfield  {author} {\bibinfo {author} {\bibfnamefont {A.}~\bibnamefont
  {Ketterer}}, \bibinfo {author} {\bibfnamefont {S.}~\bibnamefont {Imai}},
  \bibinfo {author} {\bibfnamefont {N.}~\bibnamefont {Wyderka}}, \ and\
  \bibinfo {author} {\bibfnamefont {O.}~\bibnamefont {G{\"u}hne}},\ }\href
  {\doibase 10.1103/PhysRevA.106.L010402} {\bibfield  {journal} {\bibinfo
  {journal} {Phys. Rev. A}\ }\textbf {\bibinfo {volume} {106}},\ \bibinfo
  {pages} {L010402} (\bibinfo {year} {2022})}\BibitemShut {NoStop}%
\bibitem [{\citenamefont {Wyderka}\ \emph
  {et~al.}(2023{\natexlab{a}})\citenamefont {Wyderka}, \citenamefont
  {Ketterer}, \citenamefont {Imai}, \citenamefont {Bönsel}, \citenamefont
  {Jones}, \citenamefont {Kirby}, \citenamefont {Yu},\ and\ \citenamefont
  {Gühne}}]{Wyderka}%
  \BibitemOpen
  \bibfield  {author} {\bibinfo {author} {\bibfnamefont {N.}~\bibnamefont
  {Wyderka}}, \bibinfo {author} {\bibfnamefont {A.}~\bibnamefont {Ketterer}},
  \bibinfo {author} {\bibfnamefont {S.}~\bibnamefont {Imai}}, \bibinfo {author}
  {\bibfnamefont {J.~L.}\ \bibnamefont {Bönsel}}, \bibinfo {author}
  {\bibfnamefont {D.~E.}\ \bibnamefont {Jones}}, \bibinfo {author}
  {\bibfnamefont {B.~T.}\ \bibnamefont {Kirby}}, \bibinfo {author}
  {\bibfnamefont {X.-D.}\ \bibnamefont {Yu}}, \ and\ \bibinfo {author}
  {\bibfnamefont {O.}~\bibnamefont {Gühne}},\ }\href {\doibase
  10.1103/physrevlett.131.090201} {\bibfield  {journal} {\bibinfo  {journal}
  {Phys. Rev. Lett.}\ }\textbf {\bibinfo {volume} {131}},\ \bibinfo {pages}
  {090201} (\bibinfo {year} {2023}{\natexlab{a}})}\BibitemShut {NoStop}%
\bibitem [{\citenamefont {G{\"u}hne}\ and\ \citenamefont
  {T{\'o}th}(2009)}]{guhne2009entanglement}%
  \BibitemOpen
  \bibfield  {author} {\bibinfo {author} {\bibfnamefont {O.}~\bibnamefont
  {G{\"u}hne}}\ and\ \bibinfo {author} {\bibfnamefont {G.}~\bibnamefont
  {T{\'o}th}},\ }\href {\doibase 10.1016/j.physrep.2009.02.004} {\bibfield
  {journal} {\bibinfo  {journal} {Phys. Rep.}\ }\textbf {\bibinfo {volume}
  {474}},\ \bibinfo {pages} {1} (\bibinfo {year} {2009})}\BibitemShut {NoStop}%
\bibitem [{\citenamefont {Bartkiewicz}\ \emph {et~al.}(2015)\citenamefont
  {Bartkiewicz}, \citenamefont {Beran}, \citenamefont {Lemr}, \citenamefont
  {Norek},\ and\ \citenamefont {Miranowicz}}]{Bartkiewicz}%
  \BibitemOpen
  \bibfield  {author} {\bibinfo {author} {\bibfnamefont {K.}~\bibnamefont
  {Bartkiewicz}}, \bibinfo {author} {\bibfnamefont {J.}~\bibnamefont {Beran}},
  \bibinfo {author} {\bibfnamefont {K.}~\bibnamefont {Lemr}}, \bibinfo {author}
  {\bibfnamefont {M.}~\bibnamefont {Norek}}, \ and\ \bibinfo {author}
  {\bibfnamefont {A.}~\bibnamefont {Miranowicz}},\ }\href {\doibase
  10.1103/physreva.91.022323} {\bibfield  {journal} {\bibinfo  {journal} {Phys.
  Rev. A}\ }\textbf {\bibinfo {volume} {91}},\ \bibinfo {pages} {022323}
  (\bibinfo {year} {2015})}\BibitemShut {NoStop}%
\bibitem [{\citenamefont {Knips}\ \emph {et~al.}(2020)\citenamefont {Knips},
  \citenamefont {Dziewior}, \citenamefont {K{\l}obus}, \citenamefont
  {Laskowski}, \citenamefont {Paterek}, \citenamefont {Shadbolt}, \citenamefont
  {Weinfurter},\ and\ \citenamefont {Meinecke}}]{knips2020multipartite}%
  \BibitemOpen
  \bibfield  {author} {\bibinfo {author} {\bibfnamefont {L.}~\bibnamefont
  {Knips}}, \bibinfo {author} {\bibfnamefont {J.}~\bibnamefont {Dziewior}},
  \bibinfo {author} {\bibfnamefont {W.}~\bibnamefont {K{\l}obus}}, \bibinfo
  {author} {\bibfnamefont {W.}~\bibnamefont {Laskowski}}, \bibinfo {author}
  {\bibfnamefont {T.}~\bibnamefont {Paterek}}, \bibinfo {author} {\bibfnamefont
  {P.~J.}\ \bibnamefont {Shadbolt}}, \bibinfo {author} {\bibfnamefont
  {H.}~\bibnamefont {Weinfurter}}, \ and\ \bibinfo {author} {\bibfnamefont
  {J.~D.}\ \bibnamefont {Meinecke}},\ }\href {\doibase
  10.1038/s41534-020-0281-5} {\bibfield  {journal} {\bibinfo  {journal} {npj
  Quantum Inf.}\ }\textbf {\bibinfo {volume} {6}},\ \bibinfo {pages} {51}
  (\bibinfo {year} {2020})}\BibitemShut {NoStop}%
\bibitem [{\citenamefont {Knips}(2020)}]{knips2020moment}%
  \BibitemOpen
  \bibfield  {author} {\bibinfo {author} {\bibfnamefont {L.}~\bibnamefont
  {Knips}},\ }\href {\doibase 10.22331/qv-2020-11-19-47} {\bibfield  {journal}
  {\bibinfo  {journal} {Quantum Views}\ }\textbf {\bibinfo {volume} {4}},\
  \bibinfo {pages} {47} (\bibinfo {year} {2020})}\BibitemShut {NoStop}%
\bibitem [{\citenamefont {Imai}\ \emph {et~al.}(2021)\citenamefont {Imai},
  \citenamefont {Wyderka}, \citenamefont {Ketterer},\ and\ \citenamefont
  {G{\"u}hne}}]{imai2021bound}%
  \BibitemOpen
  \bibfield  {author} {\bibinfo {author} {\bibfnamefont {S.}~\bibnamefont
  {Imai}}, \bibinfo {author} {\bibfnamefont {N.}~\bibnamefont {Wyderka}},
  \bibinfo {author} {\bibfnamefont {A.}~\bibnamefont {Ketterer}}, \ and\
  \bibinfo {author} {\bibfnamefont {O.}~\bibnamefont {G{\"u}hne}},\ }\href
  {\doibase 10.1103/PhysRevLett.126.150501} {\bibfield  {journal} {\bibinfo
  {journal} {Phys. Rev. Lett.}\ }\textbf {\bibinfo {volume} {126}},\ \bibinfo
  {pages} {150501} (\bibinfo {year} {2021})}\BibitemShut {NoStop}%
\bibitem [{\citenamefont {Elben}\ \emph {et~al.}(2023)\citenamefont {Elben},
  \citenamefont {Flammia}, \citenamefont {Huang}, \citenamefont {Kueng},
  \citenamefont {Preskill}, \citenamefont {Vermersch},\ and\ \citenamefont
  {Zoller}}]{elben2023randomized}%
  \BibitemOpen
  \bibfield  {author} {\bibinfo {author} {\bibfnamefont {A.}~\bibnamefont
  {Elben}}, \bibinfo {author} {\bibfnamefont {S.~T.}\ \bibnamefont {Flammia}},
  \bibinfo {author} {\bibfnamefont {H.-Y.}\ \bibnamefont {Huang}}, \bibinfo
  {author} {\bibfnamefont {R.}~\bibnamefont {Kueng}}, \bibinfo {author}
  {\bibfnamefont {J.}~\bibnamefont {Preskill}}, \bibinfo {author}
  {\bibfnamefont {B.}~\bibnamefont {Vermersch}}, \ and\ \bibinfo {author}
  {\bibfnamefont {P.}~\bibnamefont {Zoller}},\ }\href {\doibase
  10.1038/s42254-022-00535-2} {\bibfield  {journal} {\bibinfo  {journal} {Nat.
  Rev. Phys.}\ }\textbf {\bibinfo {volume} {5}},\ \bibinfo {pages} {9}
  (\bibinfo {year} {2023})}\BibitemShut {NoStop}%
\bibitem [{\citenamefont {Kempe}(1999)}]{kempe1999multiparticle}%
  \BibitemOpen
  \bibfield  {author} {\bibinfo {author} {\bibfnamefont {J.}~\bibnamefont
  {Kempe}},\ }\href {\doibase 10.1103/PhysRevA.60.910} {\bibfield  {journal}
  {\bibinfo  {journal} {Phys. Rev. A}\ }\textbf {\bibinfo {volume} {60}},\
  \bibinfo {pages} {910} (\bibinfo {year} {1999})}\BibitemShut {NoStop}%
\bibitem [{\citenamefont {Barnum}\ and\ \citenamefont {Linden}(2001)}]{Barnum}%
  \BibitemOpen
  \bibfield  {author} {\bibinfo {author} {\bibfnamefont {H.}~\bibnamefont
  {Barnum}}\ and\ \bibinfo {author} {\bibfnamefont {N.}~\bibnamefont
  {Linden}},\ }\href {\doibase 10.1088/0305-4470/34/35/305} {\bibfield
  {journal} {\bibinfo  {journal} {J. Phys. A: Math. Gen.}\ }\textbf {\bibinfo
  {volume} {34}},\ \bibinfo {pages} {6787} (\bibinfo {year}
  {2001})}\BibitemShut {NoStop}%
\bibitem [{\citenamefont {Wyderka}\ and\ \citenamefont
  {Ketterer}(2023)}]{Wyderka_Geometry}%
  \BibitemOpen
  \bibfield  {author} {\bibinfo {author} {\bibfnamefont {N.}~\bibnamefont
  {Wyderka}}\ and\ \bibinfo {author} {\bibfnamefont {A.}~\bibnamefont
  {Ketterer}},\ }\href {\doibase 10.1103/prxquantum.4.020325} {\bibfield
  {journal} {\bibinfo  {journal} {PRX Quantum}\ }\textbf {\bibinfo {volume}
  {4}},\ \bibinfo {pages} {020325} (\bibinfo {year} {2023})}\BibitemShut
  {NoStop}%
\bibitem [{\citenamefont {Wyderka}\ \emph
  {et~al.}(2023{\natexlab{b}})\citenamefont {Wyderka}, \citenamefont
  {Ketterer}, \citenamefont {Imai}, \citenamefont {B{\"o}nsel}, \citenamefont
  {Jones}, \citenamefont {Kirby}, \citenamefont {Yu},\ and\ \citenamefont
  {G{\"u}hne}}]{wyderka2023complete}%
  \BibitemOpen
  \bibfield  {author} {\bibinfo {author} {\bibfnamefont {N.}~\bibnamefont
  {Wyderka}}, \bibinfo {author} {\bibfnamefont {A.}~\bibnamefont {Ketterer}},
  \bibinfo {author} {\bibfnamefont {S.}~\bibnamefont {Imai}}, \bibinfo {author}
  {\bibfnamefont {J.~L.}\ \bibnamefont {B{\"o}nsel}}, \bibinfo {author}
  {\bibfnamefont {D.~E.}\ \bibnamefont {Jones}}, \bibinfo {author}
  {\bibfnamefont {B.~T.}\ \bibnamefont {Kirby}}, \bibinfo {author}
  {\bibfnamefont {X.-D.}\ \bibnamefont {Yu}}, \ and\ \bibinfo {author}
  {\bibfnamefont {O.}~\bibnamefont {G{\"u}hne}},\ }\href {\doibase
  10.1103/physrevlett.131.090201} {\bibfield  {journal} {\bibinfo  {journal}
  {Phys. Rev. Lett.}\ }\textbf {\bibinfo {volume} {131}},\ \bibinfo {pages}
  {090201} (\bibinfo {year} {2023}{\natexlab{b}})}\BibitemShut {NoStop}%
\bibitem [{\citenamefont {Ma}\ and\ \citenamefont
  {Huang}(2025)}]{ma2025construct}%
  \BibitemOpen
  \bibfield  {author} {\bibinfo {author} {\bibfnamefont {F.}~\bibnamefont
  {Ma}}\ and\ \bibinfo {author} {\bibfnamefont {H.-Y.}\ \bibnamefont {Huang}},\
  }in\ \href {\doibase 10.1145/3717823.3718254} {\emph {\bibinfo {booktitle}
  {Proceedings of the 57th Annual ACM Symposium on Theory of Computing}}}\
  (\bibinfo {year} {2025})\ pp.\ \bibinfo {pages} {806--809}\BibitemShut
  {NoStop}%
\bibitem [{\citenamefont {Schmidt}(1907)}]{schmidt1907theorie}%
  \BibitemOpen
  \bibfield  {author} {\bibinfo {author} {\bibfnamefont {E.}~\bibnamefont
  {Schmidt}},\ }\href@noop {} {\bibfield  {journal} {\bibinfo  {journal}
  {Mathematische Annalen}\ }\textbf {\bibinfo {volume} {63}},\ \bibinfo {pages}
  {433} (\bibinfo {year} {1907})}\BibitemShut {NoStop}%
\bibitem [{\citenamefont {Grassl}\ \emph {et~al.}(1998)\citenamefont {Grassl},
  \citenamefont {R{\"o}tteler},\ and\ \citenamefont
  {Beth}}]{grassl1998computing}%
  \BibitemOpen
  \bibfield  {author} {\bibinfo {author} {\bibfnamefont {M.}~\bibnamefont
  {Grassl}}, \bibinfo {author} {\bibfnamefont {M.}~\bibnamefont
  {R{\"o}tteler}}, \ and\ \bibinfo {author} {\bibfnamefont {T.}~\bibnamefont
  {Beth}},\ }\href {\doibase 10.1103/PhysRevA.58.1833} {\bibfield  {journal}
  {\bibinfo  {journal} {Phys. Rev. A}\ }\textbf {\bibinfo {volume} {58}},\
  \bibinfo {pages} {1833} (\bibinfo {year} {1998})}\BibitemShut {NoStop}%
\bibitem [{\citenamefont {Onishchik}\ and\ \citenamefont
  {Vinberg}(2012)}]{onishchik2012lie}%
  \BibitemOpen
  \bibfield  {author} {\bibinfo {author} {\bibfnamefont {A.~L.}\ \bibnamefont
  {Onishchik}}\ and\ \bibinfo {author} {\bibfnamefont {E.~B.}\ \bibnamefont
  {Vinberg}},\ }\href {\doibase 10.1007/978-3-642-74334-4} {\emph {\bibinfo
  {title} {Lie groups and algebraic groups}}}\ (\bibinfo  {publisher} {Springer
  Science \& Business Media},\ \bibinfo {year} {2012})\BibitemShut {NoStop}%
\bibitem [{\citenamefont {Springer}(2006)}]{springer2006invariant}%
  \BibitemOpen
  \bibfield  {author} {\bibinfo {author} {\bibfnamefont {T.~A.}\ \bibnamefont
  {Springer}},\ }\href {\doibase 10.1007/BFb0095644} {\emph {\bibinfo {title}
  {Invariant theory}}},\ Vol.\ \bibinfo {volume} {585}\ (\bibinfo  {publisher}
  {Springer},\ \bibinfo {year} {2006})\BibitemShut {NoStop}%
\bibitem [{\citenamefont {Makhlin}(2002)}]{Makhlin}%
  \BibitemOpen
  \bibfield  {author} {\bibinfo {author} {\bibfnamefont {Y.}~\bibnamefont
  {Makhlin}},\ }\href {\doibase 10.1023/a:1022144002391} {\bibfield  {journal}
  {\bibinfo  {journal} {Quantum Inf. Process.}\ }\textbf {\bibinfo {volume}
  {1}},\ \bibinfo {pages} {243–252} (\bibinfo {year} {2002})}\BibitemShut
  {NoStop}%
\bibitem [{\citenamefont {Vidal}\ and\ \citenamefont
  {Werner}(2002)}]{vidal2002computable}%
  \BibitemOpen
  \bibfield  {author} {\bibinfo {author} {\bibfnamefont {G.}~\bibnamefont
  {Vidal}}\ and\ \bibinfo {author} {\bibfnamefont {R.~F.}\ \bibnamefont
  {Werner}},\ }\href {\doibase 10.1103/PhysRevA.65.032314} {\bibfield
  {journal} {\bibinfo  {journal} {Phys. Rev. A}\ }\textbf {\bibinfo {volume}
  {65}},\ \bibinfo {pages} {032314} (\bibinfo {year} {2002})}\BibitemShut
  {NoStop}%
\bibitem [{\citenamefont {Werner}(1989)}]{werner1989quantum}%
  \BibitemOpen
  \bibfield  {author} {\bibinfo {author} {\bibfnamefont {R.~F.}\ \bibnamefont
  {Werner}},\ }\href {\doibase 10.1103/PhysRevA.40.4277} {\bibfield  {journal}
  {\bibinfo  {journal} {Phys. Rev. A}\ }\textbf {\bibinfo {volume} {40}},\
  \bibinfo {pages} {4277} (\bibinfo {year} {1989})}\BibitemShut {NoStop}%
\bibitem [{\citenamefont {Gross}\ \emph {et~al.}(2007)\citenamefont {Gross},
  \citenamefont {Audenaert},\ and\ \citenamefont {Eisert}}]{gross2007evenly}%
  \BibitemOpen
  \bibfield  {author} {\bibinfo {author} {\bibfnamefont {D.}~\bibnamefont
  {Gross}}, \bibinfo {author} {\bibfnamefont {K.}~\bibnamefont {Audenaert}}, \
  and\ \bibinfo {author} {\bibfnamefont {J.}~\bibnamefont {Eisert}},\ }\href
  {\doibase 10.1063/1.2716992} {\bibfield  {journal} {\bibinfo  {journal} {J.
  Math. Phys.}\ }\textbf {\bibinfo {volume} {48}},\ \bibinfo {pages} {052104}
  (\bibinfo {year} {2007})}\BibitemShut {NoStop}%
\bibitem [{\citenamefont {Köstenberger}(2021)}]{Wein}%
  \BibitemOpen
  \bibfield  {author} {\bibinfo {author} {\bibfnamefont {G.}~\bibnamefont
  {Köstenberger}},\ }\href {https://doi.org/10.48550/arXiv.2101.00921}
  {\bibfield  {journal} {\bibinfo  {journal} {arXiv preprint arXiv:2101.00921}\
  } (\bibinfo {year} {2021})}\BibitemShut {NoStop}%
\bibitem [{\citenamefont {Aubrun}(2018)}]{SWD}%
  \BibitemOpen
  \bibfield  {author} {\bibinfo {author} {\bibfnamefont {G.}~\bibnamefont
  {Aubrun}},\ }\href@noop {} {\enquote {\bibinfo {title} {A naive look at
  {S}chur–{W}eyl duality},}\ } (\bibinfo {year} {2018}),\ \bibinfo {note}
  {\url{https://math.univ-lyon1.fr/~aubrun/recherche/schur-weyl.pdf} [Accessed:
  09.04.2026]}\BibitemShut {NoStop}%
\end{thebibliography}%

\appendix

\onecolumngrid
\parindent 0pt
\section{Notation and general calculation of integrals with respect to the Haar measure}\label{Haar Calculation}
\begin{definition}
    The twirling operator (see \cite{werner1989quantum,gross2007evenly} ) is given by:   
    \begin{equation}
    \begin{split}
        \Phi^{(t)} : L(\mathbb{C}^d)^{\otimes n} &\to \left(L(\mathbb{C}^d)^{\otimes n}\right)^{\otimes t} \\
        \mathcal{O} &\mapsto \int (U_1^\dagger \otimes ... \otimes U_n^\dagger)^{\otimes t} \mathcal{O}^{\otimes t} (U_1 \otimes ... \otimes U_n)^{\otimes t} \dd (U_1,...,U_n). 
    \end{split}
    \end{equation}
    Here, $L(V)$ denotes the (continuous) space of linear maps over a vector space $V$ and the integral is with respect to the product of Haar measures over $SU(d)$. Furthermore, following Ref.~\cite{Wein}, let 
    \begin{equation}
    \begin{split}
        P : L(\mathbb{C}^d)^{\otimes t} &\to L(\mathbb{C}^d)^{\otimes t} \\
        A &\mapsto \int (U^{\dagger})^{\otimes t}A U^{\otimes t} \dd U.
    \end{split}
    \end{equation}
    
\end{definition}
$\Phi^{(t)}$ is closely linked to the moments $\mathcal{R}^{(t)}_{\mathcal{O}} ( \varrho)$ from Equation (\ref{moments}) via $\mathcal{R}^{(t)}_{\mathcal{O}} ( \varrho) = \tr( \varrho^{\otimes t} \Phi^{(t)}(\mathcal{O}))$, which follows from the fact that $\tr(M)^t = \tr(M^{\otimes t})$ for any matrix $M$.

Consider an arbitrary operator $\mathcal{O} = \sum\limits_{j=1}^r \bigotimes\limits_{k=1}^n A_j^{(k)} \in L(\mathbb{C}^d)^{\otimes n}$. Then

\begin{equation}\label{reduction}
\begin{split}
    \Phi^{(t)}(\mathcal{O}) &= \int (U_1^\dagger \otimes ... \otimes U_n^\dagger)^{\otimes t} \left( \sum_{j=1}^r A_j^{(1)} \otimes ... \otimes A_j^{(n)} \right)^{\otimes t} (U_1 \otimes ...\otimes U_n) \; \dd (U_1,...,U_n) \\
    &=\sum_{j_1,...,j_t=1}^r \bigotimes_{k=1}^n \int (U_k^\dagger)^{\otimes t} \left(\bigotimes_{\ell = 1}^t A_{j_\ell}^{(k)} \right)  U_k^{\otimes t} \dd U_k = \sum_{j_1,...,j_t=1}^r \bigotimes_{k=1}^n P\left(\bigotimes_{\ell = 1}^t A_{j_\ell}^{(k)} \right).
\end{split}
\end{equation}

Thus the twirling operator can be evaluated with the help of the $P$ function.

By construction, $P$ is linear. Furthermore, for all $A \in L(\mathbb{C}^d)^{\otimes t}$, the operator $P(A)$ commutes with all elements of the form $U^{\otimes t}$.

\begin{definition}\label{V_pi def}
    For $t\in \mathbb{N}$ denote the group of permutations of $\{1,...,t\}$ as $S_t$. For $\pi \in S_t$ let $V_\pi \in L(\mathbb{C}^d)^{\otimes t}$ be defined via 
\begin{equation}\label{V_pi}
    V_\pi\ket{e_{j_1}} \ket{e_{j_2}} ...\ket{e_{j_t}} = \ket{e_{j_{\pi(1)}}}...\ket{e_{j_{\pi(t)}}} \quad \forall \vec{j} \in \{1,...,d\}^t.
\end{equation}
\end{definition}

\begin{theorem}[Schur-Weyl duality]\label{SW}
    Let $t,d \in \mathbb{N}$. The set of elements that commute with $\mathrm{span}(\{U^{\otimes t} : U \in U(d)\})$ is given by $\mathrm{span}(\{V_\pi : \pi \in S_t\})$.
    This implies that for any $A \in L(\mathbb{C}^d)^{\otimes t}$, there exist of coefficients $c_\pi \in \mathbb{C}$, such that
    \begin{equation}
        P(A) = \sum\limits_{\pi \in S_t} c_\pi V_\pi.
    \end{equation}
\end{theorem}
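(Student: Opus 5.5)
The plan is the standard ``naive'' route via the double commutant theorem (cf.~\cite{SW}). It has three steps: replace the unitary group by all of $L(\mathbb{C}^d)$, identify the span of tensor powers with the commutant of the permutation operators, and close with von Neumann's bicommutant theorem. The consequence for $P$ then follows from Haar invariance.

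\emph{Step 1: from unitaries to arbitrary operators.} I first show that $\mathrm{span}\{U^{\otimes t}: U\in U(d)\}=\mathrm{span}\{A^{\otimes t}: A\in L(\mathbb{C}^d)\}$; the case $SU(d)$ is the same, since $e^{i\theta}U$ contributes only a scalar $e^{it\theta}$. Suppose a linear functional $f$ on $L(\mathbb{C}^d)^{\otimes t}$ vanishes on all $U^{\otimes t}$. Then $p(A):=f(A^{\otimes t})$ is a holomorphic polynomial in the entries of $A$ that vanishes on $U(d)$.

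Now take $A=e^{iH}$ with $H=\sum_k z_k H_k$, where $(H_k)$ is a basis of the Hermitian matrices. The function $z\mapsto p(e^{i\sum z_kH_k})$ is holomorphic on $\mathbb{C}^{d^2}$ and vanishes for real $z$, so it vanishes identically. Every invertible matrix is an exponential $e^{X}$ with $X=i\sum_k z_kH_k$ for suitable complex $z_k$, so $p$ vanishes on $GL(d)$. Since $GL(d)$ is dense, $p\equiv 0$. This step is the main obstacle, and it is where the totally-real/complexification argument is needed. A fallback is to differentiate $e^{isH}$ at $s=0$: this produces the symmetrized products $\sum H\otimes\cdots\otimes\mathbbm{1}$ and their higher analogues, and by complex linearity these generate $A^{\otimes t}$.

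\emph{Step 2: the span of tensor powers is the commutant of the permutations.} Identify $L((\mathbb{C}^d)^{\otimes t})\cong L(\mathbb{C}^d)^{\otimes t}$. Under this identification, conjugation $X\mapsto V_\pi X V_\pi^\dagger$ is the permutation of the $t$ tensor factors of $L(\mathbb{C}^d)^{\otimes t}$. Hence the commutant $\{V_\pi\}'$ is exactly the symmetric subspace of $L(\mathbb{C}^d)^{\otimes t}$. By the polarization identity, for example
\begin{equation*}
A_1\vee\cdots\vee A_t=\frac{1}{t!\,2^t}\sum_{\epsilon\in\{\pm1\}^t}\epsilon_1\cdots\epsilon_t\,\Big(\sum_j\epsilon_jA_j\Big)^{\otimes t},
\end{equation*}
this symmetric subspace equals $\mathrm{span}\{A^{\otimes t}\}$.

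\emph{Step 3: bicommutant.} The set $\mathcal{A}:=\mathrm{span}\{V_\pi:\pi\in S_t\}$ is a unital $*$-subalgebra of a finite-dimensional matrix algebra, because $V_\pi V_\sigma=V_{\pi\sigma}$ (up to the ordering convention) and $V_\pi^\dagger=V_{\pi^{-1}}$. The finite-dimensional double commutant theorem therefore gives $\mathcal{A}''=\mathcal{A}$. By Steps 1 and 2, the commutant of $\mathrm{span}\{U^{\otimes t}\}$ is $\big(\mathrm{span}\{A^{\otimes t}\}\big)'=(\mathcal{A}')'=\mathcal{A}$, which is the claim.

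\emph{Consequence for $P$.} Left and right invariance of the Haar measure give $(V^\dagger)^{\otimes t}P(A)V^{\otimes t}=P(A)$ for all $V$. So $P(A)$ lies in the commutant of all $U^{\otimes t}$, and by the above $P(A)=\sum_\pi c_\pi V_\pi$ for some coefficients $c_\pi\in\mathbb{C}$.
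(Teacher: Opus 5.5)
Your proof is correct. The paper gives no argument of its own for this statement; it only refers to Aubrun's note \cite{SWD}. Your three steps are the standard elementary linear-algebra proof of the kind that reference presents, so nothing is missing:
\begin{enumerate}
\item You extend from $U(d)$ to all of $L(\mathbb{C}^d)$ using the holomorphic identity theorem, surjectivity of $\exp$ onto $GL(d)$, and density of $GL(d)$.
\item You identify $\mathrm{span}\{A^{\otimes t}\}$ with the commutant of $\{V_\pi\}$ via polarization.
\item You conclude with the finite-dimensional bicommutant theorem for the unital $*$-algebra $\mathrm{span}\{V_\pi:\pi\in S_t\}$.
\end{enumerate}

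On one point your write-up is more careful than the paper. $\Phi^{(t)}$ (and implicitly $P$) is defined with the Haar measure on $SU(d)$, while the theorem is stated for $U(d)$. Your remark that $\mathrm{span}\{U^{\otimes t}:U\in U(d)\}=\mathrm{span}\{U^{\otimes t}:U\in SU(d)\}$, because a phase $e^{i\theta}$ only contributes the scalar $e^{it\theta}$, is exactly what justifies applying the theorem to $P(A)$ in either convention.
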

\begin{proof}
    This formulation of the theorem is taken from \cite{SWD}, where a proof of it can be found.
\end{proof}

Consider a bipartite Hermitian operator in its operator Schmidt decomposition 

$\mathcal{O} = \sum\limits_{j=1}^r s_j A_j \otimes B_j \in L(\mathbb{C}^d)^{\otimes 2}_{\mathrm{herm.}} = \{ M \in \mathbb{C}^{d\times d} \; | \; M^\dagger =M \}^{\otimes 2}$,
where $A_j$ and $B_j$ are Hermitian, satisfying 

$\tr(A_jA_k)=\delta_{jk} = \tr(B_jB_k)$ and $s_j>0$. Then Equation (\ref{reduction}) reduces to:
\begin{equation}\label{MultInt1}
\begin{split}
    \Phi^{(t)}(\mathcal{O})
    =& \sum_{j_1,...,j_t = 1}^r \left(\prod_{k=1}^t s_{j_k}\right) P\left(\bigotimes_{k=1}^t A_{j_k} \right) \otimes P\left(\bigotimes_{k=1}^t B_{j_k} \right) \\
    =& \sum_{j_1,...,j_t = 1}^r \left(\prod_{k=1}^t s_{j_k}\right) \left(\sum_{\pi_A\in S_t} x^{(\,\vec{j} \, )}_{\pi_A} \; V_{\pi_A}\right) \otimes \left(\sum_{\pi_B \in S_t} y^{(\,\vec{j}\,)}_{\pi_B} \; V_{\pi_B}\right) \\
    =& \sum_{\pi_A,\pi_B \in S_t}\sum_{j_1,...,j_t = 1}^r \left(\prod_{k=1}^t s_{j_k}\right) x^{(\,\vec{j} \, )}_{\pi_A} y^{(\,\vec{j}\,)}_{\pi_B} \; V_{\pi_A, \pi_B} \equiv \sum_{\pi_A,\pi_B \in S_t} c_{\pi_A,\pi_B} V_{\pi_A,\pi_B}.
\end{split}
\end{equation}
where $x^{\,(\vec{j} \, )}_{\pi_A} \in \mathbb{C}$ and $y^{\,(\vec{j} \, )}_{\pi_B} \in \mathbb{C}$ are the coefficients from the Schur-Weyl duality and $V_{\pi_A, \pi_B} :=V_{\pi_A} \otimes V_{\pi_B} $.

This reduces the problem of calculating $\mathcal{R}^{(t)}_{\mathcal{O}}$ to finding the prefactors $c_{\pi_A,\pi_B}$ and calculating $\tr(\varrho^{\otimes t} V_{\pi_A,\pi_B})$ for all $\pi_A,\pi_B \in S_t$. Both require the following Lemma.

\begin{lemma}\label{calc}
    For $t,d\in \mathbb{N}$, $A_1,...,A_t \in \mathbb{C}^{d\times d}$, the following identity holds:
    \begin{equation}
        \tr(A_1\otimes...\otimes A_t V_\pi) = \prod_{\substack{j=1, \\ j \notin \{\pi^k(\ell) : \ell < j, k \leq t\}}}^t\tr(A_{\pi(j)}A_{\pi^2(j)}...A_j).
    \end{equation}
    Written in another way: If $\pi$ consists of the $k$ cycles starting at $m_1,...,m_k$, then
    \begin{equation}
        \tr(A_1\otimes...\otimes A_t V_\pi) = \prod_{\ell = 1}^k \tr(A_{\pi(m_\ell)}A_{\pi^2(m_\ell)} ... A_{m_\ell}).
    \end{equation}
    In the special case of $A\equiv A_1 = ... = A_t$
    \begin{equation}
        \tr(A^{\otimes t} V_\pi) = \prod_{j=1}^t \tr(A^j)^{\mathrm{\#cycles \; of \; length\;} j}.
    \end{equation}
    In the special case of $A \equiv \mathbbm{1}$
    \begin{equation}
        \tr(V_\pi) = d^{\mathrm{\#\, cycles}}.
    \end{equation}
\end{lemma}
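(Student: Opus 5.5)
We prove Lemma~\ref{calc} by writing both sides in the computational basis and following the index contractions that $V_\pi$ induces.

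First we identify the matrix elements of $V_\pi$. By Definition~\ref{V_pi def}, $V_\pi \ket{e_{j_1}}\cdots\ket{e_{j_t}} = \ket{e_{j_{\pi(1)}}}\cdots\ket{e_{j_{\pi(t)}}}$. Hence
\begin{equation*}
\bra{e_{i_1}\cdots e_{i_t}} V_\pi \ket{e_{j_1}\cdots e_{j_t}} = \prod_{\ell=1}^t \delta_{i_\ell,\, j_{\pi(\ell)}}.
\end{equation*}
Inserting this into the trace gives
\begin{equation*}
\tr(A_1\otimes\cdots\otimes A_t V_\pi) = \sum_{\vec{i},\vec{j}} \prod_{\ell} (A_\ell)_{j_\ell i_\ell} \prod_\ell \delta_{i_\ell, j_{\pi(\ell)}} = \sum_{\vec{j}} \prod_{\ell=1}^t (A_\ell)_{j_\ell\, j_{\pi(\ell)}}.
\end{equation*}

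Next we decompose $\pi$ into disjoint cycles. Each factor $(A_\ell)_{j_\ell j_{\pi(\ell)}}$ couples index $\ell$ only to index $\pi(\ell)$, so the sum over $\vec{j}$ factorizes into independent sums, one per cycle. Take a cycle $(m, \pi(m), \dots, \pi^{p-1}(m))$ with $\pi^p(m)=m$. Its contribution is
\begin{equation*}
\sum (A_m)_{j_m j_{\pi(m)}} (A_{\pi(m)})_{j_{\pi(m)} j_{\pi^2(m)}} \cdots (A_{\pi^{p-1}(m)})_{j_{\pi^{p-1}(m)} j_m} = \tr(A_m A_{\pi(m)} \cdots A_{\pi^{p-1}(m)}).
\end{equation*}
By cyclicity of the trace this equals $\tr(A_{\pi(m)} A_{\pi^2(m)} \cdots A_{\pi^p(m)})$ with $\pi^p(m) = m$, which is exactly the form stated in the lemma. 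Taking the product over cycles gives the second formulation.

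The first formulation is the same statement with a specific choice of representatives: the condition $j \notin \{\pi^k(\ell) : \ell<j\}$ picks out $j$ exactly when it is the smallest element of its cycle. So each cycle is counted once, starting at its minimal element. Orientation needs some care here, since $V_\pi$ versus $V_{\pi^{-1}}$ conventions could reverse the order of the product. With the convention above, the order $A_m A_{\pi(m)}\cdots$ matches the lemma up to cyclic rotation, so no reversal arises. Checking this orientation is the only delicate point of the proof.

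Finally, the two special cases follow directly. If all $A_\ell = A$, a cycle of length $p$ contributes $\tr(A^p)$, which gives the product $\prod_{j=1}^t \tr(A^j)^{\#\text{cycles of length } j}$. Setting $A=\mathbbm{1}$ makes every cycle contribute $\tr(\mathbbm{1}) = d$, so $\tr(V_\pi) = d^{\#\text{cycles}}$.
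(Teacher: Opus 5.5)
Your proof is correct and follows the paper's argument. You expand in the computational basis to obtain $\sum_{\vec j}\prod_{\ell}(A_\ell)_{j_\ell j_{\pi(\ell)}}$, factorize the index sum along the cycles of $\pi$, identify each factor as the trace of the product around that cycle (up to cyclic rotation), and then read off the two special cases.
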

\begin{proof}
    Let $\pi$ consist of $k$ cycles of cycle lengths $(m_j)_j$ starting at $(\ell_j)_j$. For ease of notation denote $A_j$ with $A^{(j)}$ instead. Then
    \begin{equation*}
    \begin{split}
        &\tr(A^{(1)}\otimes...\otimes A^{(t)} V_\pi) = \sum_{i_1,...,i_t = 1}^d A^{(1)}_{i_1,i_{\pi(1)}}\cdot ... \cdot A^{(t)}_{i_ti_{\pi(t)}} \\
        &= \sum_{i_1,...,i_t = 1}^d A^{(1)}_{i_1,i_{\pi(1)}} A^{(\pi(1))}_{i_{\pi(1)}, i_{\pi^2(1)}}  A^{(\pi^2(1))}_{i_{\pi^2(1), \pi^3(1)}} \cdot ... \cdot A^{(\pi^{\ell_1-1}(1))}_{i_{\pi^{\ell_1-1}(1)},i_1} A^{(m_2)}_{i_{m_2},i_{\pi{m_2}}} \cdot ... \cdot A^{(\pi^{\ell_2-1}(m_2))}_{i_{\pi^{\ell_2 - 1}(m_2)},i_{m_2}} \cdot ... \cdot A^{(\pi^{\ell_k - 1}(m_k))}_{i_{\pi^{\ell_k - 1}(m_k)},i_{m_k}} \\
        &= \sum_{i_1,...,i_t = 1}^d \prod_{n=1}^k\prod_{j = 0}^{\ell_n-1} A^{(\pi^{j}(m_n))}_{i_{\pi^{j}(m_n)},i_{\pi^{j+1}(m_n)}} = \prod_{n=1}^k \left(\sum_{\tilde{i}_1,...,\tilde{i}_{\ell_n} = 1}^d \prod_{j = 0}^{\ell_n-1} A^{(\pi^{j}(m_n))}_{\tilde{i}_j,\tilde{i}_{j+1}}\right) \\
        &= \prod_{n = 1}^k \tr(A^{(m_n)}A^{(\pi(m_n))}A^{(\pi^2(m_n))} \cdot ... \cdot A^{(\pi^{\ell_n - 1}(m_n))}),
    \end{split}
    \end{equation*}

    The other statements are just reformulations of the one proven here. The last one uses the fact that $\tr(\mathbbm{1}^j) = \tr(\mathbbm{1}) = d$.

\end{proof}
Note that the Schur-Weyl duality implies $\tr(\mathbb{E}(A)V_{\pi}) = \sum\limits_{\pi' \in S_t} c_{\pi'} \tr(V_{\pi'\circ \pi})$ while the definition of $P$ implies 

$\tr(P(A)V_\pi) = \int_{U(d)} \tr( (U^\dagger)^{\otimes t} A U^{\otimes t} V_\pi) \dd U = \tr(A V_\pi)$, since $V_\pi$ commutes with $U^{\otimes t}$. 

Therefore, $ \sum\limits_{\pi' \in S_t} c_{\pi'} \tr(V_{\pi'\circ \pi}) \stackrel{!}{=}\tr(A V_\pi)$, where both traces can be computed using Lemma \ref{calc}. This leads to a linear system of equations, that $(c_\pi)_{\pi\in S_t}$ needs to satisfy.

\begin{proposition}\label{explicit}
    Let $t,d\in \mathbb{N}$. Let $\mathcal{O}\in L(\mathbb{C}^d)^{\otimes 2}_{\mathrm{herm.}}$ be written in its operator Schmidt decomposition $\mathcal{O} = \sum\limits_{j=1}^r s_j A_j \otimes B_j$ and let $M^{(t)}_d = (\tr(V_{\pi \circ \pi'}))_{\pi, \pi' \in S_t}$. Then any bipartite quantum state $\varrho \in L(\mathbb{C}^d)^{\otimes 2}_{\mathrm{herm.}}$ fulfills
    \begin{equation}\label{twirl eq}
        \mathcal{R}^{(t)}_{\mathcal{O}}(\varrho)=\sum_{\pi_A,\pi_B \in S_t} c_{\pi_A,\pi_B} \tr(\varrho^{\otimes t} V_{\pi_A} \otimes V_{\pi_B}).
    \end{equation}
    where $c_{\pi_A,\pi_B} = \sum\limits_{j_1,...,j_t = 1}^r \left(\prod\limits_{k=1}^t s_{j_k}\right) x^{(\,\vec{j} \, )}_{\pi_A} y^{(\,\vec{j}\,)}_{\pi_B}$ for some $x^{(\,\vec{j} \, )}, y^{(\,\vec{j}\,)} \in \mathbb{C}^{S_t}$ that satisfy 
    \begin{equation}\label{lin eq}
        M^{(t)}_d x^{(\,\vec{j} \, )} = (\tr(A_{j_1} \otimes ... \otimes A_{j_t} V_\pi))_{\pi\in S_t}, \quad M^{(t)}_d y^{(\,\vec{j} \, )} = (\tr(B_{j_1} \otimes ... \otimes B_{j_t} V_\pi))_\pi.
    \end{equation}
    If the operator Schmidt decomposition of $\mathcal{O}$ happens to be symmetric (i.e. $A_j = B_j$ for all $j=1,...,r$), then $c_{\pi_A,\pi_B} = c_{\pi_B,\pi_A}$.
\end{proposition}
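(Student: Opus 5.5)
We assemble Proposition~\ref{explicit} from the reduction already carried out in Eq.~\eqref{MultInt1} together with the Schur--Weyl characterization of $P$. We use $\mathcal{R}^{(t)}_{\mathcal{O}}(\varrho)=\tr(\varrho^{\otimes t}\Phi^{(t)}(\mathcal{O}))$, which holds because $\tr(M)^t=\tr(M^{\otimes t})$ and the integral commutes with the trace. We reorder the tensor factors of $(\mathbb{C}^d\otimes\mathbb{C}^d)^{\otimes t}$ so that all $A$-parties come first; this is the convention implicit in writing $V_{\pi_A}\otimes V_{\pi_B}$. Inserting the Schmidt decomposition into Eq.~\eqref{reduction} with $n=2$ then gives
\begin{equation*}
\Phi^{(t)}(\mathcal{O})=\sum_{\vec j}\Big(\prod_k s_{j_k}\Big)P\Big(\bigotimes_k A_{j_k}\Big)\otimes P\Big(\bigotimes_k B_{j_k}\Big).
\end{equation*}
Theorem~\ref{SW} lets us write $P(\bigotimes_k A_{j_k})=\sum_\pi x^{(\vec j)}_\pi V_\pi$, and similarly for $B$ with coefficients $y^{(\vec j)}_\pi$. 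Since Haar integration over $SU(d)$ and over $U(d)$ gives the same $P$ (global phases cancel in $U^{\dagger\otimes t}(\cdot)U^{\otimes t}$), the duality applies. Expanding the product and collecting the coefficient of each $V_{\pi_A}\otimes V_{\pi_B}$ yields exactly Eq.~\eqref{twirl eq} with the stated $c_{\pi_A,\pi_B}$.

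Next we derive the linear system Eq.~\eqref{lin eq}. For fixed $\vec j$, set $A=\bigotimes_k A_{j_k}$ and multiply $P(A)=\sum_{\pi'}x_{\pi'}V_{\pi'}$ by $V_\pi$, then take the trace. On the left, $V_\pi$ commutes with $U^{\otimes t}$, so cyclicity of the trace gives $\tr(P(A)V_\pi)=\tr(AV_\pi)$. On the right, $V_{\pi'}V_\pi=V_{\pi'\circ\pi}$ (or $V_{\pi\circ\pi'}$, depending on convention). The convention does not matter: $\tr(V_\sigma)=d^{\#\mathrm{cycles}(\sigma)}$ by Lemma~\ref{calc}, and $\pi\circ\pi'$ and $\pi'\circ\pi$ are conjugate, hence have the same cycle type. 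This produces $M^{(t)}_d x^{(\vec j)}=(\tr(AV_\pi))_\pi$, and the identical argument gives the $B$-system. The only care needed is this ordering convention for the product of permutation operators, and we resolve it by the conjugacy remark.

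The symmetric statement is then immediate. If $A_j=B_j$ for all $j$, the two linear systems in Eq.~\eqref{lin eq} coincide for each $\vec j$. We may therefore choose $y^{(\vec j)}=x^{(\vec j)}$ (if $M^{(t)}_d$ is singular for $d<t$, we fix one choice of solution and use it for both). With this choice, $c_{\pi_A,\pi_B}=\sum_{\vec j}\prod_k s_{j_k}\,x^{(\vec j)}_{\pi_A}x^{(\vec j)}_{\pi_B}$, which is manifestly symmetric in $(\pi_A,\pi_B)$.

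The one step requiring some care is this non-uniqueness of solutions when $d<t$: the $V_\pi$ are then linearly dependent, so the coefficients are not unique. The proposition only asserts existence of some $x,y$, so it suffices that a solution exists, and one does, since $P(A)$ lies in the span of the $V_\pi$ by Theorem~\ref{SW}.
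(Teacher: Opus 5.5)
Your proposal is correct and follows essentially the same route as the paper: reduce Eq.~\eqref{reduction} to Eq.~\eqref{MultInt1} via Schur--Weyl duality, then trace $P(A)=\sum_{\pi'}x_{\pi'}V_{\pi'}$ against $V_\pi$, using that $V_\pi$ commutes with $U^{\otimes t}$, to obtain the linear system. Your additional remarks correctly fill in details the paper leaves implicit: the $SU(d)$ versus $U(d)$ Haar integral, the conjugacy argument that makes the composition convention irrelevant, and the choice $y^{(\vec j)}=x^{(\vec j)}$ that makes $c_{\pi_A,\pi_B}$ symmetric.
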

Throughout the rest of the paper, $S_3$ shall be ordered as $( (),(12),(13),(23),(123),(132))$.
\begin{examp}\label{t=3 examp}
Using this ordering of $S_3$ one obtains
\begin{equation} 
    M^{(3)}_d\ =
    \left(\begin{array} {cccccc}
    d^3 & d^2 & d^2 & d^2 & d & d \\
    d^2 & d^3 & d & d & d^2 & d^2 \\
    d^2 & d & d^3 & d & d^2 & d^2 \\
    d^2 & d & d & d^3 & d^2 & d^2 \\
    d & d^2 & d^2 & d^2 & d & d^3 \\
    d & d^2 & d^2 & d^2 & d^3 & d \\
    \end{array} \right)
    \end{equation}
and 
\begin{equation}
    \tr(A_{j_1} \otimes ... \otimes A_{j_3} V_\pi)_{\pi\in S_t} = \begin{bmatrix}
    \,\tr(A_{j_1})\tr(A_{j_2})\tr(A_{j_3})\\
    \tr(A_{j_3})\delta_{j_1j_2}\\
    \tr(A_{j_2})\delta_{j_1j_3}\\
    \tr(A_{j_1})\delta_{j_2j_3}\\
    \tr(A_{j_1}A_{j_2}A_{j_3})\\
    \tr(A_{j_1}A_{j_3}A_{j_2})\\
    \end{bmatrix}.
\end{equation}
\end{examp}
Schur-Weyl duality only guarantees the existence of prefactors $(c_\pi)_{\pi\in S_t}$ that need to satisfy (\ref{lin eq}). If there is no unique solution to (\ref{lin eq}), more care has to be taken to determine which of the solutions corresponds to the actual prefactors. Furthermore, in low dimensions the prefactors often cannot be uniquely determined, since the set $\{V_\pi : \pi \in S_t\}$ can be linearly dependent (a trivial example is the $d=1, t=2$ case, where $V_{()} = V_{(12)}$). This means that the dimension of the solution space (i.e., the kernel of $M_d^{(t)}$) is at least as large as the number of independent symmetries. Luckily, in the situations that will be considered in this paper, the solution space can be shown to be precisely generated by the symmetries of $(V_\pi)_{\pi \in S_t}$, which implies that every solution of the linear equation corresponds to a set of correct prefactors $(c_\pi)_{\pi \in S_t}$. This is done by the following lemma, which is proven by explicit calculation.

\begin{lemma}\label{sym}
    Consider the set $\{V_\pi \in L(\mathbb{C}^d)^{\otimes t} : \pi \in S_t\}$ from Definition \ref{V_pi def}.
    For $d=2, t=3$, the following identity holds:
        \begin{equation}
        V_{(12)} + V_{(13)} + V_{(23)} - V_{(123)} - V_{(132)} = \mathbbm{1} = V_{()}.
    \end{equation}
    The kernel of $M^{(3)}_2$ (defined in Proposition \ref{explicit}) is one dimensional and spanned by $(1,-1,-1,-1,1,1)^{\T} $. This implies that any solution to (\ref{lin eq}) yields a correct solution for (\ref{twirl eq}). There always exists a unique set of prefactors  $(c_\pi)_{\pi \in S_t}$ satisfying (\ref{lin eq}) such that $c_{(132)} = 0$.

    Now let $d=2$ and $t=4$. $M_2^{(4)}$ is a $24 \times 24$ matrix with a $10$ dimensional kernel, which corresponds directly to the ten symmetries
        \begin{equation*}
        \begin{split}
        3V_{()}&-V_{(12)}-2V_{(13)}-V_{(14)}-V_{(23)}-2V_{(24)}-V_{(34)}-V_{(12)(34)}+V_{(13)(24)}-V_{(14)(23)}\\&+V_{(123)}+V_{(124)}+V_{(134)}+V_{(234)}+2V_{(1432)}=0 \\
        V_{()}&-V_{(12)}-V_{(14)}+V_{(23)}-V_{(34)}+V_{(12)(34)}-V_{(13)(24)}-V_{(14)(23)}-V_{(123)}+V_{(124)}+V_{(134)}-V_{(234)}+2V_{(1423)}=0 \\
        V_{()}&-V_{(12)}-V_{(14)}-V_{(23)}+V_{(34)}-V_{(12)(34)}-V_{(13)(24)}+V_{(14)(23)}+V_{(123)}+V_{(124)}-V_{(134)}-V_{(234)}+2V_{(1342)}=0 \\
        V_{()}&-V_{(12)}+V_{(14)}-V_{(23)}-V_{(34)}+V_{(12)(34)}-V_{(13)(24)}-V_{(14)(23)}+V_{(123)}-V_{(124)}-V_{(134)}+V_{(234)}+2V_{(1324)}=0 \\
        V_{()}&+V_{(12)}-V_{(14)}-V_{(23)}-V_{(34)}-V_{(12)(34)}-V_{(13)(24)}+V_{(14)(23)}-V_{(123)}-V_{(124)}+V_{(134)}+V_{(234)}+2V_{(1243)}=0 \\
        -V_{()}&+V_{(12)}+V_{(14)}+V_{(23)}+V_{(34)}-V_{(12)(34)}+V_{(13)(24)}-V_{(14)(23)}-V_{(123)}-V_{(124)}-V_{(134)}-V_{(234)}+2V_{(1234)}=0 \\
        V_{()}&-V_{(23)}-V_{(24)}-V_{(34)}+V_{(234)}+V_{(243)}=0 \\
        V_{()}&-V_{(13)}-V_{(14)}-V_{(34)}+V_{(134)}+V_{(143)}=0 \\
        V_{()}&-V_{(12)}-V_{(14)}-V_{(24)}+V_{(124)}+V_{(142)}=0 \\
        V_{()}&-V_{(12)}-V_{(13)}-V_{(23)}+V_{(123)}+V_{(132)}=0.
        \end{split}
    \end{equation*}
    These can be used to always set $c_{(132)},c_{(124)},c_{(142)} c_{(134)}, c_{(143)}, c_{(234)},c_{(243)},c_{(12)(34)},c_{(13)(24)}$ and $c_{(14)(23)}$ to zero which uniquely determines the other prefactors. Additionally, any solution to (\ref{lin eq}) yields a correct solution for (\ref{twirl eq}).

\end{lemma}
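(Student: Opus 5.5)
The plan is to read $M^{(t)}_d$ as a Gram-type matrix, so that its kernel is exactly the set of linear relations among the operators $V_\pi$. If $x\in\mathbb{C}^{S_t}$ satisfies $M^{(t)}_d x=0$, then for every $\pi$
\[
0=\sum_{\pi'}x_{\pi'}\tr(V_{\pi\circ\pi'})=\tr\Bigl(V_\pi \sum_{\pi'}x_{\pi'}V_{\pi'}\Bigr),
\]
using $V_{\pi}V_{\pi'}=V_{\pi\circ\pi'}$ (up to the order convention, which only relabels the index set). Set $X=\sum_{\pi'} x_{\pi'}V_{\pi'}$. Since $V_\pi^\dagger=V_{\pi^{-1}}$ ranges over the whole family, $X$ is Hilbert--Schmidt orthogonal to $\mathrm{span}\{V_\pi\}$, and $X$ lies in that span, so $X=0$. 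Conversely, any relation $\sum x_\pi V_\pi=0$ is clearly in the kernel.

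This yields the ``any solution is correct'' claims at once. The true prefactors $c$ from Theorem~\ref{SW} solve (\ref{lin eq}), because $\tr(P(A)V_\pi)=\tr(AV_\pi)$. Any other solution differs from $c$ by a kernel vector, so it produces the same operator $\sum c_\pi V_\pi=P(A)$ and hence the same value in (\ref{twirl eq}).

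Next I would compute the kernel dimensions without brute force. The rank of $M^{(t)}_d$ equals $\dim\mathrm{span}\{V_\pi\}$. By Schur--Weyl duality (Theorem~\ref{SW}) this is the dimension of the commutant, namely $\sum_\lambda (\dim S^\lambda)^2$ over partitions $\lambda\vdash t$ with at most $d$ rows.
\begin{itemize}
\item For $d=2$, $t=3$ the partitions are $(3)$ and $(2,1)$, giving rank $1+4=5$ and a kernel of dimension $6-5=1$.
\item For $d=2$, $t=4$ the partitions are $(4)$, $(3,1)$ and $(2,2)$, giving rank $1+9+4=14$ and a kernel of dimension $24-14=10$.
\end{itemize}
For $t=3$, the relation is the vanishing of the antisymmetrizer $\sum_\pi \mathrm{sgn}(\pi)V_\pi$ on $(\mathbb{C}^2)^{\otimes 3}$, since $\Lambda^3\mathbb{C}^2=0$. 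With identity and $3$-cycles even and transpositions odd, this is exactly the stated identity, and it spans the kernel with vector $(1,-1,-1,-1,1,1)^{\T}$. Its $(132)$-entry is nonzero, so adding a suitable multiple of it normalizes $c_{(132)}=0$ uniquely.

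For $t=4$, the last four listed relations are the antisymmetrizers on the four $3$-element subsets of $\{1,2,3,4\}$ (tensored with the identity on the remaining factor), so they vanish for the same reason. The first six lie in the two-sided ideal of the group algebra generated by a $3$-antisymmetrizer; each is obtained by multiplying such an antisymmetrizer by a suitable transposition and combining with the others. I would verify these six most safely by evaluating each on a basis of $(\mathbb{C}^2)^{\otimes 4}$ (a finite check), or equivalently by checking that the coefficient vector is annihilated by $M^{(4)}_2$.

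Linear independence follows from a pivot structure. Each of the first six relations contains a distinct $4$-cycle with coefficient $2$ that appears in no other relation. Each of the last four contains its own pair of inverse $3$-cycles not appearing in the other three. Since the dimension count gives exactly $10$, these relations span the kernel.

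Finally, for uniqueness after zeroing the ten listed coefficients $c_{(132)},\dots,c_{(14)(23)}$, I need the $10\times10$ submatrix of the kernel basis restricted to those coordinates to be invertible. That makes this coordinate subspace complementary to the kernel. This is the step I expect to be the main obstacle, since it requires a concrete determinant computation rather than a structural argument. I would first try to change the kernel basis by row reduction so that it becomes triangular on the chosen coordinates, and otherwise compute the determinant directly.
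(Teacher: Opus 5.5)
Your proposal is correct. It takes a genuinely different and more structural route than the paper, which gives no argument beyond attributing the lemma to explicit calculation.

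\textbf{What your route buys.} Your central observation is that $M^{(t)}_d$ is, up to the row relabeling $\pi\mapsto\pi^{-1}$, the Hilbert--Schmidt Gram matrix of $\{V_\pi\}$. Hence its kernel is exactly the space of linear relations among the $V_\pi$. This gives the claim that any solution of (\ref{lin eq}) yields the correct twirl for every $d$ and $t$ at once. The paper only asserts it case by case and warns that ``more care has to be taken'' in general; your argument shows that care is never needed. The other structural steps also hold up:
\begin{itemize}
\item The Schur--Weyl dimension count ($5$ and $14$) replaces rank computations for $M_2^{(3)}$ and $M_2^{(4)}$.
\item The vanishing antisymmetrizer explains the $t=3$ identity and the last four $t=4$ relations.
\item Your pivot argument (distinct $4$-cycles for the first six relations, disjoint pairs of inverse $3$-cycles for the last four) correctly gives linear independence, hence a basis of the kernel.
\end{itemize}

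\textbf{What remains explicit.} Your route does not eliminate the finite check that the six relations containing $4$-cycles really vanish. That is the same kind of explicit computation the paper relies on. Your remark that they arise from antisymmetrizers times transpositions is only heuristic, but you rightly fall back on the direct check. It is cheap because each $V_\pi$ preserves Hamming weight and commutes with a global bit flip, so weight-$0$, $1$, $2$ inputs suffice.

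\textbf{The step you flagged as the main obstacle is easy with your own pivots.}
\begin{itemize}
\item The zeroed coordinates include $(132),(142),(143),(243)$. Each occurs only in the last four relations, and in exactly one of them. So the $10\times 10$ coefficient matrix is block triangular with an identity block.
\item What remains is the $6\times 6$ block of the first six relations on the coordinates $(124),(134),(234),(12)(34),(13)(24),(14)(23)$.
\item Add and subtract each relation and the one containing the inverse $4$-cycle; in the listed order these are the pairs $\{1,6\},\{2,4\},\{3,5\}$.
\item The differences are supported on $(124),(134),(234)$ with rows $2(1,1,1)$, $2(1,1,-1)$, $2(1,-1,-1)$.
\item The sums are supported on the three double transpositions with rows $2(-1,1,-1)$, $2(1,-1,-1)$, $2(-1,-1,1)$.
\item Both $3\times3$ sign matrices have determinant $\pm 4\neq 0$.
\end{itemize}
So the block is invertible, the coordinate subspace is complementary to the kernel, and the normalization is unique, as claimed.
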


\begin{examp}
    Choosing $x^{( \, \vec{j} \, )}_{(132)} \equiv 0 $ in the two dimensional bipartite case modifies the linear equation from Example \ref{t=3 examp} to be
    \begin{equation}\label{t=3 mat}
    \left(\begin{array} {cccccc}
    8 & 4 & 4 & 4 & 2  \\[6pt]
    4 & 8 & 2 & 2 & 4  \\[6pt]
    4 & 2 & 8 & 2 & 4  \\[6pt]
    4 & 2 & 2 & 8 & 4  \\[6pt]
    2 & 4 & 4 & 4 & 2  \\[6pt]
    \end{array} \right) \begin{bmatrix}
    x^{(\, \vec{j} \, )}_{()}\\
    x^{(\,\vec{j} \, )}_{(12)}\\
    x^{(\,\vec{j} \, )}_{(13)}\\
    x^{(\,\vec{j} \, )}_{(23)}\\
    x^{(\,\vec{j} \, )}_{(123)}\\
    \end{bmatrix} = \begin{bmatrix}
    \,\tr(A_{j_1})\tr(A_{j_2})\tr(A_{j_3})\\[6pt]
    \tr(A_{j_3})\delta_{j_1j_2}\\[6pt]
    \tr(A_{j_2})\delta_{j_1j_3}\\[6pt]
    \tr(A_{j_1})\delta_{j_2j_3}\\[6pt]
    \tr(A_{j_1}A_{j_2}A_{j_3})\\[6pt]
    \end{bmatrix}
    \end{equation}
    which can be inverted.
\end{examp}
Proposition \ref{explicit} additionally requires an explicit form for $\tr(\varrho^{\otimes 3}V_\pi)$. At least for the case of bipartite quantum states, they can be explicitly calculated using the Bloch representation
\begin{equation}\label{density matrix}
    \varrho = \frac{1}{4} \left[ \mathbbm{1} \otimes \mathbbm{1} + \sum_{k=1}^3 \alpha_k \sigma_k \otimes \mathbbm{1} + \sum_{k=1}^3 \beta_k \mathbbm{1} \otimes \sigma_k + \sum_{j,k=1}^3 T_{jk} \sigma_j \otimes \sigma_k \right],
\end{equation}
 (here $\vec{\alpha},\vec{\beta} \in \mathbb{R}^3, T\in \mathbb{R}^{3 \times 3}$ need to be chosen such that $\varrho \geq 0$) and applying Lemma \ref{calc}.

Direct calculation via Lemma \ref{calc} and the fact that for any matrices $A,B$ it holds that $\tr(A\otimes B) = \tr(A) \tr(B)$ yields:
\begin{lemma}\label{t3d2 system}
Let $\varrho$ be a bipartite quantum state in its Bloch representation (\ref{density matrix}) and consider $V_\pi \in L(\mathbb{C}^2)^{\otimes 3}$ from Definition \ref{V_pi def} (this corresponds to $d=2, t=3$). For $\pi,\pi' \in S_3$ let $V_{\pi,\pi'} = V_\pi \otimes V_{\pi'}$. Then
\begin{equation}\label{trace matrix}
    \begin{bmatrix}
        \tr(\varrho^{\otimes 3} V_{(),()}) \\
        \tr(\varrho^{\otimes 3} V_{(),(ij)}) \\
        \tr(\varrho^{\otimes 3} V_{(ij),()}) \\
        \tr(\varrho^{\otimes 3} V_{(),(123)}) \\
        \tr(\varrho^{\otimes 3} V_{(123),()}) \\
        \tr(\varrho^{\otimes 3} V_{(ij),(ij)}) \\
        \tr(\varrho^{\otimes 3} V_{(ij),(ik) \, j \neq k}) \\
        \tr(\varrho^{\otimes 3} V_{(ij),(123)}) \\
        \tr(\varrho^{\otimes 3} V_{(123),(ij)}) \\
        \tr(\varrho^{\otimes 3} V_{(123),(123)}) \\
        \end{bmatrix} = \frac{1}{16}
    \left(\begin{array} {ccccccc} 
    16 & 0 & 0 & 0 & 0 & 0 \\
    8 & 0 & 8 & 0 & 0 & 0 \\
    8 & 8 & 0 & 0 & 0 & 0  \\
    4 & 0 & 12 & 0 & 0 & 0 \\
    4 & 12 & 0 & 0 & 0 & 0 \\
    4 & 4 & 4 & 4 & 0 & 0 \\
    4 & 4 & 4 & 0 & 4 & 0 \\
    2 & 2 & 6 & 2 & 4 & 0 \\
    2 & 6 & 2 & 2 & 4 & 0 \\
    1 & 3 & 3 & 3 & 6 & -6 \\
    \end{array} \right)\begin{bmatrix}  
    1\\
    |\vec{\alpha}|^2\\
    |\vec{\beta}|^2\\
    \tr(T^{\T} T)\\
    \langle \vec{\alpha}, T \vec{\beta} \rangle \\
    \det(T)\\
    \end{bmatrix}.
\end{equation}

If on chooses an $\mathcal{O} \in L(\mathbb{C}^2)^{\otimes 2}_{\mathrm{herm.}}$ in Proposition \ref{explicit} with a symmetric operator Schmidt decomposition, then due to the symmetry of $c_{\alpha,\beta}$ and Equation (\ref{trace matrix}), $\mathcal{R}_{\mathcal{O}}^{(t)}(\varrho)$ can always be written as a linear combination of $f(\varrho) \in \mathbb{C}^7$ given by

\begin{equation}
    \underbrace{\left[\begin{array}{cl}
        &\tr(\varrho^{\otimes 3} V_{(),()}) \\
        \sum\limits_{\substack{i,j=1 \\ i<j}}^3 &\tr(\varrho^{\otimes 3} (V_{(),(ij)}+V_{(ij),()}))\\
        &\tr(\varrho^{\otimes 3} (V_{(),(123)}+V_{(123),()}))\\
        &\tr(\varrho^{\otimes 3} (V_{(11),(11)}+V_{(22),(22)}+V_{(33),(33)})) \\
        \sum\limits_{\substack{i < j, i < k, j \neq k}} &\tr(\varrho^{\otimes 3} V_{(ij),(ik)}) \\
        \sum\limits_{\substack{i,j=1 \\ j> i}}^3 &\tr(\varrho^{\otimes 3} (V_{(ij),(123)}+V_{(123),(ij)}))) \\
        &\tr(\varrho^{\otimes 3} V_{(123),(123)}) \end{array}\right]}_{\equiv f(\varrho)} = \underbrace{\frac{1}{16}
    \left(\begin{array} {ccccccc} 
    16 & 0 & 0 & 0 & 0 \\
    48 & 24 & 0 & 0 & 0 \\
    8 & 12 & 0 & 0 & 0 \\
    12 & 12 & 12 & 0 & 0 \\
    24 & 24 & 0 & 24 & 0 \\
    12 & 24 & 12 & 24 & 0 \\
    1 & 3 & 3 & 6 & -6 \\
    \end{array} \right)}_{\equiv D} \underbrace{\begin{bmatrix*}[c]
    1\\
    |\vec{\alpha}|^2+|\vec{\beta}|^2\\
    \tr(T^{\T} T)\\
    \langle \vec{\alpha}, T \vec{\beta} \rangle \\
    \det(T)\\
    \end{bmatrix*}}_{\equiv \tilde{I}(\varrho)}.
\end{equation}

\end{lemma}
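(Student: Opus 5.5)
The identity \eqref{trace matrix} is a bookkeeping exercise, and I would prove it by a direct Pauli expansion. Write $\varrho = \frac14\sum_{\mu,\nu=0}^3 R_{\mu\nu}\,\sigma_\mu\otimes\sigma_\nu$ with $\sigma_0=\mathbbm{1}$, $R_{00}=1$, $R_{k0}=\alpha_k$, $R_{0k}=\beta_k$, $R_{jk}=T_{jk}$. Reordering the tensor factors so that the $A$-parts and the $B$-parts are grouped gives
\[
\tr(\varrho^{\otimes 3}V_\pi\otimes V_{\pi'}) = \frac{1}{64}\sum_{\mu_i,\nu_i} \prod_{i=1}^3 R_{\mu_i\nu_i}\, \tr(\sigma_{\mu_1}\otimes\sigma_{\mu_2}\otimes\sigma_{\mu_3}V_\pi)\,\tr(\sigma_{\nu_1}\otimes\sigma_{\nu_2}\otimes\sigma_{\nu_3}V_{\pi'}).
\]
Each single-party factor is evaluated with Lemma~\ref{calc}.

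The needed single-party traces are few:
\begin{itemize}
\item $\pi=()$: $\tr\sigma_{\mu_1}\tr\sigma_{\mu_2}\tr\sigma_{\mu_3}=8\,\delta_{\mu_10}\delta_{\mu_20}\delta_{\mu_30}$.
\item $\pi=(ij)$: $\tr(\sigma_{\mu_i}\sigma_{\mu_j})\tr\sigma_{\mu_k}=4\,\delta_{\mu_i\mu_j}\delta_{\mu_k0}$.
\item $\pi=(123)$: $\tr(\sigma_{\mu_a}\sigma_{\mu_b}\sigma_{\mu_c})$, in the order fixed by Lemma~\ref{calc}. This equals $2$ if all indices vanish, $2\delta_{ab}$-type terms if exactly one index is $0$ and the other two coincide, and $2i\epsilon_{abc}$ if all three indices are nonzero. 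It vanishes otherwise.
\end{itemize}
Inserting these cuts the $4^6$-term sum down to a handful of index patterns, which I would then contract case by case.

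\textbf{Transposition and identity cases.} Here the constraints force $R$-products such as $R_{00}^2\sum_\mu R_{\mu0}^2=1+|\vec\alpha|^2$, or $\sum_{\mu\nu}R_{\mu\nu}^2 = 1+|\vec\alpha|^2+|\vec\beta|^2+\tr(T^{\T}T)$, each multiplied by $R_{00}=1$. For example, $V_{(ij),(ij)}$ gives $\frac{16}{64}(1+|\vec\alpha|^2+|\vec\beta|^2+\tr T^{\T}T)$, which is the row $(4,4,4,4,0,0)/16$. The mixed rows $V_{(ij),(ik)}$ produce $\sum R_{\mu0}R_{\mu\nu}R_{0\nu}$. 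This yields $1+|\vec\alpha|^2+|\vec\beta|^2+\langle\vec\alpha,T\vec\beta\rangle$, because the cross terms with $\mu=0$ or $\nu=0$ reproduce the $\alpha$ and $\beta$ norms.

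\textbf{Cyclic cases.} For the rows containing $(123)$, the $\epsilon$-contributions cancel whenever the other party carries $()$ or a transposition. The reason is that the other factor forces a symmetric pairing of the indices, so the antisymmetric $\epsilon$ contracts to zero. The remaining terms give the stated coefficients, e.g. $(2,2,6,2,4,0)/16$.

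\textbf{Main obstacle.} The hardest row is $V_{(123),(123)}$. Here the $\epsilon\epsilon$ term $(2i)^2\sum\epsilon_{abc}\epsilon_{def}T_{ad}T_{be}T_{cf}=-4\cdot 6\det T$ produces the $-6\det T$ entry. The mixed terms, where one party has a zero index and the other has all indices nonzero, have to be checked to vanish by antisymmetry. Getting the signs and the cycle orientation right (Lemma~\ref{calc} fixes the order as $A_{\pi(j)}\cdots A_j$) is the delicate part, because reversing the orientation flips the sign of $\det T$. I would verify this sign independently on the Bell state, where $T=\mathrm{diag}(1,-1,1)$ and $\varrho$ is pure.

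\textbf{Second display.} This follows by summing rows. The symmetry $c_{\pi_A,\pi_B}=c_{\pi_B,\pi_A}$ from Proposition~\ref{explicit} means only symmetrized combinations occur. In those combinations $|\vec\alpha|^2$ and $|\vec\beta|^2$ enter only through their sum, and the multiplicities are 3 transpositions and 6 ordered pairs $(ij),(ik)$, or the corresponding count. Multiplying out gives $D$.
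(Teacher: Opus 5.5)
Your proposal is correct and follows the same route as the paper, which justifies the lemma only as a direct calculation with Lemma~\ref{calc} and $\tr(A\otimes B)=\tr(A)\tr(B)$. Your Pauli-index bookkeeping (the $\epsilon$-cancellations against symmetric index pairings, and the multiplicities $3$ and $6$ used to build $D$) carries that calculation out and reproduces every row. Two minor remarks: in the $V_{(123),(123)}$ row the orientation is not actually delicate, since both parties carry the same cycle and the $\epsilon\epsilon$ term comes with $(\pm 2i)^2=-4$ under either convention; and the reduction to $f(\varrho)$ tacitly uses the gauge $c_{(132)}=0$ from Lemma~\ref{sym}, together with the fact that traces within each class of permutation pairs coincide, not the symmetry of $c$ alone.
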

\section{Obtaining a $\det(T)$ dependence}\label{app:detT}
Lemma \ref{t3d2 system} implies that any randomized measurement using (at most) third moments depends on the determinant if and only the prefactor in front of $\tr(\varrho^{\otimes 3} V_{(123),(123)})$ is nonzero. According to Proposition \ref{explicit}, $c_{(123),(123)} = \sum\limits_{j_1,j_2,j_3=1}^r s_{j_1} s_{j_2} s_{j_3} x_{(123)}^{( \, \vec{j} \, )} y_{(123)}^{( \, \vec{j} \, )}$ where $x_{(123)}^{( \, \vec{j} \,)}$ is obtained by inverting the matrix from Example \ref{t=3 examp}, leading to
\begin{equation}\label{TrA}
    x_{(123)}^{( \, \vec{j} \,)} =-2 \tr(A_{j_1})\tr(A_{j_2})\tr(A_{j_3}) + 2 \big[\tr(A_{j_1}) \delta_{j_2j_3} + \tr(A_{j_2}) \delta_{j_3j_1}+ \tr(A_{j_3}) \delta_{j_1j_2}\big] - 4 \tr(A_{j_1}A_{j_2}A_{j_3}).
\end{equation}
This is the key observation leading to:

\begin{theorem}\label{r>2}
    Consider a bipartite state $\varrho$ in its Bloch representation (\ref{density matrix}). There is no observable $\mathcal{O} \in  L(\mathbb{C}^2)^{\otimes 2}_{\mathrm{herm.}}$ of Schmidt rank $r \leq 2$, such that $\mathcal{R}^{(t)}_{\mathcal{O}}(\varrho)$ depends on $\det(T)$ for $t\leq 3$. Thus $I_1 = \det(T)$ is an invariant of type 3.
\end{theorem}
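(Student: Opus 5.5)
The plan is to reduce everything to the single coefficient $c_{(123),(123)}$ and show that it vanishes identically when $r\leq 2$.

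\emph{Moments $t\leq 2$.} Here $\mathcal{R}^{(t)}_{\mathcal{O}}(\varrho)=\tr(\varrho^{\otimes t}\Phi^{(t)}(\mathcal{O}))$ is a polynomial of degree at most $t$ in the Bloch data $(\vec\alpha,\vec\beta,T)$. The invariant $\det(T)$ is homogeneous of degree $3$, so it cannot appear, and these moments are settled immediately.

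\emph{Moment $t=3$: isolating $\det(T)$.} I use Proposition~\ref{explicit} together with the gauge from Lemma~\ref{sym}, in which $x^{(\vec j)}_{(132)}=y^{(\vec j)}_{(132)}=0$. Lemma~\ref{sym} guarantees that any solution of (\ref{lin eq}) is a correct one, so this choice loses nothing. In this gauge only $V_\pi$ with $\pi\neq(132)$ occur on each side. By Lemma~\ref{t3d2 system}, the only trace $\tr(\varrho^{\otimes 3}V_{\pi_A,\pi_B})$ that carries $\det(T)$ is $\tr(\varrho^{\otimes 3}V_{(123),(123)})$, with weight $-6/16$. Hence $\mathcal{R}^{(3)}_{\mathcal{O}}$ depends on $\det(T)$ if and only if
\[
c_{(123),(123)}=\sum_{j_1,j_2,j_3=1}^r s_{j_1}s_{j_2}s_{j_3}\,x^{(\vec j)}_{(123)}\,y^{(\vec j)}_{(123)}\neq 0 ,
\]
where $x^{(\vec j)}_{(123)}$ is given by (\ref{TrA}) and $y^{(\vec j)}_{(123)}$ by the same formula with the $B$'s.

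\emph{Simplifying (\ref{TrA}).} Write each Schmidt operator in Bloch form, $A_j=a_j\mathbbm{1}+\vec a_j\cdot\vec\sigma$ with $a_j\in\mathbb{R}$ and $\vec a_j\in\mathbb{R}^3$. Then
\[
\tr A_j=2a_j,\qquad \tr(A_jA_k)=2\big(a_ja_k+\langle\vec a_j,\vec a_k\rangle\big)=\delta_{jk},
\]
and
\[
\tr(A_1A_2A_3)=2\big(a_1a_2a_3+a_1\langle\vec a_2,\vec a_3\rangle+a_2\langle\vec a_1,\vec a_3\rangle+a_3\langle\vec a_1,\vec a_2\rangle\big)+2i\,\langle\vec a_1,\vec a_2\times\vec a_3\rangle .
\]
Using orthonormality, replace $\langle\vec a_j,\vec a_k\rangle$ by $\delta_{jk}/2-a_ja_k$ in the real part. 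The real part of $-4\tr(A_{j_1}A_{j_2}A_{j_3})$ then becomes $16a_{j_1}a_{j_2}a_{j_3}-4\sum a\,\delta$. This cancels exactly against the remaining terms of (\ref{TrA}):
\[
-2\tr(A_{j_1})\tr(A_{j_2})\tr(A_{j_3})=-16a_{j_1}a_{j_2}a_{j_3},\qquad 2\sum\tr(A)\,\delta=4\sum a\,\delta .
\]
This leaves the key identity
\[
x^{(\vec j)}_{(123)}=-8i\,\det(\vec a_{j_1},\vec a_{j_2},\vec a_{j_3}),
\]
and analogously for $y^{(\vec j)}_{(123)}$ with the Bloch vectors $\vec b_j$ of the $B_j$. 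As a sanity check, for $r=1$ and $A=\mathbbm{1}/\sqrt2$ the formula (\ref{TrA}) indeed gives $0$.

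\emph{Conclusion.} Therefore
\[
c_{(123),(123)}=-64\sum_{\vec j}s_{j_1}s_{j_2}s_{j_3}\det(\vec a_{j_1},\vec a_{j_2},\vec a_{j_3})\det(\vec b_{j_1},\vec b_{j_2},\vec b_{j_3}).
\]
If $r\leq 2$, every triple $(j_1,j_2,j_3)$ with entries in $\{1,\dots,r\}$ has a repeated index. Each determinant then has two equal columns and vanishes, so $c_{(123),(123)}=0$ and $\mathcal{R}^{(3)}_{\mathcal{O}}$ is independent of $\det(T)$.

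\emph{Type of $I_1$.} Combined with the rank-$3$ observable $\mathcal{O}_{\det}$ from Ref.~\cite{Wyderka}, which does measure $\det(T)$, this shows that $I_1$ is of type $3$.

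The main obstacle is the cancellation of the real, permutation-symmetric part of (\ref{TrA}). It holds only because of the orthonormality of the Schmidt operators, and it must be checked carefully, including the normalization conventions for $\tr(A_jA_k)$. A secondary point is justifying that the gauge choice $x_{(132)}=0$ does not hide a $\det(T)$ contribution; this is exactly what Lemma~\ref{sym} provides.
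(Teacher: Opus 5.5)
Your proof is correct, but it obtains the key vanishing of $x^{(\vec j)}_{(123)}$ by a different route than the paper.

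\textbf{The paper's route.} The paper's proof of this theorem splits into cases by the index pattern of $\vec j$: either all three Schmidt operators are equal, or two are equal and one differs. In each case it checks the cancellation in the eigenbasis of $A$, using $\lambda_1^2+\lambda_2^2=1$ in the first case and $\lambda_1B_{11}+\lambda_2B_{22}=0$ in the second.

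\textbf{Your route.} You expand every Schmidt operator in Bloch form and use only $\tr(A_jA_k)=\delta_{jk}$, including $j=k$. This cancels the entire real, permutation-symmetric part of Eq.~(\ref{TrA}) for every index triple at once, leaving $x^{(\vec j)}_{(123)}\propto\det(\vec a_{j_1},\vec a_{j_2},\vec a_{j_3})$. The case $r\le 2$ then follows immediately from a repeated column.

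\textbf{Comparison.} The paper does carry out this Bloch-vector computation, but only later, in the proof of Theorem~\ref{class0}, and only for pairwise distinct indices. There it imports the repeated-index vanishing from the eigenvalue case analysis. Your single identity therefore proves the present theorem and also the classification of Corollary~\ref{class}, via $c_{(123),(123)}\propto\sum_{\vec j}s_{j_1}s_{j_2}s_{j_3}\det(\vec a_{j_1},\vec a_{j_2},\vec a_{j_3})\det(\vec b_{j_1},\vec b_{j_2},\vec b_{j_3})$ and Cauchy--Binet. It also makes transparent that the obstruction is simply that fewer than three Bloch vectors cannot span $\mathbb{R}^3$. The paper's eigenvalue argument is more hands-on, but it needs two separate computations to cover both statements.

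\textbf{Normalization.} Eq.~(\ref{TrA}) as printed is missing an overall factor $1/12$; compare the inverse matrix in Eq.~(\ref{3x3new}) and the coefficients $-\tfrac16,-\tfrac13$ used in the proof of Theorem~\ref{class0}. Your constant $-8i$ should therefore read $-2i/3$, which has no effect on the vanishing argument.
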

\begin{proof}
    For $t=1,2$, the determinant cannot appear since it is a polynomial in third powers of $T_{ij}$ (and thus in third powers of $\varrho_{ij}$). In general, polynomials in $k$-th powers of entries of $\varrho$ can only appear for $t\geq k$.
    \\
    
    For $t=3$, it only remains to prove that $c_{(123),(123)}$ is always $0$ for $r\in\{1,2\}$. For this, it is sufficient to show that $x_{(123)}^{( \, \vec{j} \, )}$ always vanishes.
    Since each entry of $\vec{j}$ is either $1$ or $2$, there are now two cases to consider:
    
    (1) $A_{j_1} = A_{j_2} = A_{j_3} \equiv A$ or (2) two $A_{j_\ell}$ are equal $(\equiv A)$ and one is not $(\equiv B)$.
    
    In the first case, Eq.~\eqref{TrA} becomes (written in the eigenvalues $\lambda_1,\lambda_2$ of $A$):
    \begin{equation}\label{Tr(A^3)}
    \begin{split}
            x_{(123)}^{( \, \vec{j} \, )} = -2\tr(A)^3+6 \tr(A)- 4 \tr(A^3) &= -2(\lambda_1 + \lambda_2)^3 + 6 (\lambda_1+ \lambda_2) - 4 (\lambda_1^3 + \lambda_2^3) \\
            &= -6(\lambda_1^2+\lambda_2^2)(\lambda_1+\lambda_2) + 6 (\lambda_1+\lambda_2) = 0,
    \end{split}
    \end{equation}
    since $\lambda_1^2 + \lambda_2^2 = \tr(A^2) = 1$.
    
    In the second case Equation (\ref{TrA}) becomes:
    \begin{equation*}
        x_{(123)}^{( \, \vec{j} \, )} = -2 \tr(A)^2 \tr(B) + 2 \tr(B) - 4 \tr(A^2 B).
    \end{equation*}
    Since the trace is invariant under choice of a basis, taking the trace in $A$'s eigenbasis leads to:
    \begin{equation*}
        x_{(123)}^{( \, \vec{j} \, )} = 2 \tr(B) (1-(\lambda_1+\lambda_2)^2) - 4 \tr(\left(\begin{array} {cc}
        \lambda_1^2 & 0\\
        0 & \lambda_2^2\\
        \end{array} \right)B) =  2 \tr(B) (2\lambda_1\lambda_2) - 4 (\lambda_1^2 B_{11} + \lambda_2^2 B_{22}).
    \end{equation*}
    Orthonormality provides $0 = \tr(AB) = \lambda_1 B_{11} + \lambda_2 B_{22}$, therefore $\lambda_1^2 B_{11} = -\lambda_1\lambda_2 B_{22}$ and $\lambda_2^2 B_{22} = -\lambda_1\lambda_2 B_{11}$ and thus
    \begin{equation*}
        x_{(123)}^{( \, \vec{j} \, )} =  -4 (B_{11}+B_{22}) \lambda_1 \lambda_2 - 4 (\lambda_1^2 B_{11} + \lambda_2^2 B_{22}) = -4 (B_{11}+B_{22}) \lambda_1 \lambda_2 -4 \cdot (-1) \lambda_1 \lambda_2 (B_{22} + B_{11}) = 0.
    \end{equation*}
\end{proof}

\begin{theorem}[Classification of all $r=3$ observables, where a measurement of third moments includes the determinant]\label{class0}
    Let ${\mathcal{O} = \sum\limits_{j=1}^3 s_j A_j \otimes B_j\in L(\mathbb{C}^2)^{\otimes 2}_{\mathrm{herm.}}}$ be an arbitrary observable of operator Schmidt rank $3$ in its operator Schmidt decomposition. Then $\mathcal{R}^{(3)}_{\mathcal{O}}(\varrho)$ is an affine linear combination of the invariants $\{|\vec{\alpha}|^2,|\vec{\beta}|^2,\tr(T^{\T} T),\langle\vec{\alpha},T\vec{\beta}\rangle,\det(T)\}$.
    Furthermore, consider the orthogonal projection mapping
    \begin{equation}
        \hat{P} : \mathbb{C}^{2\times2}_{\mathrm{herm.}} \twoheadrightarrow i \mathfrak{su}(2) = \{A \in L\left(\mathbb{C}^2\right)_{\mathrm{herm.}}  | \tr(A) = 0\} = \mathrm{span}_{\mathbb{R}}(\{\sigma_1,\sigma_2,\sigma_3\}), \quad A\mapsto \frac{1}{2} \sum\limits_{j=1}^3 \tr(A \sigma_j) \sigma_j.
    \end{equation}
    Then the prefactor in front of $\det(T)$ is $\neq 0$ if and only if 
    \begin{equation}
        \mathrm{span}(\hat{P}(A_1),\hat{P}(A_2),\hat{P}(A_3)) = \mathrm{span}(\hat{P}(B_1),\hat{P}(B_2),\hat{P}(B_3)) = i \mathfrak{su}(2).
    \end{equation}
    Consider the matrix $(M_{j,k})_{j,k=1,2,3} := \tr(\hat{P}(A_j)\hat{P}(B_k))$.
    Then the prefactor in front of $\det(T)$ is given by 
    \begin{equation}
         \frac{s_1 s_2 s_3}{8} \det(M).
    \end{equation}
\end{theorem}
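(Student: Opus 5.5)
The plan has three parts: reduce to the single coefficient $c_{(123),(123)}$, show this coefficient factorises into two determinants, and read off both claims of Theorem~\ref{class0} from that factorisation.

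\textbf{Reduction.} By Proposition~\ref{explicit} and Lemma~\ref{sym}, I will fix the gauge $x^{(\vec j)}_{(132)}=y^{(\vec j)}_{(132)}=0$, so that $\mathcal{R}^{(3)}_{\mathcal{O}}(\varrho)=\sum c_{\pi_A,\pi_B}\tr(\varrho^{\otimes 3}V_{\pi_A,\pi_B})$ runs only over the five remaining permutations. Each trace $\tr(\varrho^{\otimes 3}V_{\pi_A,\pi_B})$ is evaluated by Lemma~\ref{t3d2 system}. In the non-symmetric case I still use the full ten-row table (\ref{trace matrix}), including its rows with $(ij)$ and $(ik)$ interchanged, which by relabeling have the same form. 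This immediately shows that $\mathcal{R}^{(3)}_{\mathcal{O}}$ is an affine combination of $1,|\vec\alpha|^2,|\vec\beta|^2,\tr(T^{\T}T),\langle\vec\alpha,T\vec\beta\rangle,\det(T)$, which is the first claim. In that table only the row $V_{(123),(123)}$ carries $\det(T)$, with coefficient $-6/16$. Hence the prefactor of $\det(T)$ is $-\tfrac38 c_{(123),(123)}$.

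\textbf{Factorisation.} Next I rewrite $x^{(\vec j)}_{(123)}$ from (\ref{TrA}) in Bloch form. Write $A_j=\tfrac12(a_j^0\mathbbm{1}+\vec a_j\cdot\vec\sigma)$, so that $\hat P(A_j)=\tfrac12\vec a_j\cdot\vec\sigma$. Then use
\begin{equation*}
\tr(A_{j_1}A_{j_2}A_{j_3})=\tfrac18\bigl[a^0_{j_1}a^0_{j_2}a^0_{j_3}+a^0_{j_1}\langle\vec a_{j_2},\vec a_{j_3}\rangle+a^0_{j_2}\langle\vec a_{j_1},\vec a_{j_3}\rangle+a^0_{j_3}\langle\vec a_{j_1},\vec a_{j_2}\rangle+i\,\vec a_{j_1}\cdot(\vec a_{j_2}\times\vec a_{j_3})\bigr]
\end{equation*}
together with orthonormality, $\tfrac12(a^0_ja^0_k+\langle\vec a_j,\vec a_k\rangle)=\delta_{jk}$. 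With these substitutions, all terms involving $a^0$ should cancel against the $\delta$ terms, in the same way they did in the proof of Theorem~\ref{r>2}. What should remain is $x^{(\vec j)}_{(123)}=-\tfrac{i}{2}\det(\vec a_{j_1},\vec a_{j_2},\vec a_{j_3})$, and analogously for $y^{(\vec j)}_{(123)}$ with the vectors $\vec b_j$.

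Verifying this cancellation is the main obstacle. I would check it by expanding symbolically, using the identity $a^0_{j_1}a^0_{j_2}=2\delta_{j_1j_2}-\langle\vec a_{j_1},\vec a_{j_2}\rangle$. A sanity check is that it reproduces Theorem~\ref{r>2}, since the triple product vanishes whenever two indices coincide or only two vectors are available.

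\textbf{Conclusion.} With this form, $c_{(123),(123)}=-\tfrac14\sum_{\vec j}s_{j_1}s_{j_2}s_{j_3}\det(\vec a_{j_\cdot})\det(\vec b_{j_\cdot})$. Only permutations $\vec j$ of $(1,2,3)$ contribute, and each gives $s_1s_2s_3\det(\vec a_1,\vec a_2,\vec a_3)\det(\vec b_1,\vec b_2,\vec b_3)$, because the two sign changes under reordering cancel. Therefore
\begin{equation*}
c_{(123),(123)}=-\tfrac64\, s_1s_2s_3\det(\vec a)\det(\vec b)=-\tfrac32\, s_1s_2s_3\det(\vec a^{\T}\vec b).
\end{equation*}
Since $\tr(\hat P(A_j)\hat P(B_k))=\tfrac12\langle\vec a_j,\vec b_k\rangle$, we have $\det M=\tfrac18\det(\vec a^{\T}\vec b)$. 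The constants therefore combine to the prefactor $\tfrac{s_1s_2s_3}{8}\det M$ up to normalisation bookkeeping, which I would recheck at this final step. Finally, $\det M\neq0$ if and only if both triples $\hat P(A_j)$ and $\hat P(B_k)$ are bases of $i\mathfrak{su}(2)$, which is the spanning criterion in the statement.
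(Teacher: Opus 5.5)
Your structure is the paper's: reduce the $\det(T)$-coefficient to $-\tfrac38 c_{(123),(123)}$ via Lemma~\ref{t3d2 system}, Bloch-expand $\tr(A_{j_1}A_{j_2}A_{j_3})$ so that $x^{(\vec j)}_{(123)}$ becomes a triple product, and convert $\det(\vec a)\det(\vec b)$ into $\det M$. Applying $a^0_ja^0_k=2\delta_{jk}-\langle\vec a_j,\vec a_k\rangle$ to \emph{all} $\vec j$ at once is a small improvement. It makes the eigenvalue case analysis of Theorem~\ref{r>2} unnecessary; the paper needs that analysis to discard repeated indices. However, the exact prefactor $\tfrac{s_1s_2s_3}{8}\det M$ is part of the statement, and your proposal does not establish it. The cancellation you assert fails with the formula you display, and your constants give the wrong value.

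Two normalisations are off.

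First, with $A_j=\tfrac12(a^0_j\mathbbm{1}+\vec a_j\cdot\vec\sigma)$ the triple trace carries $\tfrac14$, not $\tfrac18$, because $\tr\mathbbm{1}=2$, $\tr(\sigma_j\sigma_k)=2\delta_{jk}$ and $\tr(\sigma_j\sigma_k\sigma_\ell)=2i\varepsilon_{jk\ell}$. With your $\tfrac18$, inserting into \eqref{TrA} does not cancel the $a^0$-terms. For distinct indices you would get $x^{(1,2,3)}_{(123)}=-a^0_1a^0_2a^0_3-\tfrac i2\det(\vec a_1,\vec a_2,\vec a_3)$.

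Second, Eq.~\eqref{TrA} as printed lacks an overall factor $\tfrac1{12}$. The inverse of the matrix in (\ref{t=3 mat}) carries this factor, as shown in (\ref{3x3new}). The paper's own proof of this theorem uses $x^{(1,2,3)}_{(123)}=-\tfrac16\tr(A_1)\tr(A_2)\tr(A_3)-\tfrac13\tr(A_1A_2A_3)$, which includes it.

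With both corrections, your uniform identity gives
\[
12\,x^{(\vec j)}_{(123)}=-i\det(\vec a_{j_1},\vec a_{j_2},\vec a_{j_3})\quad\text{for all }\vec j .
\]
Hence $c_{(123),(123)}=-\tfrac1{24}s_1s_2s_3\det(\vec a)\det(\vec b)$. The prefactor is then $-\tfrac38 c_{(123),(123)}=\tfrac1{64}s_1s_2s_3\det(\vec a)\det(\vec b)=\tfrac{s_1s_2s_3}{8}\det M$, using your correct relation $\det M=\tfrac18\det(\vec a)\det(\vec b)$. As a cross-check, for $\mathcal{O}=\sum_j\sigma_j\otimes\sigma_j$ this gives $c_{(123),(123)}=-\tfrac83$ and prefactor $1$, matching Proposition~\ref{det op}.

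Your stated values $x=-\tfrac i2\det$ and $c_{(123),(123)}=-\tfrac32 s_1s_2s_3\det(\vec a^{\T}\vec b)$ instead give $\tfrac92 s_1s_2s_3\det M$, off by a factor of $36$. So the ``normalisation bookkeeping'' you defer is exactly where the quantitative content of the theorem lies. The affine-combination claim and the span criterion are fine as you have them.
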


\begin{proof}
    Due to Lemma \ref{t3d2 system}, the only term that can contribute to $\det(T)$ in $\tr(\varrho^{\otimes 3} \Phi^{(t)}(\mathcal{O}))$ is given by $\tr(\varrho^{\otimes 3} V_{(123),(123)}) = ... - \frac{3}{8} \det(T)$. Therefore, for an arbitrary operator $\mathcal{O}$, the prefactor in front of $\det(T)$ is given by 
    \begin{equation*}
        p \equiv -\frac{3}{8} \cdot c_{(123),(123)} = - \frac{3}{8} \sum_{j_1,j_2,j_3=1}^r s_{j_1}s_{j_2}s_{j_3} x_{(123)}^{( \, \vec{j} \, )} y_{(123)}^{( \, \vec{j} \,)}.
    \end{equation*}
    From the proof of Theorem \ref{r>2}, it is known that $x_{(123)}^{( \, \vec{j} \, )} = 0$ as soon as two entries in $\vec{j}$ coincide. Therefore,
    \begin{equation*}
        p = - \frac{3}{8} \sum_{j_1,j_2,j_3=1}^3 s_{j_1}s_{j_2}s_{j_3} x_{(123)}^{( \, \vec{j} \, )} y_{(123)}^{( \, \vec{j} \,)} = - \frac{3}{8} s_1 s_2 s_3 \sum_{\pi \in S_3} x_{(123)}^{(\pi(1),\pi(2),\pi(3))} y_{(123)}^{(\pi(1),\pi(2),\pi(3))}.
    \end{equation*}
    An expansion of $A_j \in \mathbb{C}_{\mathrm{herm.}}^{2\times 2}$ into the orthonormal basis $\left\{\frac{\mathbbm{1}}{\sqrt{2}},\frac{\sigma_x}{\sqrt{2}},\frac{\sigma_y}{\sqrt{2}},\frac{\sigma_z}{\sqrt{2}}\right\}$ is given by
    \begin{equation*}
        A_j = \frac{\tr(A_j)}{\sqrt{2}} \frac{\mathbbm{1}}{\sqrt{2}} + \sum\limits_{k=1}^3 \alpha_k^{(j)} \frac{\sigma_k}{\sqrt{2}}.
    \end{equation*}
    The orthogonality of $A_j$ and $A_k$ for $j \neq k$ leads to
    \begin{equation}
        0 = \tr(A_jA_k) = \frac{\tr(A_j)\tr(A_k)}{2} + \langle \overrightarrow{\alpha^{(j)}} , \overrightarrow{\alpha^{(k)}} \rangle.
    \end{equation}
    Equation (\ref{TrA}) becomes
    \begin{equation*}
        x_{(123)}^{(1,2,3)} = - \frac{1}{6} \tr(A_1) \tr(A_2) \tr(A_3) - \frac{1}{3} \tr(A_1A_2A_3).
    \end{equation*}
    The second term can be calculated explicitly:
    \begin{equation*}
    \begin{split}
        \tr(A_1A_2A_3) =& \tr(\left(\frac{\tr(A_1)}{\sqrt{2}} \frac{\mathbbm{1}}{\sqrt{2}} +\overrightarrow{\alpha^{(1)}} \cdot \frac{\vec{\sigma}}{\sqrt{2}} \right) \left(\frac{\tr(A_2)}{\sqrt{2}} \frac{\mathbbm{1}}{\sqrt{2}} +\overrightarrow{\alpha^{(2)}} \cdot \frac{\vec{\sigma}}{\sqrt{2}} \right)\left(\frac{\tr(A_3)}{\sqrt{2}} \frac{\mathbbm{1}}{\sqrt{2}} +\overrightarrow{\alpha^{(3)}} \cdot \frac{\vec{\sigma}}{\sqrt{2}} \right)) \\
        =& \frac{\tr(A_1) \tr(A_2) \tr(A_3)}{\sqrt{2}^3}\cdot \frac{\tr(\mathbbm{1}^3)}{\sqrt{2}^3} + \frac{\tr(A_1)}{2} \langle \overrightarrow{\alpha^{(2)}}, \overrightarrow{\alpha^{(3)}} \rangle + \frac{\tr(A_2)}{2} \langle \overrightarrow{\alpha^{(1)}}, \overrightarrow{\alpha^{(3)}} \rangle \\
        &+ \frac{\tr(A_3)}{2} \langle \overrightarrow{\alpha^{(1)}}, \overrightarrow{\alpha^{(2)}} \rangle + \sum_{j,k,\ell=1}^3 \alpha^{(1)}_j\alpha^{(2)}_k\alpha^{(3)}_\ell \frac{1}{\sqrt{2}^3} \tr(\sigma_j \sigma_k \sigma_\ell).
    \end{split}
    \end{equation*}

    Notice that 
    \begin{equation*}
        \sum_{j,k,\ell=1}^3 \alpha^{(1)}_j\alpha^{(2)}_k\alpha^{(3)}_\ell \tr(\sigma_j \sigma_k \sigma_\ell) = 2i \sum_{j,k,\ell=1}^3 \alpha^{(1)}_j\alpha^{(2)}_k\alpha^{(3)}_\ell \frac{1}{\sqrt{2}^3} \varepsilon_{jk\ell} = 2i \det(\overrightarrow{\alpha^{(1)}},\overrightarrow{\alpha^{(2)}},\overrightarrow{\alpha^{(3)}}),
    \end{equation*}
    and that by orthogonality $\frac{\tr(A_1)}{2} \langle \overrightarrow{\alpha^{(2)}}, \overrightarrow{\alpha^{(3)}} \rangle = - \frac{\tr(A_1) \tr(A_2)\tr(A_3)}{4}$.
    Thus
    \begin{equation*}
    \begin{split}
        &\tr(A_1A_2A_3) = \frac{\tr(A_1)\tr(A_2)\tr(A_3)}{4} - 3\cdot \frac{\tr(A_1)\tr(A_2)\tr(A_3)}{4} + \frac{i}{\sqrt{2}} \det(\overrightarrow{\alpha^{(1)}},\overrightarrow{\alpha^{(2)}},\overrightarrow{\alpha^{(3)}})\\
        &\Rightarrow x_{(123)}^{(1,2,3)} = - \frac{1}{6} \tr(A_1)\tr(A_2)\tr(A_3) - \frac{1}{3} \left(-\frac{1}{2}\right) \tr(A_1)\tr(A_2)\tr(A_3) - \frac{i}{3\sqrt{2}} \det(\overrightarrow{\alpha^{(1)}},\overrightarrow{\alpha^{(2)}},\overrightarrow{\alpha^{(3)}})\\
        & \qquad  \qquad \; = - \frac{i}{3\sqrt{2}} \det(\overrightarrow{\alpha^{(1)}},\overrightarrow{\alpha^{(2)}},\overrightarrow{\alpha^{(3)}}).
    \end{split}
    \end{equation*}
    Therefore,
    \begin{equation*}
    \begin{split}
        p &= - \frac{3}{8} s_1 s_2 s_3 \sum_{\pi \in S_3} x_{(123)}^{(\pi(1),\pi(2),\pi(3))} y_{(123)}^{(\pi(1),\pi(2),\pi(3))} \\
        &= - \frac{3}{8} s_1 s_2 s_3 \left( \frac{i}{3\sqrt{2}} \right)^2 \sum_{\pi \in S_3} \det(\overrightarrow{\alpha^{(\pi(1))}},\overrightarrow{\alpha^{(\pi(2))}},\overrightarrow{\alpha^{(\pi(3))}})\det(\overrightarrow{\beta^{(\pi(1))}},\overrightarrow{\beta^{(\pi(2))}},\overrightarrow{\beta^{(\pi(3))}}) \\
        &= \frac{s_1s_2s_3}{8\cdot 3 \cdot 2}\cdot 3! \det(\overrightarrow{\alpha^{(1)}},\overrightarrow{\alpha^{(2)}},\overrightarrow{\alpha^{(3)}})\det(\overrightarrow{\beta^{(1)}},\overrightarrow{\beta^{(2)}},\overrightarrow{\beta^{(3)}}).
    \end{split}
    \end{equation*}
    Notice that this is precisely $\neq 0$ if $\det(\overrightarrow{\alpha^{(1)}},\overrightarrow{\alpha^{(2)}},\overrightarrow{\alpha^{(3)}})$ and $\det(\overrightarrow{\beta^{(1)}},\overrightarrow{\beta^{(2)}},\overrightarrow{\beta^{(3)}})$ are $\neq 0$. Since these are the expansion coefficients into an orthonormal basis of $i\mathfrak{su}(2)$ this is equivalent to 
    
    $\mathrm{span}(\hat{P}(A_1),\hat{P}(A_2),\hat{P}(A_3)) = \mathrm{span}(\hat{P}(B_1),\hat{P}(B_2),\hat{P}(B_3)) = i \mathfrak{su}(2)$.
    Using $\det(A) = \det(A^{\T} )$ and 
    
    $\det(A^{\T} )\det(B) = \det(A^{\T} B) = \det(\langle \vec{a}^{(i)}, \vec{b}^{(j)} \rangle)$ leads to 
    \begin{equation}\label{Matrixprod}
        k = \frac{s_1s_2s_3}{8} \det(\left( \overrightarrow{\alpha^{(1)}},\overrightarrow{\alpha^{(2)}},\overrightarrow{\alpha^{(3)}}\right)^{\T} ) \det(\overrightarrow{\beta^{(1)}},\overrightarrow{\beta^{(2)}},\overrightarrow{\beta^{(3)}}) = \frac{s_1s_2s_3}{8} \det( \left(\langle \alpha^{(j)}, \beta^{(k)} \rangle \right)_{j,k})
    \end{equation}
    and $\tr(\hat{P}(A_j)\hat{P}(B_k)) = \tr(\left(\vec{\alpha^{(j)}} \cdot \frac{\vec{\sigma}}{\sqrt{2}} \right) \left( \vec{\beta^{(k)}} \cdot \frac{\vec{\sigma}}{\sqrt{2}}\right)) = \langle \alpha^{(j)}, \beta^{(k)} \rangle$.
\end{proof}

\begin{examp}\label{pauli examp}
    Let $\mathcal{O} = \sum\limits_{j=1}^3 \sigma_j \otimes \sigma_j$. Then  $A_j = B_j = \hat{P}(A_j) = \hat{P}(B_j) = \frac{\sigma_j}{\sqrt{2}}$ and $s_j = 2$. This leads to 

    $M_{jk} = \tr(\frac{\sigma_j}{\sqrt{2}}\frac{\sigma_k}{\sqrt{2}}) = \delta_{ik}$ and thus $\det(M) = 1$ and $\frac{s_1 s_2 s_3}{8} \det(M) = \frac{2^3}{8} \cdot 1 = 1$. Therefore, $\mathcal{R}^{(3)}_{\mathcal{O}} (\varrho) = ... + \det(T)$. 
\end{examp}

\begin{corollary}[Classification of all observables, where a measurement of third moments includes the determinant]\label{class}
    Let ${\mathcal{O} = \sum\limits_{j=1}^r s_j A_j \otimes B_j\in L(\mathbb{C}^2)^{\otimes 2}_{\mathrm{herm.}}}$ be an arbitrary observable in its operator Schmidt decomposition. Then $\mathcal{R}_{\mathcal{O}}^{(3)}(\varrho)$ is an affine linear combination of the invariants $\{|\vec{\alpha}|^2,|\vec{\beta}|^2,\tr(T^{\T} T),\langle\vec{\alpha},T\vec{\beta}\rangle,\det(T)\}$.
    Furthermore, let $\tilde{P} : \mathbb{C}^{2\times2}_{\mathrm{herm.}} \twoheadrightarrow i \mathfrak{su}(2) \cong \mathbb{R}^3$ be the orthogonal projection mapping given by $A\mapsto \left(\tr(A \frac{\sigma_j}{\sqrt{2}})\right)_j$.
    Let 
    \begin{equation}
        M_A := \left(\sqrt{s_1}\tilde{P}(A_1),...,\sqrt{s_r} \tilde{P}(A_r)\right) \in \mathbb{R}^{3\times r}
    \end{equation}
    and define $M_B$ analogously.
    Then the prefactor in front of $\det(T)$ in $\mathcal{R}^{(3)}_{\mathcal{O}}(\varrho)$ is given by $\frac{\det(M_A M_B^{\T} )}{8}$. It is non-zero if and only if $M_AM_B^{\T} $ is invertible.
\end{corollary}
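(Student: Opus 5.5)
The plan is to reuse the computation from the proof of Theorem \ref{class0} verbatim, with the sum over index triples now ranging over $\{1,\dots,r\}^3$ instead of $\{1,2,3\}^3$. The affine-linearity statement follows directly from Proposition \ref{explicit} together with Lemma \ref{t3d2 system}. Indeed, $\mathcal{R}^{(3)}_{\mathcal{O}}(\varrho)$ is a linear combination of the traces $\tr(\varrho^{\otimes 3}V_{\pi_A,\pi_B})$, and by Lemma \ref{sym} any solution of (\ref{lin eq}) gives correct prefactors. Each such trace is an affine combination of $|\vec\alpha|^2,|\vec\beta|^2,\tr(T^{\T}T),\langle\vec\alpha,T\vec\beta\rangle,\det(T)$ by (\ref{trace matrix}); the entries with $\pi_A=(132)$ are avoided by the normalization $x_{(132)}=0$, and $V_{(ij),(ik)}$ are covered as listed.

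For the prefactor, only $\tr(\varrho^{\otimes3}V_{(123),(123)})$ carries $\det(T)$, with coefficient $-3/8$. Hence
\[
p=-\tfrac38\sum_{j_1,j_2,j_3=1}^r s_{j_1}s_{j_2}s_{j_3}\,x^{(\vec j)}_{(123)}y^{(\vec j)}_{(123)} .
\]
The proof of Theorem \ref{r>2} shows that $x^{(\vec j)}_{(123)}=0$ whenever two indices coincide, and only orthonormality of the $A_j$ was used there. So only triples of distinct indices survive.

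For distinct $j_1,j_2,j_3$, the calculation in the proof of Theorem \ref{class0} applies to any orthonormal triple. It gives
\[
x^{(\vec j)}_{(123)}=-\tfrac{i}{3\sqrt2}\det\bigl(\tilde P(A_{j_1}),\tilde P(A_{j_2}),\tilde P(A_{j_3})\bigr),
\]
and analogously for $y$ with the $B$'s. Substituting, I get
\[
p=\tfrac{1}{48}\sum_{\text{distinct } j_1,j_2,j_3}\det\bigl(\sqrt{s_{j_\ell}}\,\tilde P(A_{j_\ell})\bigr)_{\ell}\,\det\bigl(\sqrt{s_{j_\ell}}\,\tilde P(B_{j_\ell})\bigr)_{\ell}.
\]
The determinants vanish when indices repeat, so the sum may be extended to all ordered triples. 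Since the summand is symmetric under permuting the triple, the ordered sum equals $3!$ times the sum over $3$-subsets $S\subseteq\{1,\dots,r\}$. That gives
\[
p=\tfrac18\sum_{|S|=3}\det\bigl((M_A)_S\bigr)\det\bigl((M_B)_S\bigr),
\]
where $(M_A)_S$ denotes the $3\times3$ submatrix of columns in $S$.

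The key step is then the Cauchy–Binet formula: $\det(M_AM_B^{\T})=\sum_{|S|=3}\det((M_A)_S)\det((M_B)_S)$, which yields $p=\det(M_AM_B^{\T})/8$. The nonvanishing criterion is then immediate, since $M_AM_B^{\T}$ is a $3\times3$ matrix, and for $r<3$ it is automatically singular, consistent with Theorem \ref{r>2}. The main point needing care is the bookkeeping of the $3!$ and the $-3/8\cdot(i/(3\sqrt2))^2=1/48$ factors. I would check this against Theorem \ref{class0}, which is the case $r=3$, and against Example \ref{pauli examp}. The case $r=4$ needs no new input, because the vanishing for repeated indices was proven for arbitrary orthonormal $A$'s.
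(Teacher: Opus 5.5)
Your proposal is correct and follows essentially the same route as the paper. Both arguments use the vanishing of $x^{(\vec j)}_{(123)}$ for repeated indices, the formula $x^{(\vec j)}_{(123)}=-\tfrac{i}{3\sqrt2}\det(\cdot)$ from the proof of Theorem~\ref{class0} for distinct orthonormal triples, and Cauchy--Binet. The only difference is that you treat all $r$ uniformly, with explicit bookkeeping $-\tfrac38\cdot\bigl(\tfrac{-i}{3\sqrt2}\bigr)^2\cdot 3!=\tfrac18$, whereas the paper splits into cases: it uses a rank argument for $r\le 2$, Eq.~\eqref{Matrixprod} for $r=3$, and invokes Cauchy--Binet only for $r=4$.
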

\begin{proof}
    For $r=1,2$, the prefactor is always zero, since $\mathrm{rank}(M_B) \leq 2$ and thus the $3\times 3$ matrix $M_AM_B^{\T} $ cannot be invertible. And indeed, according to Theorem \ref{r>2}, that is the relevant result here.
    
    The $r=3$ case is similar to Theorem~\ref{class0} but slightly reframed. More precisely, it follows from Eq.~\eqref{Matrixprod} by distributing the $s_i$ equally into both matrices. Notice that the remaining $r=4$ case can be discussed in the same way as the $r=3$ case. This leads to the prefactor
    \begin{equation*}
        p = \frac{1}{8} \sum_{\substack{(j,k,\ell) \in \\ \{(1,2,3),(1,2,4),\\(1,3,4),(2,3,4)\}}} s_is_js_{\ell} \det(\overrightarrow{\alpha^{(j)}},\overrightarrow{\alpha^{(k)}},\overrightarrow{\alpha^{(\ell)}}) \det(\overrightarrow{\beta^{(j)}},\overrightarrow{\beta^{(k)}},\overrightarrow{\beta^{(\ell)}}).
    \end{equation*}
    The claim now follows from the Cauchy–Binet formula.
\end{proof}

This statement fully classifies all observables that have a dependence on the determinant. Since any $t=3$ measurement can otherwise only include type 1 invariants, this means that $\det(T)$ is precisely obtainable through these observables up to a type 1 measurement.

\section{Measuring only $\det(T)$}\label{app:onlydetT}
\begin{proposition}\label{det op} Let $\mathcal{O} = \sum\limits_{j=1}^3 \sigma_j \otimes \sigma_j$ be the observable from Example \ref{pauli examp} and let $\varrho$ be an arbitrary bipartite qubit state. Then $\mathcal{R}^{(3)}_{\mathcal{O}}(\varrho) = \det(T)$.
\end{proposition}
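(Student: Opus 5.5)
The plan is to avoid computing the Schur–Weyl coefficients in full. I will combine the structural statement already established in Corollary~\ref{class} with a homogeneity argument that removes every term except the determinant.

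First, by Corollary~\ref{class}, $\mathcal{R}^{(3)}_{\mathcal{O}}(\varrho)$ is an affine linear combination
\begin{equation*}
\mathcal{R}^{(3)}_{\mathcal{O}}(\varrho) = c_0 + c_1|\vec{\alpha}|^2 + c_2|\vec{\beta}|^2 + c_3\tr(T^{\T}T) + c_4\langle\vec{\alpha},T\vec{\beta}\rangle + c_5\det(T).
\end{equation*}
Example~\ref{pauli examp} already gives $c_5 = \frac{s_1s_2s_3}{8}\det(M) = 1$. This identity comes from the Schur–Weyl expansion in Proposition~\ref{explicit} together with Lemma~\ref{t3d2 system}. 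Both are polynomial identities in the Bloch parameters, so it holds for every Hermitian unit-trace $\varrho$ written as in Eq.~\eqref{density matrix}, not only for positive ones. It therefore suffices to show $c_0=\dots=c_4=0$.

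To do this I compute $\langle \mathcal{O}\rangle_{U\varrho U^\dagger}$ directly for $U=U_1\otimes U_2$. Conjugation by $U_k$ acts on the Pauli vector by a rotation $R_k\in SO(3)$. The identity and local Bloch terms of $\varrho$ give zero, because every term of $\mathcal{O}$ is traceless on each factor. Hence
\begin{equation*}
\langle\mathcal{O}\rangle_{U\varrho U^\dagger} = \sum_{j}\tr(U\varrho U^\dagger\,\sigma_j\otimes\sigma_j) = \tr(R_1 T R_2^{\T}),
\end{equation*}
which depends only on $T$ and is linear in it. Consequently $\mathcal{R}^{(3)}_{\mathcal{O}}(\varrho)=\int \tr(R_1TR_2^{\T})^3\,\dd\mu$. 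This is independent of $\vec\alpha,\vec\beta$ and homogeneous of degree three in the entries of $T$.

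Comparing with the affine expansion then finishes the argument. Setting $T=0$ (as a Hermitian Bloch operator, for arbitrary $\vec\alpha,\vec\beta$) gives $0=c_0+c_1|\vec\alpha|^2+c_2|\vec\beta|^2$ for all $\vec\alpha,\vec\beta$, hence $c_0=c_1=c_2=0$. Independence of $\vec\alpha,\vec\beta$ at generic $T$ forces $c_4=0$. Finally, the scaling $T\mapsto\lambda T$ yields $\lambda^2 c_3\tr(T^{\T}T)+\lambda^3\det(T)=\lambda^3\big(c_3\tr(T^{\T}T)+\det(T)\big)$ for all $\lambda$, so $c_3=0$. Thus $\mathcal{R}^{(3)}_{\mathcal{O}}(\varrho)=\det(T)$.

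The main point needing care is the extension of Corollary~\ref{class} and Example~\ref{pauli examp} from states to arbitrary Hermitian Bloch operators. If that extension were in doubt, I would instead restrict to states near the maximally mixed state. Such states form an open set in Bloch-parameter space, and the comparisons above use only polynomial identities that hold on this open set, so the same coefficient comparisons go through.
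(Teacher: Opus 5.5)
Your proof is correct, but it takes a different route from the paper's.

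\textbf{What the paper does.} It never invokes Corollary~\ref{class} here; it computes the twirl itself. It expands $\mathcal{O}^{\otimes 3}$ into Pauli strings. It then shows that the strings $\sigma_j^{\otimes 6}$ and $\sigma_j^{\otimes 4}\otimes\sigma_k^{\otimes 2}$ (and their reorderings) twirl to zero, because their traces against every $V_\pi$ vanish (Lemma~\ref{sym}). Finally it solves the $6\times 6$ system for the strings containing three distinct Paulis. This gives the closed form $\Phi^{(3)}(\mathcal{O})=-\frac{2}{3}\bigl(V_{(123)}-V_{(132)}\bigr)^{\otimes 2}$. Tracing against $\varrho^{\otimes 3}$ kills every Bloch term containing an identity and leaves $\det(T)$ by the Leibniz formula.

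\textbf{What you do.} You import the determinant coefficient from Example~\ref{pauli examp}, i.e.\ from Theorem~\ref{class0}. You then remove the other five coefficients with one observation: $\langle\mathcal{O}\rangle_{U\varrho U^\dagger}=\tr(R_1TR_2^{\T})$ is linear in $T$ and blind to $\vec\alpha,\vec\beta$, so the moment is a homogeneous cubic in $T$ alone. Your coefficient comparison is sound, whether justified by polynomial extension or on the open set of states near $\mathbbm{1}/4$.

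\textbf{What each route buys.} Yours is shorter and generalizes at once. For any $\mathcal{O}=\sum_j s_jA_j\otimes B_j$ with traceless $A_j,B_j$ it gives exactly $\mathcal{R}^{(3)}_{\mathcal{O}}=\frac{1}{8}\det(M_AM_B^{\T})\det(T)$. This contains Proposition~\ref{class 1} without the $SU(2)$-conjugation step. The paper's computation buys two things: the explicit twirled operator, and independence from the normalization in Theorem~\ref{class0}. The latter is worth having, since Eq.~\eqref{TrA} as printed lacks the overall factor $1/12$ that the proof of Theorem~\ref{class0} actually uses.

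You can get that independence too, because your elimination of $c_0,\dots,c_4$ never used the value of $c_5$. Once the moment is known to be $c_5\det(T)$, evaluate at $T=\mathbbm{1}$: $c_5=\int_{SO(3)}\tr(R)^3\,\dd R=1$, which is the multiplicity of the trivial representation in $(\mathbb{R}^3)^{\otimes 3}$.
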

\begin{proof}
    \begin{equation*}
    \begin{split}
         \Phi^{(3)}\left(\mathcal{O}\right)& = \iint (U_A^\dagger \otimes U_B^\dagger)^{\otimes 3} \left(\sum\limits_{j=1}^3 \sigma_j \otimes \sigma_j\right)^{\otimes 3} (U_A \otimes U_B)^{\otimes 3} \, \dd U_A \dd U_B \\
        &=\iint (U_A^\dagger \otimes U_B^\dagger)^{\otimes 3} \Biggl( \sum_{j=1}^3 \sigma_j^{\otimes 6} + \sum_{\substack{j,k=1 \\ k \neq j}}^3 \left[ \sigma_j^{\otimes 4}\otimes \sigma_k^{\otimes 2} + \sigma_k^{\otimes 2} \sigma_j^{\otimes 2} \sigma_k^{\otimes 2} + \sigma_k^{\otimes 2}\sigma_j^{\otimes 4} \right] \\
        & \qquad \qquad\qquad \qquad \quad \;+\sum_{\pi \in S_3} \sigma_{\pi(1)}^{\otimes 2} \otimes \sigma_{\pi(2)}^{\otimes 2} \otimes \sigma_{\pi(3)}^{\otimes 2}\Biggr)  (U_A \otimes U_B)^{\otimes 3}  \, \dd U_A \dd U_B.
    \end{split}
    \end{equation*}
    
    Three kinds of terms appear in the integral. Concerning the first kind:
    \begin{equation}
        \iint (U_A^\dagger \otimes U_B^\dagger)^{\otimes 3} \sigma_j^{\otimes 6}(U_A \otimes U_B)^{\otimes 3} \, \dd U_A \dd U_B = \left(\int U^{\dagger \, \otimes 3} \sigma_j^{\otimes 3} U^{\otimes 3} \dd U \right)^{\otimes 2} \equiv \mathcal{N}^{\otimes 2}_1,
    \end{equation}
    with $\mathcal{N}_1 = \int U^{\dagger \, \otimes 3} \sigma_j^{\otimes 3} U^{\otimes 3} \dd U = \sum\limits_{\pi \in S_3} c_\pi V_\pi$.
    
    We have that $\tr(\mathcal{N}_1 V_\pi) = \tr(\sigma_j^{\otimes 3} V_\pi) = 0$ for all $\pi \in S_t$, since ${\tr(\sigma_j) = \tr(\sigma_j^3) = 0}$ and only the terms  $\tr(\sigma_j)^3, \tr(\sigma_j)\tr(\sigma_j^2)$ and $\tr(\sigma_j^3)$ can appear via Lemma \ref{calc}. Lemma \ref{sym} now implies $\mathcal{N}_1 = 0$.

    For the second kind, we have
    \begin{equation}
        \mathcal{N}_2 = \int U^{\dagger \, \otimes 3} \left( \sigma_j^{\otimes 2} \otimes \sigma_k \right) U^{\otimes 3} \dd U,
    \end{equation}
    and $\tr(\mathcal{N}_2 V_\pi) = 0$ for all $\pi \in S_3$, since only the terms 
    \begin{equation*}
        \tr(\sigma_j^2) \tr(\sigma_k), \tr(\sigma_j)^2 \tr(\sigma_k), \tr(\sigma_j\sigma_k) \tr(\sigma_j), \tr(\sigma_j \sigma_k \sigma_j)=\tr(\sigma_j^2 \sigma_k) = \tr(\sigma_k)
    \end{equation*}
    can appear and are all zero, again implying $\mathcal{N}_2 = 0$. All other permutations in the second sum yield the same result.

    For the third kind, we find for $\pi \in S_t$ that
    \begin{equation}
        \mathcal{N}_3 = \int U^{\dagger \, \otimes 3} \left(\sigma_{\pi(1)} \otimes \sigma_{\pi(2)} \otimes \sigma_{\pi(3)} \right) U^{\otimes 3} \dd U.
    \end{equation}
    Note that $\tr(\mathcal{N}_3 \mathbbm{1}) = \tr(\sigma_x) \tr(\sigma_y) \tr(\sigma_z) = 0$ and $\tr(\mathcal{N}_3 V_{(jk)}) = \tr(\sigma_j \sigma_k) \tr(\sigma_\ell) = 0$ (where $\{j,k,\ell\} = \{1,2,3\}$).
    Therefore, only the $3$-cycles will induce a non-zero trace:

    \begin{equation*}
    \begin{split}
    \tr(\mathcal{N} V_{(123)}) &= \tr(\sigma_{\pi(3)} \sigma_{\pi(1)} \sigma_{\pi(2)})\\
    &= 
        \begin{cases}
            \; \; \; i \tr(\sigma_{\pi(2)}^2) = i \tr(\mathbbm{1}) &= \; \; \;2i \quad \text{if } \pi(3) < \pi(1) \pmod 3 \\
            -i \tr(\sigma_{\pi(2)}^2) &= -2i \quad \text{otherwise.}
        \end{cases}.
    \end{split}
    \end{equation*}

    Conversely, calculating $\tr(\mathcal{N}_3 V_\pi)$ in terms of $c_{\pi'}$ via $\mathcal{N}_3 = \sum\limits_{\pi' \in S_3} c_{\pi'} V_{\pi'}$ and evaluating $\tr(V_{\pi' \circ \pi})$ using Lemma \ref{calc} leads to

    \begin{equation}
    \left(\begin{array} {cccccc}
    8 & 4 & 4 & 4 & 2 & 2 \\
    4 & 8 & 2 & 2 & 4 & 4 \\
    4 & 2 & 8 & 2 & 4 & 4 \\
    4 & 2 & 2 & 8 & 4 & 4 \\
    2 & 4 & 4 & 4 & 2 & 8 \\
    2 & 4 & 4 & 4 & 8 & 2 \\
    \end{array} \right) \begin{bmatrix}
    c_{()}\\
    c_{(12)}\\
    c_{(13)}\\
    c_{(23)}\\
    c_{(123)}\\
    c_{(132)}\\
    \end{bmatrix} = \begin{bmatrix}
    0\\
    0\\
    0\\
    0\\
    \pm 2i\\
    \mp 2i\\
    \end{bmatrix},
    \end{equation}

which is for example solved by $\left(c_{()},c_{(12)},c_{(13)},c_{(23)},c_{(123)},c_{(132)} \right)^{\T}  = \left(0,0,0,0, \mp\frac{i}{3}, \pm \frac{i}{3}\right)^{\T} $.

Therefore,
\begin{equation}
    \Phi^{(3)} (\sigma_x \otimes \sigma_x + \sigma_y \otimes \sigma_y + \sigma_z \otimes \sigma_z) = \frac{6i^2}{3^2} \left(V_{(123)} - V_{(132)}\right)^{\otimes 2}  = - \frac{2}{3} \left(V_{(123)} - V_{(132)}\right)^{\otimes 2}.
\end{equation}

Finally, consider $\tr(\varrho^{\otimes 3} \Phi^{(3)}(\mathcal{M})) = - \frac{2}{3} \tr(\varrho^{\otimes 3} \left(V_{(123)} - V_{(132)}\right)^{\otimes 2})$. Expanding $\varrho$ leads to a sum of terms of the form 
\begin{equation*}
    \tr(A_1 \otimes A_2 \otimes A_3 (V_{(123)} -V_{(132)}))\tr(A_4 \otimes A_5 \otimes A_6 (V_{(123)} -V_{(132)})),
\end{equation*}
where each $A_j$ can be a Pauli matrix or an identity. Notice that
\begin{equation*}
    \tr(A_1 \otimes A_2 \otimes A_3 (V_{(123)} -V_{(132)})) = \tr(A_1A_2A_3) - \tr(A_1A_3A_2),
\end{equation*}
and if at least one of them is an identity, both traces are the same due to the cyclic property.
Thus the only terms remaining have the form 
\begin{equation*}
    \tr(\sigma_j \sigma_k \sigma_\ell) - \tr(\sigma_j \sigma_\ell \sigma_k) = \tr(\sigma_j [\sigma_k,\sigma_\ell]) = 2i \varepsilon_{k\ell m} \tr(\sigma_j \sigma_m) =4i \varepsilon_{k\ell j} = 4i \varepsilon_{jk\ell}.
\end{equation*}

Thus, 
\begin{equation*}
\begin{split}
    \tr(\varrho^{\otimes 3} \Phi^{(3)}(\mathcal{M})) &= - \frac{2}{3} \tr(\varrho^{\otimes 3} \left(V_{(123)} - V_{(132)}\right)^{\otimes 2}) = - \frac{2}{3\cdot 4^3} (4i)^2 \sum_{\substack{j_1,k_1,j_2, \\ k_2,j_3,k_3 = 1}}^3 T_{j_1,k_1} T_{j_2,k_2} T_{j_3,k_3} \varepsilon_{j_1j_2j_3} \varepsilon_{k_1k_2k_3} \\
    &= \frac{1}{6} \cdot 3! \det(T) = \det(T),
\end{split}
\end{equation*}
by the Leibniz formula.    
\end{proof}

\begin{proposition}\label{class 1}
    Let $(A_j)_j$ and $(B_j)_j$ be orthonormal bases of $i \mathfrak{su}(2)$ and $(s_j)_j$ real numbers. Let $\mathcal{O} = \sum\limits_{j=1}^3 s_j A_j \otimes B_j$ and let $\varrho$ be an arbitrary bipartite qubit state. Then $\mathcal{R}_{\mathcal{O}}^{(3)}(\varrho) = \pm \frac{s_1s_2s_3}{8} \det(T)$, where the prefactor $\pm$ is the relative orientation of $A$ to $B$.
\end{proposition}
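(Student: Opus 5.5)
The plan is to reduce the statement to Proposition~\ref{det op} by using the $SO(3)$ double cover and the Haar invariance of the twirl. Since $(A_j)_j$ and $(B_j)_j$ are orthonormal bases of $i\mathfrak{su}(2)$, we can write $A_j = \sum_k O^{A}_{jk}\sigma_k/\sqrt{2}$ and $B_j = \sum_k O^{B}_{jk}\sigma_k/\sqrt{2}$ with $O^A,O^B\in O(3)$. Then
\[
\mathcal{O} = \frac{1}{2}\sum_{k,\ell} \left(O^{A\T} S\, O^B\right)_{k\ell}\, \sigma_k\otimes\sigma_\ell, \qquad S=\mathrm{diag}(s_1,s_2,s_3).
\]
Define $\tilde T := O^A T O^{B\T}$. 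Here I would like to say that $O^A$ and $O^B$ are implemented by local unitaries, but this only works for $SO(3)$. So I would first factor $O^A = \epsilon_A R_A$ and $O^B = \epsilon_B R_B$ with $R_A,R_B\in SO(3)$ and $\epsilon_A,\epsilon_B=\pm1$. Each rotation $R$ is realised by some $U\in SU(2)$ via $U^\dagger \vec{n}\cdot\vec\sigma\, U = (R\vec n)\cdot\vec\sigma$. The Haar substitution $U_A\mapsto U_A U_{R_A}$ (and likewise for $B$) then shows that $\mathcal{R}^{(3)}_{\mathcal{O}}$ equals $\mathcal{R}^{(3)}_{\mathcal{O}'}$ with
\[
\mathcal{O}' = \frac{\epsilon_A\epsilon_B}{2}\sum_j s_j\,\sigma_j\otimes\sigma_j .
\]
The moment is odd ($t=3$), so the global sign factors out as $(\epsilon_A\epsilon_B)^3=\epsilon_A\epsilon_B$. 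This sign is exactly the relative orientation, $\mathrm{sign}\det(O^A)\det(O^B)$.

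It therefore remains to show $\mathcal{R}^{(3)}_{\sum_j s_j\sigma_j\otimes\sigma_j/2}(\varrho) = \frac{s_1s_2s_3}{8}\det(T)$. I would redo the computation of Proposition~\ref{det op} with weights. Expanding the third tensor power gives the same three kinds of terms. Terms with a repeated index ($\sigma_j^{\otimes 6}$ and $\sigma_j^{\otimes4}\otimes\sigma_k^{\otimes2}$ patterns) twirl to zero exactly as in that proof, by Lemma~\ref{sym} and the vanishing of all the relevant traces; the weights do not affect this. Only the six terms $\sigma_{\pi(1)}^{\otimes2}\otimes\sigma_{\pi(2)}^{\otimes2}\otimes\sigma_{\pi(3)}^{\otimes2}$ survive, each carrying the weight $s_1s_2s_3/8$ in place of $1$. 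The result is therefore $\frac{s_1s_2s_3}{8}$ times the unweighted answer, which by Proposition~\ref{det op} is $\det(T)$.

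Alternatively, this step follows directly from Theorem~\ref{class0} with $\hat P(A_j)=A_j$ and $\hat P(B_j)=B_j$. There $M=(\tr(A_jB_k))_{jk}=O^A O^{B\T}$, so $\det M=\pm1$ gives the sign. The rest of the $\mathcal{R}^{(3)}$ expression must also vanish: the constant, $|\vec\alpha|^2$, $|\vec\beta|^2$, $\tr(T^\T T)$ and $\langle\vec\alpha,T\vec\beta\rangle$ contributions all come from coefficients $x^{(\vec j)}_\pi$ with a repeated index or with $\pi$ not a $3$-cycle, and these vanish because all $A_j$ are traceless. Checking this via Eq.~\eqref{TrA} and its analogues for other $\pi$ is the main obstacle.

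I would handle that obstacle with the reduction route, which avoids it entirely. The only subtle point there is the orientation bookkeeping, and in particular that the cases $\det O=-1$ cannot be absorbed into unitaries. This is exactly what produces the $\pm$ in the statement.
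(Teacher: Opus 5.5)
Your proposal is correct and takes essentially the same route as the paper. The paper also flips orientation-reversing bases via $\mathcal{O}\to-\mathcal{O}$ (picking up $(-1)^3$ from the odd moment), uses the $SU(2)\to SO(3)$ double cover and Haar invariance to reduce to $\sum_j \frac{s_j}{2}\sigma_j\otimes\sigma_j$, and then reruns the proof of Proposition~\ref{det op}, where only the $\mathcal{N}_3^{\otimes 2}$ terms survive, with prefactor $s_1s_2s_3/8$; your factorization $O^{A}=\epsilon_A R_A$, $O^{B}=\epsilon_B R_B$ is just a more explicit version of that orientation bookkeeping.
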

\begin{proof}
    Assume that both $(A_j)_j$ and $(B_j)_j$ are positively oriented (otherwise $(-A_j)_j$ is positively oriented, and $\mathcal{O} \to - \mathcal{O}$ leads to $\mathcal{R}^{(3)} \to (-1)^3\mathcal{R}^{(3)}$. Then by the double covering, there exist unitary maps $U_A,U_B$ such that $U_A^{\dagger} A_j U_A = \frac{\sigma_j}{\sqrt{2}}$ and $U_B^\dagger B_j U_B = \frac{\sigma_j}{\sqrt{2}}$. Since $\mathcal{R}$ is invariant under $\mathcal{O} \to (U_A \otimes U_B)^{\dagger} \mathcal{O} (U_A \otimes U_B)$, it is sufficient to consider $\mathcal{O} = \sum\limits_{j=1}^3 \frac{s_j}{\sqrt{2}^2} \sigma_j \otimes \sigma_j$. In this case the statement requires an easy modification of the proof of Proposition \ref{det op}, since only $\mathcal{N}_3^{\otimes 2} \neq 0$ and its prefactor is now $\frac{s_1s_2s_3}{\sqrt{2}^6}$.
\end{proof}

An explicit calculation allows the proof of the other direction in Proposition \ref{class 1}, but only for symmetric measurements using the following Lemma, as well as the symmetric version of Lemma \ref{t3d2 system}.
\begin{lemma}
\label{d=2 sym theorem}
Let $\mathcal{O} = \sum\limits_{j=1}^r s_j A_j \otimes A_j\in L(\mathbb{C}^2)^{\otimes 2}_{\mathrm{herm.}}$ be an observable with a symmetric operator Schmidt decomposition. 

For $j \in \mathbb{N}$ let $\vec{\tau^j} = (\tr(A_1)^j,...,\tr(A_r)^j)^{\T} $ and $\vec{s^j} = (s_1^j,...,s_r^j)^{\T} $. Then the prefactors $(c_{\pi,\pi'})_{\pi,\pi' \in S_3}$ from Proposition \ref{explicit} for $t=3$ are given by
\begin{equation}\label{c Mat}
\underbrace{\begin{bmatrix*}[c]
c_{(),()} \\[7pt]
c_{(),(ij)} \\[7pt]
c_{(),(123)} \\[7pt]
c_{(ij),(ij)} \\[7pt]
c_{(ij),(ik) \; j \neq k} \\[7pt]
c_{(ij),(123)} \\[7pt]
c_{(123),(123)} \\[7pt]
\end{bmatrix*}}_{\tilde{c}(\mathcal{O})} = \underbrace{\frac{1}{144}
    \left(\begin{array} {ccccccc} 
    4 & 0 & -8 & 0 & 0 & 0 & 4\\[7pt]
    0 & -2 & 4 & 0 & 0 & 2 & -4\\[7pt]
    -4 & 12 & -4 & 0 & 0 & -12 & 8 \\[7pt]
    0 & 0 & 0 & 3 & -2 & -4 & 4 \\[7pt]
    0 & 0 & 0 & -1 & 2 & -4 & 4 \\[7pt]
    0 & 2 & -4 & -2 & -4 & 16 & -8 \\[7pt]
    4 & -24 & 16 & 12 & 24 & -48 & 16\\ \end{array} \right)}_{\equiv B} 
    \underbrace{\begin{bmatrix*}[c]
    \langle \, \vec{s}, \vec{\tau^2} \rangle^3\\[3pt]
    \langle \vec{s^2},\vec{\tau^2} \rangle \langle \vec{s},\vec{\tau^2} \rangle\\[3pt]
    \sum\limits_{j_1,j_2,j_3=1}^r s_{j_1}\tau_{j_1}s_{j_2}\tau_{j_2}s_{j_3}\tau_{j_3}\tr(A_{j_1}A_{j_2}A_{j_3})\\[5pt]
    |\vec{s}|^2 \langle \vec{s},\vec{\tau^2} \rangle\\[3pt]
    \langle \vec{s^3},\vec{\tau^2} \rangle \\[3pt]
     \sum\limits_{j,k=1}^r s_j^2\tau_k s_k \tr(A_j^2A_k) \\[5pt]
    \sum\limits_{j_1,j_2,j_3=1}^r s_{j_1}s_{j_2}s_{j_3} \tr(A_{j_1}A_{j_2}A_{j_3})^2 
    \end{bmatrix*}}_{\equiv \tilde{v}(\mathcal{O})}
\end{equation}
    
\end{lemma}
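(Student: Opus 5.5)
We work throughout in the symmetric setting of Proposition~\ref{explicit}, where $A_j=B_j$. Our plan is to compute, for each $\vec j=(j_1,j_2,j_3)$, the coefficient vector $x^{(\vec j)}$ explicitly. We then assemble $c_{\pi_A,\pi_B}=\sum_{\vec j}s_{j_1}s_{j_2}s_{j_3}\,x^{(\vec j)}_{\pi_A}x^{(\vec j)}_{\pi_B}$ and read off the seven symmetry classes.

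\textbf{Step 1: the coefficients $x^{(\vec j)}$.} By Lemma~\ref{sym} we may fix the gauge $x_{(132)}=0$ and invert the $5\times5$ matrix in Eq.~\eqref{t=3 mat}. This writes each $x^{(\vec j)}_\pi$ as a fixed linear combination of the five right-hand side entries
\[
\tau_{j_1}\tau_{j_2}\tau_{j_3},\quad \tau_{j_3}\delta_{j_1j_2},\quad \tau_{j_2}\delta_{j_1j_3},\quad \tau_{j_1}\delta_{j_2j_3},\quad \tr(A_{j_1}A_{j_2}A_{j_3}),
\]
where $\tau_j=\tr(A_j)$. Eq.~\eqref{TrA} is the $(123)$ row of this inverse; the other rows are obtained the same way. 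One point needs care. The class labels in $\tilde c(\mathcal O)$ implicitly assume that $c_{(ij),(ik)}$ and $c_{(ij),(ij)}$ do not depend on which transpositions are involved. We therefore check that summing over all $\vec j$ symmetrizes the products $x_\pi x_{\pi'}$ under conjugation by $S_3$. This works because relabelling the tensor factors permutes $\vec j$ and conjugates $\pi$.

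\textbf{Step 2: expanding the products.} We form each product $x^{(\vec j)}_{\pi_A}x^{(\vec j)}_{\pi_B}$, multiply by $s_{j_1}s_{j_2}s_{j_3}$, and sum over $\vec j$. Every resulting term is a product of two of the five basic quantities above. The Kronecker deltas collapse indices, so that for example
\[
\sum_{\vec j} s_{j_1}s_{j_2}s_{j_3}\,\tau_{j_3}\delta_{j_1j_2}\,\tau_{j_1}\tau_{j_2}\tau_{j_3}=\langle\vec{s^2},\vec{\tau^2}\rangle\langle\vec s,\vec{\tau^2}\rangle .
\]
Likewise $\delta\cdot\delta$ terms give $|\vec s|^2\langle\vec s,\vec{\tau^2}\rangle$ or $\langle\vec{s^3},\vec{\tau^2}\rangle$, depending on whether the two deltas pair the same or different index slots. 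Mixed terms of the form $\delta\cdot\tr(ABC)$ give $\sum s_j^2\tau_k s_k\tr(A_j^2A_k)$.

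Each product of basic quantities thus lands on exactly one entry of $\tilde v(\mathcal O)$. We should also verify that no further invariants appear. Sums such as $\sum_{j,k}s_j^2s_k\tau_k^2$ reduce to listed entries, and cyclicity of the trace together with reality of the $A_j$ makes $\tr(A_{j_1}A_{j_3}A_{j_2})$ the complex conjugate of $\tr(A_{j_1}A_{j_2}A_{j_3})$. The products involving $\tr(\cdot)^2$ therefore combine into the single entry listed in $\tilde v(\mathcal O)$.

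\textbf{Step 3: assembling $B$.} Collecting the numerical coefficients row by row gives the matrix $B$. We then sanity-check it against results already in the paper. Proposition~\ref{det op} ($A_j=\sigma_j/\sqrt2$, $s_j=2$, all $\tau_j=0$) must reproduce $c_{(123),(123)}$ correctly, and the $r\le2$ vanishing of $c_{(123),(123)}$ from Theorem~\ref{r>2} must also hold.

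The main obstacle is bookkeeping rather than any conceptual difficulty. Counting how many ordered $\vec j$ and which transposition pairs contribute to each class is where the factors in $B$ are easy to get wrong. For that reason we would cross-check the final matrix numerically on a random symmetric observable.
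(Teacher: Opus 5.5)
Your proposal is correct and is exactly the explicit calculation the paper relies on: invert the gauge-fixed $5\times5$ system of Eq.~\eqref{t=3 mat}, form $\sum_{\vec j}s_{j_1}s_{j_2}s_{j_3}x^{(\vec j)}_{\pi}x^{(\vec j)}_{\pi'}$, and collapse the Kronecker deltas into the seven sums of $\tilde v(\mathcal O)$. Carried out, this reproduces $B$ row by row.

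Two small touch-ups. First, in the gauge $x_{(132)}=0$ the reversed trace $\tr(A_{j_1}A_{j_3}A_{j_2})$ never occurs. Where it would occur, it is Hermiticity, not reality, of the $A_j$ that makes it the complex conjugate, and $|\tr(\cdot)|^2\neq\tr(\cdot)^2$, so such terms would not simply merge into the last entry of $\tilde v(\mathcal O)$. Second, argue the class-independence of $c_{(ij),(ij)}$ and $c_{(ij),(ik)}$ via cyclic relabellings of $\vec j$ together with $c_{\pi,\pi'}=c_{\pi',\pi}$. Odd relabellings swap $(123)$ and $(132)$, so they do not preserve your gauge.
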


\begin{proposition}\label{strong char twirl}
    Let $a,b \in \mathbb{C}, b\neq 0$. The set of observables $\mathcal{O}\in L(\mathbb{C}^2)^{\otimes 2}_{\mathrm{herm.}}$ with a symmetric operator Schmidt decomposition that satisfy
    $\mathcal{R}^{(t)}_{\mathcal{O}}(\varrho) =  a + b \det(T)$ for arbitrary bipartite qubit states $\varrho$ is only nonempty for $a = 0$ and $b \in \mathbb{R}_{>0}$. In that case it is given by:
    \begin{equation}
        \left\{ (U\otimes U)^\dagger \left(\sum_{j=1}^3 s_j \sigma_j \otimes \sigma_j \right) (U\otimes U) \; \bigg| \; U \in SU(2), \; s_1s_2s_3 = 8b, \; s_j>0 \; \forall \; j \right\}.
    \end{equation}
    Notably, there are no symmetric $r=4$ measurements, that only measure the determinant.
\end{proposition}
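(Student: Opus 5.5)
Both inclusions are handled separately, and the work takes place at $t=3$: a nonzero $\det(T)$ coefficient forces $t\ge 3$ by the degree argument in the proof of Theorem~\ref{r>2}, and we work with third moments as in Lemma~\ref{d=2 sym theorem}.

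\emph{The listed observables work.} An observable of the stated form has symmetric operator Schmidt decomposition with $A_j = U^\dagger\frac{\sigma_j}{\sqrt2}U$ and Schmidt coefficients $s_j>0$. Its randomized moments equal those of $\sum_j s_j\sigma_j\otimes\sigma_j$ by Haar invariance. Proposition~\ref{class 1} then gives $\mathcal{R}^{(3)}_{\mathcal{O}}(\varrho)=\frac{s_1s_2s_3}{8}\det(T)$ with a plus sign, since both local bases coincide and therefore have the same orientation. Hence $a=0$ and $b=s_1s_2s_3/8>0$.

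\emph{Every such observable has this form.} Let $\mathcal{O}=\sum_{j=1}^r s_jA_j\otimes A_j$. By Lemma~\ref{t3d2 system}, $\mathcal{R}^{(3)}_{\mathcal{O}}(\varrho)=\langle \tilde c(\mathcal{O}), \tilde D\,\tilde I(\varrho)\rangle$, where $\tilde D$ is $D$ weighted by the multiplicities of each class of $(\pi,\pi')$. The first step is to note that $1$, $|\vec\alpha|^2+|\vec\beta|^2$, $\tr(T^{\T}T)$, $\langle\vec\alpha,T\vec\beta\rangle$ and $\det(T)$ are linearly independent functions on states. The condition $\mathcal{R}=a+b\det(T)$ is therefore equivalent to three linear equations on $\tilde c$: the coefficients of $|\vec\alpha|^2+|\vec\beta|^2$, $\tr(T^{\T}T)$ and $\langle\vec\alpha,T\vec\beta\rangle$ must vanish. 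The remaining two components give $a$ and $b$; in particular $b=-\frac{6}{16}c_{(123),(123)}$ up to the multiplicity factor. Using the matrix $B$ of Lemma~\ref{d=2 sym theorem}, these become three polynomial equations in the entries of $\tilde v(\mathcal{O})$.

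Next, decompose $A_j=\frac{\tau_j}{2}\mathbbm{1}+\vec a^{(j)}\cdot\frac{\vec\sigma}{\sqrt2}$ with $\tau_j=\tr(A_j)$. Orthonormality gives $\frac{\tau_j\tau_k}{2}+\langle\vec a^{(j)},\vec a^{(k)}\rangle=\delta_{jk}$. Each entry of $\tilde v$ can then be written through $\tau$ and the Gram data, as in the proof of Theorem~\ref{class0}. In particular $\tr(A_jA_kA_\ell)$ equals a real part built from the $\tau$'s plus $\frac{i}{\sqrt2}\det(\vec a^{(j)},\vec a^{(k)},\vec a^{(\ell)})$.

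The goal is to show that the vanishing conditions force $\tau=0$. I expect the $\langle\vec\alpha,T\vec\beta\rangle$ and $|\vec\alpha|^2+|\vec\beta|^2$ coefficients to be expressible as sums of squares or manifestly sign-definite expressions in $s_j\tau_j$, so that they vanish only if all $\tau_j=0$. To find these expressions I would compute explicitly with the first few rows of $B$. Once $\tau=0$, all $A_j$ lie in the three-dimensional space $i\mathfrak{su}(2)$, so $r\le3$. This rules out symmetric $r=4$ observables, because four orthonormal matrices must include a nonzero trace component.

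With $\tau=0$, the identity-type invariants disappear. The $\tr(T^{\T}T)$ coefficient then reduces to a combination of $|\vec s|^2\langle\ldots\rangle$ terms, already zero, and $\sum s_j^3$-type terms coming from $c_{(ij),(ij)}$ and $c_{(ij),(ik)}$. I expect this to vanish identically and would verify it. The constant $a$ must then vanish, since all $\tau$-dependent contributions are gone. The condition $b\neq0$ requires $\det(M)\neq0$ by Theorem~\ref{class0}, so $r=3$ and $\{A_j\}$ is an orthonormal basis of $i\mathfrak{su}(2)$. A single $U\in SU(2)$ maps $\frac{\sigma_j}{\sqrt2}$ to $\pm A_j$, and sign flips are absorbed in pairs. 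Positivity of $b=\frac{s_1s_2s_3}{8}$ then fixes the orientation and the form stated in the proposition.

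The main obstacle is the step that forces $\tau=0$. It requires showing that the polynomial conditions coming from $B$, together with orthonormality, have no solutions with $\tau\neq0$. If sign-definiteness fails, I would instead parametrize the orthonormal systems with $r\le4$ explicitly, using the fact that the $\tau$-vector has norm at most $\sqrt2$, and solve the resulting equations case by case.
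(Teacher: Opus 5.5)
\textbf{There is a genuine gap in the converse direction, at exactly the step you flag.} The mechanism you propose for that step cannot work.

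You hope to force $\tau=0$ from the vanishing of the coefficients of $|\vec\alpha|^2+|\vec\beta|^2$ and $\langle\vec\alpha,T\vec\beta\rangle$ alone, and you check the $\tr(T^{\T}T)$ coefficient only afterwards. Take $\mathcal{O}=c\,\mathbbm{1}\otimes\mathbbm{1}+\sum_{j=1}^3 s_j\sigma_j\otimes\sigma_j$ with $c\neq 0$ and all $s_j>0$. This is a symmetric observable of operator Schmidt rank $4$, with $A_0=\mathbbm{1}/\sqrt2$ and $\tau_0=\sqrt2\neq 0$. Here $\langle\mathcal{O}\rangle_{U\varrho U^\dagger}=c+X_U$, where $X_U$ depends only on $T$ and has zero Haar mean. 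Hence
\begin{equation*}
\mathcal{R}^{(3)}_{\mathcal{O}}(\varrho)=c^3+\frac{c}{3}\Big(\sum_j s_j^2\Big)\tr(T^{\T}T)+s_1s_2s_3\det(T).
\end{equation*}
Both coefficients you intend to use therefore vanish identically, even though $\tau\neq 0$. So no sign-definiteness argument based on those two conditions can yield $\tau=0$. This family is precisely the symmetric $r=4$ case the proposition excludes, and only the $\tr(T^{\T}T)$ condition rules it out. That condition must therefore enter the $\tau=0$ step itself, not be verified after it. Your fallback of a case-by-case parametrization is not a proof.

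\textbf{The missing idea is one specific combination of the constraints.} Using $D$ and $B$, the coefficient of $\tr(T^{\T}T)$ minus one half of the coefficient of $\langle\vec\alpha,T\vec\beta\rangle$ is a positive multiple of
\begin{equation*}
|\vec s|^2\langle\vec s,\vec{\tau^2}\rangle-\langle\vec{s^3},\vec{\tau^2}\rangle=\sum_k s_k\tau_k^2\sum_{j\neq k}s_j^2 ,
\end{equation*}
which is a sum of nonnegative terms. The paper obtains this by writing every admissible $\tilde c$ as a special solution of $D^{\T}\tilde c=(a,0,0,0,b)^{\T}$ plus the two-dimensional kernel of $D^{\T}$. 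It then computes $B^{-1}\tilde c$, whose fourth and fifth entries both equal $16a$.

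Since $s_j>0$, this forces $\tau=0$ unless $r=1$, and $r=1$ is excluded by Theorem~\ref{r>2}. The first entry, $\langle\vec s,\vec{\tau^2}\rangle^3=64a$, then gives $a=0$. From there your remaining steps go through: Theorem~\ref{class0} with $b\neq0$ gives $r=3$ and an orthonormal basis of $i\mathfrak{su}(2)$, and you conclude with Proposition~\ref{class 1}.

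A minor normalization point: with $A_j=U^\dagger\frac{\sigma_j}{\sqrt2}U$, the Schmidt coefficients of $\sum_j s_j\sigma_j\otimes\sigma_j$ are $2s_j$, not $s_j$, so $b=s_1s_2s_3$ in that parametrization.
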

\begin{proof}
    By Lemma \ref{t3d2 system}, since $\mathcal{O}$ has a symmetric operator Schmidt decomposition, $\mathcal{R}_{\mathcal{O}}^{(t)}(\varrho) = \sum\limits_{j=1}^7 \tilde{c}_j(\mathcal{O}) f_j(\varrho)$ where $\tilde{c}(\mathcal{O})$ is given by Equation (\ref{c Mat}).
    To get the determinant without interference from other invariants, one needs to find an observable $\mathcal{O}$ such that 

\begin{equation}
    \mathcal{R}_{\mathcal{O}}^{(t)}(\varrho) = \sum_{j=1}^7 \tilde{c}_j(\mathcal{O}) f_j(\varrho) = \sum_{j,\ell = 1}^7 \tilde{c}_j(\mathcal{O})_j D_{j,\ell} \tilde{I}_\ell(\varrho) \stackrel{!}{=}  a \tilde{I}_1(\varrho) + b \tilde{I}_5(\varrho)
\end{equation}
for some $a,b \in \mathbb{C}$ with $b \neq 0$. This implies that $D^{\T}  \tilde{c}(\mathcal{O}) = (a,0,0,0,b)^{\T} $. All such solutions have the form of a special solution plus the nullspace of $D^{\T} $.
A special solution of $D^{\T}  \tilde{c}(\mathcal{O}) = (a,0,0,0,b)^{\T} $ is for example given by
$\left(a+\frac{2b}{3},-\frac{2b}{3},0,\frac{2b}{3},\frac{2b}{3},0,-\frac{8b}{3}\right)^{\T} $, while the nullspace of $D^{\T} $ can be calculated to be $\mathrm{span}\left(\left\{(2,-1,2,0,0,0,0)^{\T} ,(0,1,0,-2,-2,2,0)^{\T} \right\}\right)$.
Thus all possible $\tilde{c}(\mathcal{O})$ have the form
\begin{equation*}
        \tilde{c}(\mathcal{O}) = \begin{bmatrix*}[l]
        a + \frac{2}{3}b + 2x +0y\\[5pt]
        \quad -\frac{2}{3}b - \; \;x + \; \: y \\[5pt]
        \qquad \quad \quad 2x \\
        \qquad \frac{2}{3}b \qquad \; -2y \\[5pt]
        \qquad \frac{2}{3}b \qquad \; -2y \\[5pt]
        \qquad \qquad \quad \; \; \: +2y \\
        \quad - \frac{8}{3}b\\[5pt]
        \end{bmatrix*}
\end{equation*}
for some $a,b,x,y\in \mathbb{C}$.
Inverting Equation (\ref{c Mat}) leads to
\begin{equation*}
    \tilde{v}(\mathcal{O})\begin{bmatrix*}[c]
    \langle \, \vec{s}, \vec{\tau^2} \rangle^3\\[3pt]
    \langle \vec{s^2},\vec{\tau^2} \rangle \langle \vec{s},\vec{\tau^2} \rangle\\[3pt]
    \sum\limits_{j_1,j_2,j_3=1}^r s_{j_1}\tau_{j_1}s_{j_2}\tau_{j_2}s_{j_3}\tau_{j_3}\tr(A_{j_1}A_{j_2}A_{j_3})\\[5pt]
    |\vec{s}|^2 \langle \vec{s},\vec{\tau^2} \rangle\\[3pt]
    \langle \vec{s^3},\vec{\tau^2} \rangle \\[3pt]
     \sum\limits_{j,k=1}^r s_j^2\tau_k s_k \tr(A_j^2A_k) \\[5pt]
    \sum\limits_{j_1,j_2,j_3=1}^r s_{j_1}s_{j_2}s_{j_3} \tr(A_{j_1}A_{j_2}A_{j_3})^2 
    \end{bmatrix*} =B^{-1} \tilde{c}(\mathcal{O}) = \begin{bmatrix*}[l]
        64a \\
        32a \\
        16a+16b-48x-72y \\
        16a \\
        16a \\
        8a+32b-24x-72y \\
        4a+56b-24x-144y\\
        \end{bmatrix*}.
\end{equation*}

Subtracting the fifth from the fourth line implies that 
\begin{equation*}
    0 \stackrel{!}{=} |\vec{s}|^2 \langle \vec{s},\vec{\tau^2} \rangle -
    \langle \vec{s^3},\vec{\tau^2} \rangle = \sum_{j,k = 1}^r s_j^2 s_k \tau_k^2 - \sum_{j=1}^r s_k^3 \tau_k^2 = \sum_{k=1}^r \tau_k^2 s_k \sum_{\substack{j=1 \\ j \neq k}}^r s_j^2.
\end{equation*}
Since $s_j>0$ this implies that $r=1$ (which has already been decided to be impossible by Theorem \ref{r>2}) or that all $\tau_k \equiv 0$. This then implies (looking at line one, for example), that $a=0$. Furthermore, $\vec{\tau} \equiv 0$ leads to the situation in Proposition \ref{class 1}.
\end{proof}
Attacking the non-symmetric case in the same way is considerably harder, since the nullspace of $D^{\T} $ is larger, and terms of the form $\tau_j^2$ are sometimes replaced by $\tr(A_j)\tr(B_j)$ loosing the obvious positivity condition, which was essential for the proof.

\section{$t=4$ and the Hodge invariant}\label{app:t4hodge}
The analysis for the Hodge invariant $I_{14}$ can be done mostly along the lines of the analysis of determinant $I_1$, with the biggest difference being that the $6\times 6$ matrix $M_2^{(3)}$ (which using symmetries can be reduced to an invertible $5\times 5$ matrix) is now replaced by the $24 \times 24$ Matrix $M_2^{(4)}$, which becomes an invertible $14 \times 14$ matrix after implementing the symmetries from Lemma~\ref{sym}.

The invariant in question can explicitly be written as

\begin{equation}\label{Hodge exp}
    I_{14} = 2\langle \vec{\alpha}, \mathrm{adj}(T)\vec{\beta} \rangle =  - \sum_{i,j,k,\ell,m,n=1}^3 \varepsilon_{ijk}\alpha_k T_{j,\ell} \varepsilon_{\ell m n} \beta_n T_{im} = \tr((*\vec{\alpha})T (*\vec{\beta})^{\T}  T^{\T} ),
\end{equation}
where $\mathrm{adj}(T) \in \mathbb{R}^{3 \times 3}$ is the adjugate of $T \in \mathbb{R}^{3\times 3}$ and $*\vec{v}$ is defined via $(*\vec{v})\vec{w} = \vec{v} \times \vec{w}$ for any $\vec{v},\vec{w} \in \mathbb{R}^3$.

It needs to be measured via $\mathcal{R}^{(4)}_{\mathcal{O}} (\varrho) = \tr(\varrho^{\otimes 4} \Phi^{(4)}(\mathcal{O})) = \sum\limits_{\pi,\pi' \in S_4} c_{\pi,\pi'} \tr(\varrho^{\otimes 4}V_{\pi,\pi'})$ which uses
\begin{equation}
    \varrho^{\otimes 4} = \frac{1}{2^8} \left[ \mathbbm{1} \otimes \mathbbm{1} + \sum_{k=1}^3 \alpha_k \sigma_k \otimes \mathbbm{1} + \sum_{k=1}^3\beta_k \mathbbm{1} \otimes \sigma_k + \sum_{j,k=1}^3 T_{jk} \sigma_j \otimes \sigma_k \right]^{\otimes 4}.
\end{equation}
From the explicit formula (\ref{Hodge exp}) it is clear, that $I_{14}$ arises from the terms in the expansion of $\varrho^{\otimes 4}$ that include $\vec{\alpha}\cdot \vec{\sigma} \otimes \mathbbm{1}$ and $\vec{\beta} \cdot \mathbbm{1} \otimes \vec{\sigma}$ once, and $\sum\limits_{jk} T_{jk} \sigma_j \otimes \sigma_k$ twice.
Similarly to the determinant (see Lemma \ref{t3d2 system}), one can now reason via Lemma \ref{calc} for which permutations
$(\pi_A,\pi_B) \in S_4^{2}$ the expression $\tr(\varrho^{\otimes 4}V_{\pi_A,\pi_B})$ could possibly include a dependence on $I_{14}$.

Notice that $V_{\pi_A,()}$ and $V_{(),\pi_B}$ can directly be excluded, since $()$ induces a product of all traces in one coordinate, and there is always a Pauli matrix present.
More generally, $\pi_A$ and $\pi_B$ cannot have more than one fixpoint, since in every coordinate there are $3$ Pauli matrices and one identity.
If $\pi_A$ has a fixpoint, then $\pi_B$ cannot be the same fixpoint, since this would imply a term of the form $\mathbbm{1} \otimes \mathbbm{1}$, which isn't present. The following possibilities remain:

\begin{enumerate}[label={(\arabic*)}]
    \item $\pi_A$ and $\pi_B$ both have exactly one fixpoint, and it is not the same one,
    \item  $\pi_A$ has one fixpoint and $\pi_B$ does not,
    \item $\pi_B$ has one fixpoint and $\pi_A$ does not,
    \item Neither $\pi_A$ nor $\pi_B$ have a fixpoint.
\end{enumerate}
Starting with case (4), it can be shown that it does not contribute a term to $I_{14}$. To see this, expand $\varrho^{\otimes 4}$ in $\tr(\varrho^{\otimes 4}V_{\pi_A,\pi_B})$. Clearly, $I_{14}$ can only result from the term
\begin{equation*}
    \sum_{i,j,k,\ell,m,n} \alpha_i \beta_j T_{k\ell}T_{mn} \tr(V_{\pi_A,\pi_B} \sum_{\mathrm{perm.}} (\sigma_i \otimes \mathbbm{1})\otimes(\mathbbm{1}\otimes \sigma_j) \otimes (\sigma_{k}\otimes \sigma_\ell)\otimes (\sigma_m \otimes \sigma_n)),
\end{equation*}
where the inner sum spans over all ways to permute the four bracketed terms. W.l.o.g., let $\pi_A = (1234)$. Explicitly evaluating the resulting trace leads to multiple terms of the form $\varepsilon_{ikm}\cdot \left(\pm \varepsilon_{j\ell n}\right) + \text{perm. }$, 
where the second sign depends on $\pi_B$. Independent of the precise form of $\pi_B$, its effect on 
$(\sigma_i \otimes \mathbbm{1})\otimes(\mathbbm{1}\otimes \sigma_j) \otimes (\sigma_{k}\otimes \sigma_\ell)\otimes (\sigma_m \otimes \sigma_n)$ and on $(\sigma_i \otimes \mathbbm{1})\otimes(\sigma_{k}\otimes \sigma_\ell) \otimes(\mathbbm{1}\otimes \sigma_j) \otimes (\sigma_m \otimes \sigma_n)$
differs by a sign, while for $(1234)$ it is identical. Thus, even though $I_{14}$ appears multiple times in the expansion of $\tr(\varrho^{\otimes 4} V_{\pi_A,\pi_B})$, due to the alternating prefactors, they all sum to zero.
This offers the first point of attack:
\begin{lemma}\label{suf}
    Showing that the Hodge invariant $I_{14}$ cannot be obtained through a measurement of fourth moments using observables of operator Schmidt rank $r\leq3$ is equivalent to showing that $c_{\pi,\pi'}$ always vanishes if $\pi$ or $\pi'$ has exactly one fixpoint. It is thus sufficient to show that $x^{( \; \vec{j} \;)}_{(ijk)}$ vanishes for all $3$-cycles in $S_4$ and all $\vec{j} \in \{1,2,3\}^4$, where $x_\pi^{( \, \vec{j} \, )}$ is defined via Proposition \ref{explicit}.
\end{lemma}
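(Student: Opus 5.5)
The plan is to combine the case analysis done just before the lemma with the coefficient formula of Proposition~\ref{explicit}. We have $\mathcal{R}^{(4)}_{\mathcal{O}}(\varrho)=\sum_{\pi,\pi'} c_{\pi,\pi'}\tr(\varrho^{\otimes 4}V_{\pi,\pi'})$. By the discussion above, a dependence on $I_{14}$ can only come from pairs $(\pi_A,\pi_B)$ in cases (1)--(4), and case (4) contributes nothing. So only pairs in which at least one of $\pi_A,\pi_B$ has exactly one fixpoint can carry $I_{14}$.

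A permutation in $S_4$ with exactly one fixpoint is a $3$-cycle. First I would make sure the gauge fixing of Lemma~\ref{sym} is compatible with this reduction. Lemma~\ref{sym} lets us set to zero the coefficients of $(132),(142),(143),(243)$ (and $(124),(134),(234)$, together with the double transpositions). Of the eight $3$-cycles, only $(123)$ then survives with a possibly non-zero coefficient. If I instead want to keep all $3$-cycles and argue symmetrically, I note that the uniqueness statement in Lemma~\ref{sym} guarantees that any solution of \eqref{lin eq} gives correct prefactors. Either way, the $3$-cycle coefficients we must control are determined by the single-party data $x^{(\vec j)}$ and $y^{(\vec j)}$.

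Next, by Proposition~\ref{explicit},
\[
c_{\pi,\pi'}=\sum_{\vec j}\Bigl(\prod_k s_{j_k}\Bigr)x^{(\vec j)}_{\pi}\,y^{(\vec j)}_{\pi'}.
\]
If $x^{(\vec j)}_{\pi}=0$ for every $3$-cycle $\pi$ and every $\vec j\in\{1,\dots,r\}^4$, then $c_{\pi,\pi'}=0$ whenever $\pi$ is a $3$-cycle. Since $\mathcal{O}$ has the same structure in the $B$ variables, the same computation applied to $y$ handles the case where $\pi'$ is the $3$-cycle. Indices run over $r\le 3$, which gives the range $\{1,2,3\}^4$ in the statement. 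This proves sufficiency.

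For the claimed equivalence, the converse direction is: if some $c_{\pi,\pi'}$ with a one-fixpoint permutation is non-zero, then $I_{14}$ appears. Here I would use Lemma~\ref{sym} to argue that the reduced coefficient vector is unique. Then I would check directly that $\tr(\varrho^{\otimes4}V_{(123),\pi'})$ (and its mirror) contains $I_{14}$ with a non-zero multiple, and that these contributions cannot cancel among the distinct pairs. The main obstacle is exactly this non-cancellation, since distinct pairs might produce $I_{14}$ with opposite signs. I would first try to show that the different pairs produce linearly independent combinations of degree-4 invariants, by tabulating, analogously to Lemma~\ref{t3d2 system}, the expansion of the relevant traces in the invariant basis. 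Failing that, I would settle for the sufficiency direction, which is all that is needed for the type-4 theorem.
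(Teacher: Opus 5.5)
The steps you do yourself are fine: a one-fixpoint permutation in $S_4$ is a $3$-cycle, Lemma~\ref{sym} leaves $(123)$ as the only surviving $3$-cycle, and $c_{\pi,\pi'}=\sum_{\vec j}(\prod_k s_{j_k})x^{(\vec j)}_{\pi}y^{(\vec j)}_{\pi'}$ vanishes once $x_{(123)}$ does. This is the paper's reduction. The gap is the premise you import from the discussion before the lemma: that case (4), with both permutations fixpoint-free, contributes no $I_{14}$. That premise is false.

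For a $4$-cycle $\pi$, $V_{\pi,\pi}$ is the cyclic shift of four copies of $\mathbb{C}^4$. Hence $\tr(\varrho^{\otimes 4}V_{\pi,\pi})=\tr(\varrho^4)$, and likewise $\tr(\varrho^{\otimes 4}V_{\pi,\pi^{-1}})=\tr((\varrho^{\T_2})^4)$. Consider the part of $\tr(\varrho^4)$ with one $\vec\alpha$, one $\vec\beta$ and two $T$'s. The three cyclic arrangements $(\alpha\beta TT)$, $(\alpha T\beta T)$, $(\alpha TT\beta)$, four rotations each, contribute $-4$, $+4$, $-4$ per word times $\sum\varepsilon_{ikm}\varepsilon_{j\ell n}\alpha_i\beta_jT_{k\ell}T_{mn}=I_{14}$. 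Swapping $\beta$ with a neighbouring $T$ does flip the sign, but with three classes one is left unpaired. The net term is $4(-4+4-4)I_{14}/4^4=-I_{14}/16$.

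A direct check: for $\vec\alpha=(a,0,0)$, $\vec\beta=(b,0,0)$, $T=\mathrm{diag}(0,c,c)$ one finds $\tr(\varrho^4)-\tr((\varrho^{\T_2})^4)=-abc^2/4\neq0$. Mixed pairs of distinct, non-inverse $4$-cycles give zero. So, exactly as for $\det(T)$ in Appendix~\ref{app:t4det}, the fixpoint-free pairs contribute $-\bigl(\tfrac38\det(T)+\tfrac1{16}I_{14}\bigr)\sum_{\pi}\bigl(c_{\pi,\pi}-c_{\pi,\pi^{-1}}\bigr)$ to $\mathcal{R}^{(4)}_{\mathcal{O}}$, with the sum over the six $4$-cycles.

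So vanishing of $x_{(123)}$ is not enough. You would also need $x_\pi=x_{\pi^{-1}}$ for $4$-cycles at $r=3$, which Appendix~\ref{app:t4det} proves only for $r\le2$, and it fails. Take $\mathcal{O}=(\mathbbm{1}+\sigma_3)\otimes(\mathbbm{1}+\sigma_3)+\sigma_1\otimes\sigma_1+\sigma_2\otimes\sigma_2$, which has operator Schmidt rank $3$. Let $R,S\in SO(3)$ be the rotations induced by $U_1,U_2$. Then $\langle U^\dagger\mathcal{O}U\rangle_\varrho=1+\langle Re_3,\vec\alpha\rangle+\langle Se_3,\vec\beta\rangle+\tr(R^{\T}TS)$. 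Using $\int R^{\otimes 3}\,\dd R=\varepsilon\varepsilon^{\T}/6$ gives $\mathcal{R}^{(4)}_{\mathcal{O}}(\varrho)=4\det(T)+\tfrac23 I_{14}+(\text{a polynomial in type-1 invariants})$.

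The ratio $\tfrac23:4=\tfrac1{16}:\tfrac38$ is exactly that of the $4$-cycle formula. This corroborates the paper's computation that $x_{(123)}=0$ for $r\le3$. Granting that computation, this observable has every one-fixpoint coefficient equal to zero and still depends on $I_{14}$. The sufficiency claim is therefore false, not merely unproved, and no argument along these lines can establish it. Combined with the $t=3$ measurement of $\det(T)$, this observable would even place $I_{14}$ at type $\le3$. The converse direction you flag as the main obstacle is not where the trouble lies.
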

Using the symmetries from Lemma \ref{sym} the prefactors can be chosen, such that only $c_{(123)}^{ \, (\,\vec{j} \,)}$ is possibly non-zero. Explicitly inverting the resulting matrix yields:
\begin{equation}\label{important}
\begin{split}
    6 x^{( \, \vec{j} \, )}_{(123)} =&
    - 2 \tr(A_{j_1})\tr(A_{j_2})\tr(A_{j_3}) \tr(A_{j_4}) + 2 \bigg[ \tr(A_{j_3})\tr(A_{j_4}) \delta_{j_1j_2} + \tr(A_{j_2})\tr(A_{j_4}) \delta_{j_1j_3} + \tr(A_{j_1})\tr(A_{j_4}) \delta_{j_2j_3} \bigg] \\
    &- 4 \tr(A_{j_1}A_{j_2}A_{j_3}) \tr(A_{j_4}) +\tr(A_{j_1}[A_{j_2},A_{j_3}]A_{j_4})
    +\tr(A_{j_3}[A_{j_1},A_{j_2}]A_{j_4}) +\tr(A_{j_2}[A_{j_3},A_{j_1}]A_{j_4}).
\end{split}
\end{equation}
Going through all possible combinations for $r \leq 3$ leads to:
\begin{lemma}\label{condcoll}
    To show that the Hodge invariant $I_{14}$ cannot be obtained through a measurement of fourth moments using observables of Schmidt rank $r\leq3$, it is sufficient to show that the following terms always vanish for any orthonormal set $\{A,B,C\}$ of matrices:
    \begin{equation}\label{cond}
    \begin{split}
        -2\tr(A)^4+6 \tr(A)^2 - 4 \tr(A^3)\tr(A)&\stackrel{!}{=}0 \\
        -2\tr(A)^3\tr(B) + 6 \tr(A)\tr(B) - 4 \tr(A^3) \tr(B)&\stackrel{!}{=}0 \\
        -2\tr(A)^3 \tr(B) + 2 \tr(A) \tr(B) - 4 \tr(A^2B)\tr(A)&\stackrel{!}{=}0 \\
        -2\tr(A)^2\tr(B)^2 + 2 \tr(B)^2 - 4 \tr(A^2B) \tr(B)&\stackrel{!}{=}0\\
        -2\tr(A)^2\tr(B)\tr(C) - 4 \tr(ABC)\tr(A) + 2\tr(A^2[B,C])&\stackrel{!}{=}0\\
        -2\tr(A)^2\tr(B)\tr(C) + 2\tr(B)\tr(C)-4\tr(A^2B)\tr(C)&\stackrel{!}{=}0.\\ 
    \end{split}
    \end{equation}
\end{lemma}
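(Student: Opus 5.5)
The plan is to start from Lemma~\ref{suf}: it suffices to show that $x^{(\,\vec{j}\,)}_{\pi}$ vanishes for every $3$-cycle $\pi\in S_4$ and every $\vec{j}\in\{1,2,3\}^4$. The first reduction is to a single $3$-cycle. Conjugating by a permutation $\sigma\in S_4$ gives $V_\sigma V_\pi V_\sigma^{-1}=V_{\sigma\pi\sigma^{-1}}$, and it permutes the tensor factors of $A_{j_1}\otimes\dots\otimes A_{j_4}$. The linear system \eqref{lin eq} is therefore equivariant under relabelling the positions, and $x^{(\,\vec{j}\,)}_{\sigma\pi\sigma^{-1}}=x^{(\,\sigma\cdot\vec{j}\,)}_{\pi}$. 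The gauge choice from Lemma~\ref{sym} does not respect this symmetry, so I will not rely on it. I will argue instead that the quantity that enters Lemma~\ref{suf}, namely the $I_{14}$-relevant coefficient, is gauge independent. Then every $3$-cycle can be moved to $(123)$ with fixed point $4$, at the cost of permuting $\vec{j}$. So it suffices to show that the right-hand side of \eqref{important} vanishes for all $\vec{j}\in\{1,2,3\}^4$, with $A_1,A_2,A_3$ orthonormal.

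Next I sort the tuples $\vec{j}$ by the pattern of equalities among $(j_1,j_2,j_3)$ and the value of $j_4$, writing $A,B,C$ for the distinct matrices that occur. Since $r\le 3$, at most three distinct matrices appear. The cases are:
\begin{enumerate}[label={(\roman*)}]
    \item $j_1=j_2=j_3=j_4$;
    \item $j_1=j_2=j_3\neq j_4$;
    \item two of $j_1,j_2,j_3$ equal to $A$ and the third equal to $B$, with $j_4\in\{A,B,C\}$;
    \item $j_1,j_2,j_3$ pairwise distinct, so $j_4$ must coincide with one of them, and after renaming that one is $A$.
\end{enumerate}

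In each case I substitute into \eqref{important}. I use $\delta_{j_kj_\ell}$ for the trace-orthonormality, cyclicity of the trace, and the fact that every commutator with equal entries vanishes. In case (i) all three commutator terms vanish and the three delta terms each give $\tr(A)^2$, which yields line~1. In case (ii) the same happens with $\tr(A)\tr(B)$, which yields line~2. In case (iii) exactly one delta survives. The cubic trace is $\tr(A^2B)$ for either ordering, by cyclicity. The commutator terms cancel in pairs: for the ordering $(A,A,B)$ they are $\tr(A[A,B]D)+\tr(B[A,A]D)+\tr(A[B,A]D)=0$, with $D=A_{j_4}$, and the other orderings are analogous. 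The three choices $j_4=A,B,C$ give lines~3,~4 and~6. In case (iv) no delta survives. The three commutator terms combine via cyclicity: $\tr(C[A,B]A)+\tr(B[C,A]A)=\tr(A^2[B,C])$, because the terms $\tr(CABA)$ and $\tr(BACA)$ cancel. Together with $\tr(A[B,C]A)=\tr(A^2[B,C])$ this gives $2\tr(A^2[B,C])$, which is line~5, up to the ordering of $B,C$, i.e.\ an overall sign.

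The main obstacle is the first step. I must justify that the choice of gauge, setting all $x$ on the other $3$-cycles and double transpositions to zero as in Lemma~\ref{sym}, does not hide contributions to $I_{14}$. I would handle this by showing that the kernel relations of Lemma~\ref{sym} only redistribute weight between permutations whose traces $\tr(\varrho^{\otimes 4}V_{\pi,\pi'})$ satisfy the same linear relations. Then ``all $3$-cycle coefficients vanish in some gauge'' is equivalent to ``the total $I_{14}$-coefficient vanishes'', which is exactly what Lemma~\ref{suf} needs. The remaining case bookkeeping, making sure every ordering within each case reduces to one of the six lines, is routine.
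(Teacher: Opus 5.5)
Your case analysis (i)--(iv) is correct. It is exactly what the paper does when it says ``going through all possible combinations'' in \eqref{important}. Two small points. First, the six orderings in (iv) reduce to two by the cyclic invariance of \eqref{important} in the first three slots. Second, exchanging $B$ and $C$ in line~5 is not an overall sign change, since $-2\tr(A)^2\tr(B)\tr(C)$ is symmetric and $\tr(ACB)\neq-\tr(ABC)$. It is simply line~5 for the relabelled orthonormal set, which the lemma already quantifies over.

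The real problem is your first step, which is internally inconsistent. You say you will not rely on the gauge of Lemma~\ref{sym}. But \eqref{important} is the value of $x_{(123)}$ precisely in that gauge, where $x_{(132)},x_{(124)},\dots$ and the double transpositions are set to zero. In that gauge the identity $x^{(\vec{j})}_{\sigma\pi\sigma^{-1}}=x^{(\sigma\cdot\vec{j})}_{\pi}$ fails: $x_{(132)}\equiv 0$, while $x_{(123)}$ is generically nonzero. The identity is guaranteed only for an equivariant selection of solutions, such as the minimum-norm one, and for such a selection \eqref{important} is not the right formula. So you cannot move every $3$-cycle to $(123)$ by conjugation and then evaluate with \eqref{important}.

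Note also that ``all $3$-cycle coefficients vanish'' is a gauge-dependent statement. In the minimum-norm gauge it fails already for the trivial choice $\vec{j}=(1,1,1,1)$, $A_1=\mathbbm{1}/\sqrt{2}$. There $P(A_1^{\otimes 4})=\tfrac14 V_{()}$, but the minimum-norm solution $\tfrac14\bigl(e_{()}-\Pi_{\ker}e_{()}\bigr)$ gives every $3$-cycle the weight $-\tfrac{1}{96}$, coming from the sign-isotypic part of the kernel. Only the total $I_{14}$ coefficient of $\mathcal{R}^{(4)}_{\mathcal{O}}$ is gauge independent.

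The repair is simpler than your plan: drop the conjugation argument and work entirely in the gauge of Lemma~\ref{sym}. That lemma states that every solution of \eqref{lin eq} yields the exact identity $P(A_{j_1}\otimes\dots\otimes A_{j_4})=\sum_\pi x^{(\vec{j})}_\pi V_\pi$. Hence the gauge-fixed $c_{\pi_A,\pi_B}$ give an exact expansion of $\mathcal{R}^{(4)}_{\mathcal{O}}$, and nothing can be ``hidden''. Your ``main obstacle'' is therefore not one, and no equivalence needs to be proved, only the trivial direction.

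In this gauge $(123)$ is the only $3$-cycle that can carry weight, and the double transpositions carry none. By the fixpoint analysis preceding Lemma~\ref{suf}, the $I_{14}$ coefficient vanishes once $x^{(\vec{j})}_{(123)}=0$ for all $\vec{j}\in\{1,2,3\}^4$. Your case analysis then produces the six conditions exactly as in the paper.
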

These equations can be further simplified:
\begin{lemma}
    To show that the Hodge invariant $I_{14}$ cannot be obtained through a measurement of fourth moments using observables of Schmidt rank $r\leq3$, it is sufficient to show that for any orthonormal set $\{A,B,C\}$ of matrices
        \begin{equation}
        -2\tr(A)^2\tr(B)\tr(C) - 4 \tr(ABC)\tr(A) + 2\tr(A^2[B,C])\stackrel{!}{=} 0
    \end{equation}    
\end{lemma}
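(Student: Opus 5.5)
The plan is to show that the first four and the sixth conditions in Lemma~\ref{condcoll} hold automatically for every orthonormal set of Hermitian $2\times 2$ matrices. Only the fifth condition then remains. Throughout, we use the $2\times2$ identities already exploited in the proof of Theorem~\ref{r>2}.

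First we reuse two vanishing statements from that proof. For a single normalized $A$ with eigenvalues $\lambda_1,\lambda_2$ and $\lambda_1^2+\lambda_2^2=1$, the computation in Eq.~\eqref{Tr(A^3)} gives
\begin{equation*}
-2\tr(A)^3+6\tr(A)-4\tr(A^3)=0 .
\end{equation*}
For orthonormal $A,B$, the second case of that proof gives
\begin{equation*}
-2\tr(A)^2\tr(B)+2\tr(B)-4\tr(A^2B)=0 .
\end{equation*}

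Now we go through the conditions of Lemma~\ref{condcoll} one at a time.
\begin{itemize}
\item The first condition equals $\tr(A)$ times the first expression, so it vanishes.
\item The second condition equals $\tr(B)$ times the first expression, so it vanishes.
\item The fourth condition equals $\tr(B)$ times the second expression, so it vanishes.
\item The sixth condition equals $\tr(C)$ times the second expression, so it vanishes.
\end{itemize}
Each check is a single factorization, so no new computation is needed.

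The third condition does not factor directly. It reads $-2\tr(A)^3\tr(B)+2\tr(A)\tr(B)-4\tr(A^2B)\tr(A)$, which is $\tr(A)$ times $\bigl(-2\tr(A)^2\tr(B)+2\tr(B)-4\tr(A^2B)\bigr)$. This is exactly $\tr(A)$ times the second expression, so it also vanishes. I would double-check this regrouping against Eq.~\eqref{important}, to make sure no stray $\delta$-terms were dropped when specializing $\vec j$.

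Hence, by Lemma~\ref{condcoll}, only the fifth condition remains, which is exactly the claim. The only real obstacle is bookkeeping: each line of Lemma~\ref{condcoll} must be matched to the correct specialization of $\vec j$ in Eq.~\eqref{important}. The cases of equal versus distinct indices determine which Kronecker deltas survive. If the third or sixth line does not reduce as claimed, I would fall back on the eigenbasis argument of Theorem~\ref{r>2}. That argument diagonalizes $A$ and uses $\lambda_1B_{11}+\lambda_2B_{22}=0$, which kills any expression linear in $B$ of the form above.
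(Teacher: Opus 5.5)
Your proposal is correct and is exactly the paper's argument: conditions one and two are $\tr(A)$ resp.\ $\tr(B)$ times $-2\tr(A)^3+6\tr(A)-4\tr(A^3)$, and conditions three, four and six are $\tr(A)$, $\tr(B)$, $\tr(C)$ times $-2\tr(A)^2\tr(B)+2\tr(B)-4\tr(A^2B)$, and both expressions vanish by the proof of Theorem~\ref{r>2}. The only blemish is presentational: your remark that the third condition ``does not factor directly'' is contradicted by your very next line, which factors it like the others, and the proposed fallback is unnecessary.
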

\begin{proof}
    The first and second equation in (\ref{cond}) factor into $-2\tr(A)^3+6\tr(A)-4\tr(A^3)$ and $\tr(A)$ (and $\tr(B)$, respectively). This equation is already known to always be zero by the discussion on the determinant (see the proof of Theorem \ref{r>2}).
    
    The third, fourth and sixth equation factor into $-2\tr(A)^2\tr(B) + 2\tr(B)-4\tr(A^2B)$ and $\tr(A)$ (and $\tr(B)$ and $\tr(C)$, respectively). This equation has also already been discussed in the proof of Theorem~\ref{r>2} when considering the determinant, and was also seen to be always $0$, leaving only one nontrivial condition.
\end{proof}

\begin{theorem}\label{hodge imp}
    Let $\mathcal{O} \in L(C^2)^{\otimes 2}_{\mathrm{herm.}}$ be an observable of operator Schmidt rank $\leq 4$, then $\mathcal{R}^{(4)}_{\mathcal{O}}$ does not depend on $I_{14}$. Thus, the Hodge invariant is of type 4 and therefore maximally difficult.
\end{theorem}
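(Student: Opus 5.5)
The plan is to reduce everything to the single scalar identity in the last lemma, prove that identity by a Bloch-vector computation, and then deal separately with the index patterns that only arise at operator Schmidt rank $4$. That last part is where I expect the statement, as worded with ``$\leq 4$'', to break.

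\emph{The key identity.} Let $\{A,B,C\}$ be orthonormal Hermitian matrices. Expand them in the orthonormal basis $\{\mathbbm{1}/\sqrt2,\sigma_k/\sqrt2\}$ as $A=(a_0\mathbbm{1}+\vec a\cdot\vec\sigma)/\sqrt2$, and similarly with $(b_0,\vec b)$ and $(c_0,\vec c)$. Orthonormality then reads $a_0^2+|\vec a|^2=1$ and $a_0b_0+\langle\vec a,\vec b\rangle=0$, together with the analogous relations for the other pairs. These give the following expressions.
\begin{itemize}
\item $\tr(A)=\sqrt2a_0$.
\item $A^2=\tfrac12\mathbbm{1}+a_0\,\vec a\cdot\vec\sigma$.
\item $[B,C]=i(\vec b\times\vec c)\cdot\vec\sigma$, hence $\tr(A^2[B,C])=2i\,a_0\det(\vec a,\vec b,\vec c)$.
\item By the same computation as in the proof of Theorem~\ref{class0}, $\tr(ABC)=-\tfrac{1}{\sqrt2}a_0b_0c_0+\tfrac{i}{\sqrt2}\det(\vec a,\vec b,\vec c)$.
\end{itemize}
Substituting these into
\[
-2\tr(A)^2\tr(B)\tr(C)-4\tr(ABC)\tr(A)+2\tr(A^2[B,C]),
\]
the real parts are $-8a_0^2b_0c_0+4a_0^2b_0c_0$. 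I will recheck this term against the exact normalisation, because the constant must come out as zero, presumably after using the pairwise orthogonality relations to trade $a_0b_0$ for $-\langle\vec a,\vec b\rangle$ once more. The imaginary parts are $-4i\,a_0\det+4i\,a_0\det=0$. By the preceding lemma this shows that $x^{(\vec j)}_{(ijk)}=0$ for every $\vec j$ with at most three distinct entries. By Lemma~\ref{suf} it follows that $\mathcal{R}^{(4)}_{\mathcal{O}}$ does not depend on $I_{14}$ whenever $r\leq3$.

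\emph{Rank $4$.} For $r=4$ the only new index patterns in Eq.~\eqref{important} have four distinct labels $\vec j=\pi(1,2,3,4)$. I would assemble their contribution to $c_{\pi_A,\pi_B}$ in the same way as in the $r=4$ part of Corollary~\ref{class}. The contribution is a sum over $\pi\in S_4$ of $s_1s_2s_3s_4\,x^{(\pi)}_{(123)}y^{(\pi)}_{(123)}$. The only surviving pieces come from the commutator terms $\tr(A_{j_1}[A_{j_2},A_{j_3}]A_{j_4})$, which are proportional to $\det$ of three of the Bloch vectors times the $0$-th coefficient of the fourth. I would then try to show that this antisymmetrised sum cancels, in the spirit of the sign-alternation argument used for case (4).

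\emph{Main obstacle.} This final cancellation is the step I expect to fail rather than merely be hard. The observable $\mathcal O_{\text{Hodge}}$ from Table~\ref{obs table} has rank $4$ and is known to detect $I_{14}$, so the four-distinct-label terms cannot vanish identically. So I would first evaluate these terms on $\mathcal O_{\text{Hodge}}$. If they are nonzero there, the argument establishes the claim exactly for $r\leq3$, which is what the conclusion ``type $4$, maximally difficult'' requires. In that case I would read the ``$\leq4$'' in the statement as a misprint for ``$\leq3$''.
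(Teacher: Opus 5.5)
Your route is the paper's: use Lemma~\ref{suf} and the condition list to reduce everything to the single identity
\[
-2\tr(A)^2\tr(B)\tr(C)-4\tr(ABC)\tr(A)+2\tr(A^2[B,C])=0
\]
for orthonormal $\{A,B,C\}$, then check it in Bloch coordinates. Your reading of ``$\leq 4$'' as a misprint for ``$\leq 3$'' is also right. The main-text version of this theorem says $r\leq 3$. The reduction lemmas the proof rests on only treat at most three distinct Schmidt labels. And the rank-$4$ observables of Table~\ref{obs table} are exactly what is used to recover $I_{14}$. So you can drop the rank-$4$ part of your plan, including the check on $\mathcal{O}_{\text{Hodge}}$.

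The gap is the real part of the key identity, which you leave at $-8a_0^2b_0c_0+4a_0^2b_0c_0\neq 0$. Your suggested repair cannot work. The quantity $-4a_0^2b_0c_0$ is generically nonzero, and rewriting it with the orthogonality relations does not change its value. The actual error is the constant term in your $\tr(ABC)$. Expanding directly gives
\[
\tr(ABC)=\tfrac{1}{\sqrt2}\bigl[a_0b_0c_0+a_0\langle\vec b,\vec c\rangle+b_0\langle\vec a,\vec c\rangle+c_0\langle\vec a,\vec b\rangle\bigr]+\tfrac{i}{\sqrt2}\det(\vec a,\vec b,\vec c).
\]
The three orthogonality relations turn the bracket into $-2a_0b_0c_0$. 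Hence $\tr(ABC)=-\sqrt2\,a_0b_0c_0+\tfrac{i}{\sqrt2}\det(\vec a,\vec b,\vec c)$, which is the paper's $-\tfrac12\tr(A)\tr(B)\tr(C)+\dots$ rather than your $-\tfrac{1}{\sqrt2}a_0b_0c_0$.

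With this value, the real part of $-4\tr(ABC)\tr(A)$ is $+8a_0^2b_0c_0$, which cancels $-8a_0^2b_0c_0$ exactly. Your imaginary cancellation $-4ia_0\det+4ia_0\det=0$ stands. Your $\tr(A^2[B,C])=2ia_0\det(\vec a,\vec b,\vec c)$ is also correct. The paper's write-up drops the factor $\tr(A_1)$ at that step, which is harmless because both imaginary terms carry it.

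With this fix you prove the theorem for $r\leq 3$. For the conclusion ``type $4$'', also note that moments with $t\leq 3$ are polynomials of degree $\leq 3$ and so cannot contain the degree-$4$ invariant $I_{14}$.
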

\begin{proof}
    Let $A_j = \frac{\tr(A_j)}{2} \mathbbm{1}+\frac{1}{\sqrt{2}}\overrightarrow{\alpha^{(j)}}\cdot \vec{\sigma}$ for $j=1,2,3$, as in the proof of Theorem~\ref{class0}. There, it was already shown that
    \begin{equation*}
        \tr(A_1A_2A_3) = - \frac{\tr(A_1)\tr(A_2)\tr(A_3)}{2} + \frac{i}{\sqrt{2}}\det(\overrightarrow{\alpha^{(1)}},\overrightarrow{\alpha^{(2)}},\overrightarrow{\alpha^{(3)}}).
    \end{equation*}
    Notice that
    \begin{equation*}
    \begin{split}
        [A_2,A_3] &= \left[\frac{\tr(A_2)}{2} \mathbbm{1}+\frac{1}{\sqrt{2}}\overrightarrow{\alpha^{(2)}}\cdot \vec{\sigma},\frac{\tr(A_3)}{2} \mathbbm{1}+\frac{1}{\sqrt{2}}\overrightarrow{\alpha^{(3)}}\cdot \vec{\sigma} \right] = \sum_{j,k=1}^3 \frac{\alpha^{(2)}_j\alpha^{(3)}_k}{2} [\sigma_j,\sigma_k] \\
        &= i \sum_{j,k,\ell=1}^3 \alpha^{(2)}_j\alpha^{(3)}_k \varepsilon_{jk\ell} \sigma_\ell = i \left(\overrightarrow{\alpha^{(2)}} \times \overrightarrow{\alpha^{(3)}}\right)\cdot \vec{\sigma}.
    \end{split}
    \end{equation*}
    Thus, 
    \begin{equation*}
    \begin{split}
        \tr(A_1^2[A_2,A_3]) =& \tr(\left(\frac{\tr(A_1)}{2} \mathbbm{1}+\frac{1}{\sqrt{2}}\overrightarrow{\alpha^{(1)}}\cdot \vec{\sigma}\right)^2\cdot i \left(\overrightarrow{\alpha^{(2)}} \times \overrightarrow{\alpha^{(3)}}\right)\cdot \vec{\sigma})\\
        =&2 \frac{\tr(A_1)}{2} \frac{1}{\sqrt{2}}i \sum_{j,k=1}^3 \alpha^{(1)}_j \left(\overrightarrow{\alpha^{(2)}} \times \overrightarrow{\alpha^{(3)}}\right)_k \tr(\sigma_j \sigma_k) +\frac{i}{\sqrt{2}^2} \sum_{j,k,\ell = 1}^3 \alpha^{(1)}_j\alpha^{(1)}_k\left(\overrightarrow{\alpha^{(2)}} \times \overrightarrow{\alpha^{(3)}}\right)_\ell \tr(\sigma_j \sigma_k \sigma_\ell) \\
        =&i \sqrt{2} \langle \overrightarrow{\alpha^{(3)}}, \left(\overrightarrow{\alpha^{(2)}} \times \overrightarrow{\alpha^{(3)}}\right) \rangle = \sqrt{2} i \det(\overrightarrow{\alpha^{(1)}},\overrightarrow{\alpha^{(2)}},\overrightarrow{\alpha^{(3)}}),
    \end{split}
    \end{equation*}
    where the second term vanishes, since $j\leftrightarrow k$ induces a minus, while at the same time not changing the sum.
    Putting all of this together leads to:
    \begin{equation*}
    \begin{split}
        &-2\tr(A_1)^2\tr(A_2)\tr(A_3)-4\tr(A_1A_2A_3)\tr(A_1)+2\tr(A^2_1[A_2,A_3]) \\
        =&-2\tr(A_1)^2\tr(A_2)\tr(A_3) - 4 \tr(A_1)\left( - \frac{\tr(A_1)\tr(A_2)\tr(A_3)}{2} + \frac{i}{\sqrt{2}}\det(\overrightarrow{\alpha^{(1)}},\overrightarrow{\alpha^{(2)}},\overrightarrow{\alpha^{(3)}}) \right) \\
        &+ 2 \sqrt{2} i \det(\overrightarrow{\alpha^{(1)}},\overrightarrow{\alpha^{(2)}},\overrightarrow{\alpha^{(3)}}) \\
        =&i\det(\overrightarrow{\alpha^{(1)}},\overrightarrow{\alpha^{(2)}},\overrightarrow{\alpha^{(3)}}) \underbrace{\left(\frac{-4}{\sqrt{2}} +2 \sqrt{2}\right)}_{=0} = 0.
    \end{split}
    \end{equation*}
    Thus, the criterion from the previous lemma has been fulfilled.
\end{proof}

\section{The determinant at $t=4$}\label{app:t4det}
While going against the initial classification of measurement difficulty, it may be of interest whether an $r=2$ measurement of $\det(T)$ is possible, if one considers fourth moments. The following discussion will show that this is not the case.

Denote with 
\begin{equation*}
    \tilde{S}_4 \coloneqq \{ (),(12),(13),(14),(23),(24),(34),(123),(1234),(1243),(1324),(1342),(1423),(1432)\}
\end{equation*}
the set of all relevant permutations for $d=2,t=4$ in the sense of Proposition \ref{sym}.

\begin{lemma}
    Let $d=2,t=4, r\leq 3$, and $\pi_A,\pi_B \in \tilde{S_4}$ then for an arbitrary bipartite qubit state $\varrho$ of Schmidt rank $r$ the quantity $\tr(\varrho^{\otimes 4} V_{\pi_A,\pi_B})$ can only depend on $\det(T)$ if $\pi_A$ is a 4-cycle, meaning ${\pi_A \in \{(1234),(1243),(1324),(1342),(1423),(1432)\}}$, and $\pi_B \in \{\pi_A, \pi_A^{-1}\}$.
\end{lemma}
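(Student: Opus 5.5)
We need to identify, for each pair $(\pi_A,\pi_B)\in\tilde S_4^2$, which monomials of degree four in the Bloch data can appear in $\tr(\varrho^{\otimes 4}V_{\pi_A,\pi_B})$ with a factor $\det(T)$. As in the discussion preceding Lemma~\ref{suf}, a $\det(T)$ dependence at $t=4$ can only come from monomials of the form $T_{j_1k_1}T_{j_2k_2}T_{j_3k_3}\cdot x$, where $x$ is $1$ (from the identity term). Entries of $\vec\alpha$ or $\vec\beta$ cannot contribute, since their degree would spoil the pure cubic $\varepsilon\varepsilon TTT$ structure. Concretely, I would expand $\varrho^{\otimes 4}$ and retain only the terms in which three tensor factors are $\sigma_j\otimes\sigma_k$ and one factor is $\mathbbm{1}\otimes\mathbbm{1}$. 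A degree-4 term $TTTT$ could in principle also produce $\det(T)\cdot(\text{linear in }T)$, but no linear invariant in $T$ exists. So I would argue that the only admissible invariant shape is $\det(T)$ times the constant, coming from the $\mathbbm{1}\otimes\mathbbm{1}$ slot.

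Fix the position $p\in\{1,2,3,4\}$ of that identity factor. By Lemma~\ref{calc}, the $A$-side trace factorizes over the cycles of $\pi_A$, and likewise on the $B$-side. To obtain $\varepsilon_{j_1j_2j_3}$ one needs a single trace containing the three Paulis $\sigma_{j_1}\sigma_{j_2}\sigma_{j_3}$ (with the identity possibly inserted), because $\tr(\sigma_a\sigma_b)=2\delta_{ab}$ and $\tr(\sigma_a)=0$ give only symmetric contractions. So on each side, the three non-$p$ indices must lie in one cycle. I would then go through $\tilde S_4$. The identity, the transpositions, and products of traces all split the three Pauli indices, so they give $0$ or only $\tr(T^{\T}T)$-type contractions. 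The $3$-cycle $(123)$ works only when $p=4$; the $4$-cycles work for every $p$.

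The key step is summing over $p$, exactly as in case (4) of the Hodge analysis. For fixed $p$, the Pauli triple sits in a cycle, and deleting the identity from the cyclic order leaves a $3$-cyclic order on the remaining slots. The resulting $\pm\varepsilon$ sign is the orientation of that induced cyclic order. Since the four placements of the identity (inherited from $\varrho^{\otimes 4}$ symmetrically) come with equal weight, the coefficient of $\det(T)$ is $\sum_p \sigma_A(p)\sigma_B(p)$, where $\sigma_\pi(p)=\pm1$ is the induced orientation on $\{1,2,3,4\}\setminus\{p\}$. I would tabulate these signs. For a $3$-cycle in $\tilde S_4$, only $p=4$ is admissible, so a pair with $\pi_A=(123)$ can contribute only if $\pi_B$ also accommodates $p=4$. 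Here I must check carefully; I expect the claim requires showing that the $(123)$ contributions cancel against, or are absorbed by, the symmetry relations of Lemma~\ref{sym}. For two $4$-cycles $\pi_A,\pi_B$, the sign vector $(\sigma(p))_p$ of a $4$-cycle is determined by the cycle up to an overall sign: $\pi$ and $\pi^{-1}$ give opposite vectors. Distinct $4$-cycles not inverse to each other give orthogonal sign vectors, so the sum vanishes, while $\pi_B\in\{\pi_A,\pi_A^{-1}\}$ gives $\pm4$.

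The main obstacle is the mixed case where one permutation is $(123)$ (or the identity, or a transposition) and the other is a $4$-cycle, and more generally making sure no $TTTT$ term sneaks in $\det(T)$ through the $\mathrm{adj}$-type contractions. For the former, I would directly compute $\sigma_{(1234)}(4)$ and similar quantities: the $(123)$ side fixes $p=4$ and yields one $\varepsilon$, but the other side with $p=4$ also yields one. So this pair does contribute unless it is excluded, and I would first check whether the statement's hypothesis (Schmidt rank $r\le3$, together with the gauge $c_{(123)}$ chosen as in Lemma~\ref{sym}) kills it. If not, I would attempt to show that the total contribution $\sum_p$ for $\pi_A=(123)$ actually pairs with terms from the $\mathbbm{1}\otimes\mathbbm{1}$ and the $\tr(A_{j_4})$ factors to cancel.
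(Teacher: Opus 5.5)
\textbf{Verdict: there is a genuine gap.} Your treatment of the identity, the transpositions and the pairs of $4$-cycles matches the paper's argument. A permutation with two or more fixed points leaves some Pauli matrix traced alone. For $4$-cycles, the sign of the induced $3$-cyclic order, summed over the four slots of $\mathbbm{1}\otimes\mathbbm{1}$, is exactly the paper's sign table: opposite sign vectors for $\pi$ and $\pi^{-1}$, orthogonal vectors otherwise.

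The gap is the $3$-cycle $(123)\in\tilde S_4$. You correctly flag it as the obstacle but leave it unresolved, and neither of your fallbacks can work. There is no cancellation inside the trace. Since $(123)$ fixes the fourth copy, $\tr(\varrho^{\otimes4}V_{(123),(123)})=\tr(\varrho^{\otimes3}V_{(123),(123)})$, which contains $-\tfrac38\det(T)$ by Lemma~\ref{t3d2 system}. When $(123)$ is paired with a $4$-cycle, only the slot $p=4$ survives, so there is nothing for it to cancel against. The relations of Lemma~\ref{sym} do not absorb $(123)$ either; it is one of the fourteen permutations that are kept. Read literally as a statement about the traces, the lemma is false for these pairs. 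It is true only as a statement about which terms $c_{\pi_A,\pi_B}\tr(\varrho^{\otimes4}V_{\pi_A,\pi_B})$ of $\mathcal R^{(4)}_{\mathcal O}$ can carry $\det(T)$. Your argument also never uses $r\le3$, which should have been a warning sign.

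The missing idea is that the hypothesis $r\le3$ kills the coefficient, not the trace. In the gauge of Lemma~\ref{sym}, the Schur--Weyl coefficient $x^{(\vec j)}_{(123)}$ of Eq.~\eqref{important} vanishes for every $\vec j\in\{1,2,3\}^4$ whenever the Schmidt components are orthonormal. This is exactly what Lemma~\ref{condcoll} and Theorem~\ref{hodge imp} establish in the Hodge analysis. Consequently $c_{(123),\pi_B}=\sum_{\vec j}\prod_k s_{j_k}\,x^{(\vec j)}_{(123)}\,y^{(\vec j)}_{\pi_B}=0$ for every $\pi_B$, and likewise $c_{\pi_A,(123)}=0$. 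All pairs involving $(123)$ therefore drop out, leaving only the $4$-cycle pairs that your sign analysis handles.

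This step cannot be bypassed, because it fails once there are four distinct Schmidt components. With $A_j=\sigma_j/\sqrt2$ for $j=1,2,3$ and $A_4=\mathbbm{1}/\sqrt2$, Eq.~\eqref{important} gives $6x^{(1,2,3,4)}_{(123)}=-i\neq0$.
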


\begin{proof}
    The only terms in the expansion of $\varrho^{\otimes 4}$ that can contribute to $\det(T)$ are those where $\mathbbm{1}\otimes \mathbbm{1}$ is chosen once, and $\sum\limits_{i,j}T_{ij} \sigma_i \otimes \sigma_j$ is chosen thrice. This especially means that no permutation with more than one fixpoint can appear, for the same reason as with the Hodge invariant. Furthermore, the proof of Theorem \ref{hodge imp} showed that, for $r\leq 3$, the cycle $(123)$ cannot contribute, and Lemma \ref{sym} states that all double-swaps $(ij)(k\ell)$ and all other 3-cycles can be removed using symmetry. This only leaves the 4-cycles. 
    \\
    
    Each $4$-cycle acts on $\sum\limits_{\substack{i_1,j_1,i_2, \\ j_2,i_3,j_3}} T_{i_1,j_1}T_{i_2,j_2}T_{i_3,j_3} \mathbbm{1}\otimes\mathbbm{1}\otimes \mathbbm{\sigma}_{i_1}\otimes \mathbbm{\sigma}_{j_1}\otimes \mathbbm{\sigma}_{i_2}\otimes \mathbbm{\sigma}_{j_2}\otimes \mathbbm{\sigma}_{i_3}\otimes \mathbbm{\sigma}_{j_3}$ as well as three other terms (which only differ by the position of $\mathbbm{1}\otimes \mathbbm{1}$). Acting with $V_{(1234),(1234)}$ on the first term and tracing leads to
    \begin{equation*}
        \sum\limits_{\substack{i_1,j_1,i_2, \\j_2,i_3,j_3}}T_{i_1,j_1}T_{i_2,j_2}T_{i_3,j_3} (2i)^2 \varepsilon_{i_1i_2i_3} \varepsilon_{j_1j_2j_3} = -4\cdot 3! \det(T).
    \end{equation*}
    Acting with any pair of permutations $V_{\pi_A,\pi_B}$ on any of the four terms and tracing leads to the above expression, but with possibly a different sign. Since four terms appear in total in each trace, differing signs might lead to cancellation.
    
    The possible sign combinations are listed in the following table:
    
    \begin{center}
        \begin{tabular}{ |c||c|c|c|c|c|c|   }
        \hline
          & $(1234)$ & $(1243)$ &$(1324)$ & $(1342)$ & $(1423)$ & $(1432)$\\
         \hline
         \hline
         1TTT&+&-&-&+&+&-\\
         T1TT&+&-&+&+&-&-\\
         TT1T&+&+&+&-&-&-\\
         TTT1&+&+&-&-&+&-\\
         \hline
        \end{tabular}  
        \end{center}
    This is to be read in the following way: In the expansion of $\tr(\varrho^{\otimes 4}V_{\pi_A,\pi_B})$, the prefactor in front of $-4\cdot 3! \det(T)$ can be obtained by summing the products of signs listed for $\pi_A$ and $\pi_B$ over all rows. For example, the prefactor of $-4\cdot3!\det(T)$ in $\tr(\varrho^{\otimes 4}V_{(1243)(1324)})$ is given by
    $(-1)\cdot(-1)+(-1)\cdot(+1)+(+1)\cdot(+1)+(+1)\cdot(-1)= 0$.
    In fact, for all choices of permutations the total prefactor turns out to only ever be non-zero if the signs either all agree (corresponding to $V_{\pi_A,\pi_A}$) or are all opposite to each other (corresponding to $V_{\pi_A,\pi_A^{-1}}$, since $(1234)^{-1} = (1432), (1243)^{-1} = (1342)$ and $(1324)^{-1} = (1423)$).
\end{proof}

\begin{lemma}\label{suf det}
    To show that the determinant $\det(T)=I_1$ cannot be obtained through a measurement of fourth moments using observables of operator Schmidt rank $r \leq 2$, it is sufficient to show that $x_\pi^{( \, \vec{j} \, )} = x_{\pi^{-1}}^{( \, \vec{j} \, )}$ for $\vec{j} \in \{1,2\}^4$, where $x_\pi^{( \, \vec{j} \, )}$ is defined via Proposition \ref{explicit} and $\pi\in \tilde{S}_4$ is an arbitrary 4-cycle.
\end{lemma}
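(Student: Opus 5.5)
The plan is to combine the preceding lemma with the structure of the coefficients $c_{\pi_A,\pi_B}$ from Proposition~\ref{explicit}.

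By the preceding lemma, for $r\leq 2$ (in particular $r\le 3$) the only terms $\tr(\varrho^{\otimes 4}V_{\pi_A,\pi_B})$ that can carry $\det(T)$ are those with $\pi_A$ a 4-cycle and $\pi_B\in\{\pi_A,\pi_A^{-1}\}$. The sign table shows that they do so with opposite signs: in $V_{\pi,\pi}$ all four rows contribute $+1$, giving coefficient $4\cdot(-4\cdot 3!)\det(T)$, while in $V_{\pi,\pi^{-1}}$ all four rows contribute $-1$. Hence, up to a global constant, the total prefactor of $\det(T)$ in $\mathcal{R}^{(4)}_{\mathcal{O}}(\varrho)$ is
\begin{equation*}
\sum_{\pi\ \text{4-cycle}} \bigl(c_{\pi,\pi}-c_{\pi,\pi^{-1}}\bigr).
\end{equation*}
Here the prefactors are fixed by the gauge choice of Lemma~\ref{sym}, in which the 4-cycles are kept. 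Since each 4-cycle $\pi$ has $\pi^{-1}\neq\pi$, and the map $\pi\mapsto\pi^{-1}$ permutes the six 4-cycles, it suffices to show that each difference $c_{\pi,\pi}-c_{\pi,\pi^{-1}}$ vanishes.

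Next insert $c_{\pi_A,\pi_B}=\sum_{\vec j}\bigl(\prod_k s_{j_k}\bigr)\,x^{(\vec j)}_{\pi_A}y^{(\vec j)}_{\pi_B}$ from Proposition~\ref{explicit}. This gives
\begin{equation*}
c_{\pi,\pi}-c_{\pi,\pi^{-1}}=\sum_{\vec j\in\{1,2\}^4}\Bigl(\prod_k s_{j_k}\Bigr)\,x^{(\vec j)}_{\pi}\bigl(y^{(\vec j)}_{\pi}-y^{(\vec j)}_{\pi^{-1}}\bigr).
\end{equation*}
The $y$ coefficients are defined by the same linear system \eqref{lin eq} with $B_j$ in place of $A_j$, so they obey the same identity $y_\pi=y_{\pi^{-1}}$. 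Therefore, if $x^{(\vec j)}_\pi=x^{(\vec j)}_{\pi^{-1}}$ holds for every rank-$\le2$ orthonormal family (and hence for the $y$'s as well), every summand vanishes. This yields the sufficiency claim.

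The main point needing care is that the gauge fixing of Lemma~\ref{sym} is legitimate here. Lemma~\ref{sym} sets the coefficients of the double swaps and of all 3-cycles except $(123)$ to zero; the 4-cycles are retained. Since any solution of \eqref{lin eq} yields correct prefactors, the reduction is valid. I would also check that the $(123)$ contributions vanish for $r\leq2$, as stated in the preceding lemma via Theorem~\ref{hodge imp}, so no 3-cycle cross terms contaminate the determinant coefficient.
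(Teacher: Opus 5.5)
Your proposal is correct and follows essentially the same route as the paper: the sign table of the preceding lemma makes the $\det(T)$ prefactor proportional to $\sum_{\pi}\bigl(c_{\pi,\pi}-c_{\pi,\pi^{-1}}\bigr)$, and each difference factors as $\sum_{\vec j}\bigl(\prod_k s_{j_k}\bigr)x^{(\vec j)}_\pi\bigl(y^{(\vec j)}_\pi-y^{(\vec j)}_{\pi^{-1}}\bigr)$. The $y$'s obey the same identity because they solve the same system with $B_j$ in place of $A_j$ (the paper's ``the name of $x$ and $y$ is just a convention''), and your remarks on the gauge choice from Lemma~\ref{sym} and on the vanishing $(123)$ terms for $r\le 2$ only make explicit what the preceding lemma already relies on.
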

\begin{proof}
    According to the proof of the last lemma, the total prefactor in front of $\det(T)$ is given by
    \begin{equation*}
        -4\cdot 3! \cdot \sum_{\substack{\pi \in \{(1234),(1243),\\(1324), (1342), \\(1423),(1432)\}}} \left(c_{\pi,\pi}-c_{\pi,\pi^{-1}}\right),
    \end{equation*}
    since, according to that proof, $c_{\pi,\pi}$ induces a "$+$" four times (terms with the same sign get multiplied), while $c_{\pi,\pi^{-1}}$ induces a "$-$" (terms with different signs get multiplied).
    Furthermore, by definition,
    \begin{equation*}
        c_{\pi,\pi} - c_{\pi,\pi^{-1}} = \sum_{j_1,j_2,j_3,j_4=1}^r s_{j_1} s_{j_2}s_{j_3} s_{j_4} x_\pi^{( \, \vec{j} \, )} \left(y_\pi^{( \, \vec{j} \, )} -  y_{\pi^{-1}}^{( \, \vec{j} \, )} \right).
    \end{equation*}
    Since the name of $x$ and $y$ is just a convention, the Lemma follows.
\end{proof}

Now everything is prepared for the proof of the analogon of Theorem \ref{r>2} for $t=4$.

\begin{theorem}\label{det t=4}
    Consider a bipartite state $\varrho$ in its Bloch representation (\ref{density matrix}). There is no observable $\mathcal{O} \in  L(\mathbb{C}^2)^{\otimes 2}_{\mathrm{herm.}}$ of operator Schmidt rank $r \leq 2$, such that $\mathcal{R}^{(t)}_{\mathcal{O}}(\varrho)$ depends on $\det(T)$, for $t\leq 4$.
\end{theorem}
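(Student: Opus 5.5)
The cases $t\leq 3$ are covered by Theorem~\ref{r>2}, so only $t=4$ needs proof. By Lemma~\ref{suf det}, it suffices to show that $x_\pi^{(\,\vec{j}\,)} = x_{\pi^{-1}}^{(\,\vec{j}\,)}$ for every 4-cycle $\pi\in\tilde{S}_4$ and every $\vec{j}\in\{1,2\}^4$. Here the $x^{(\,\vec{j}\,)}$ are the Schur--Weyl coefficients of $P(A_{j_1}\otimes A_{j_2}\otimes A_{j_3}\otimes A_{j_4})$, fixed uniquely by the gauge of Lemma~\ref{sym}.

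The cleanest route is a symmetry argument rather than inverting the $14\times14$ system explicitly. The map $\pi\mapsto\pi^{-1}$ corresponds to $V_\pi\mapsto V_\pi^{\T}=V_{\pi^{-1}}$, where the transpose is taken on $(\mathbb{C}^2)^{\otimes 4}$. Transposing the defining identity $P(X)=\sum_\pi x_\pi V_\pi$ gives
\begin{equation*}
P(X)^{\T} = P(X^{\T}) = \sum_\pi x_\pi V_{\pi^{-1}},
\end{equation*}
using that the Haar measure is invariant under $U\mapsto \bar U$. Now take $X=A_{j_1}\otimes\cdots\otimes A_{j_4}$ with $r\leq 2$, so only two orthonormal Hermitian matrices $A_1,A_2$ occur. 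Any two Hermitian $2\times 2$ matrices can be simultaneously rotated by a unitary $W$ into real symmetric form, because their traceless parts span at most a 2-plane in $\mathbb{R}^3$, which we rotate into the $x$--$z$ plane. Since $P(W^{\dagger\otimes4}XW^{\otimes4})=W^{\dagger\otimes4}P(X)W^{\otimes4}$ and the $V_\pi$ commute with $W^{\otimes 4}$, the coefficients $x_\pi$ are unchanged by this rotation. In the rotated basis $X^{\T}=X$, so
\begin{equation*}
\sum_\pi x_\pi V_\pi=\sum_\pi x_\pi V_{\pi^{-1}}.
\end{equation*}

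From this it remains to pass to equality of coefficients. The set $\tilde S_4$ is not closed under inversion, since it contains $(123)$ but not $(132)$. Hence after substituting $\pi\to\pi^{-1}$ one must re-express $V_{(132)}$ in the gauge of Lemma~\ref{sym}, using its last identity
\begin{equation*}
V_{(132)}=-V_{()}+V_{(12)}+V_{(13)}+V_{(23)}-V_{(123)}.
\end{equation*}
The 4-cycles are closed under inversion, and this relation does not involve any 4-cycle. Because the 14 gauge-fixed operators are linearly independent (the reduced matrix is invertible), comparing coefficients on the 4-cycles yields $x_\pi=x_{\pi^{-1}}$ for all 4-cycles, as required.

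I expect the main obstacle to be this gauge bookkeeping: verifying that the symmetry relations used to eliminate the other 3-cycles and the double swaps never feed into the 4-cycle coefficients. If that step is delicate, the fallback is to write the explicit inverse formula for $x_\pi^{(\,\vec{j}\,)}$ on 4-cycles, analogous to Eq.~\eqref{important}. Its only possibly antisymmetric part is a term like $\tr(A_{j_1}A_{j_2}A_{j_3}A_{j_4})-\tr(A_{j_4}A_{j_3}A_{j_2}A_{j_1})$. This difference equals $2i\,\mathrm{Im}\,\tr(\cdot)$, which is proportional to a triple product of Bloch vectors, and that triple product vanishes because only two vectors occur.
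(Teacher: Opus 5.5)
Your proof is correct, and it differs from the paper's at the decisive step. Both arguments reduce the claim, via Theorem~\ref{r>2} and Lemma~\ref{suf det}, to showing $x_\pi^{(\,\vec{j}\,)}=x_{\pi^{-1}}^{(\,\vec{j}\,)}$ on 4-cycles. The paper then inverts the gauge-fixed $14\times14$ system and writes out $12\bigl(x_{(1234)}^{(\,\vec{j}\,)}-x_{(1432)}^{(\,\vec{j}\,)}\bigr)$ explicitly. It checks the index patterns $(1,1,1,1)$, three-plus-one and two-plus-two against the identities~\eqref{cond}. You instead use $P(X)^{\T}=P(X^{\T})$ and $V_\pi^{\T}=V_{\pi^{-1}}$, together with a simultaneous real form of $A_1,A_2$.

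Your gauge bookkeeping is sound. The first six relations of Lemma~\ref{sym} do contain 4-cycles, but they never enter your argument. The only element of $\{\pi^{-1}:\pi\in\tilde S_4\}$ outside $\tilde S_4$ is $(132)$, which is removed by the $S_3$ antisymmetrizer, and that relation contains no 4-cycle. Since $\{V_\pi\}_{\pi\in\tilde S_4}$ is linearly independent, comparing coefficients is legitimate.

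Your route explains why the paper's case analysis works: $x_\pi-x_{\pi^{-1}}=2i\,\mathrm{Im}\,x_\pi$. For qubits this is a combination of Bloch-vector triple products, so it vanishes when only two directions occur. It also gives $x_{(123)}=0$ for $r\le 2$ at no cost. The paper's explicit formula, by contrast, remains usable at $r=3$, where no simultaneous real form exists. Your fallback undersells that formula, which also contains $\tr(A_{j_1}A_{j_2}A_{j_3})\tr(A_{j_4})$ and product-of-trace terms whose real parts cancel only via orthonormality. Your main argument does not need it.

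Pushed to the operator level, your idea is the partial-transposition argument of Sec.~\ref{det_ch}. With $B_1,B_2$ rotated to real form, $\mathcal{O}^{\T_2}$ is locally equivalent to $\mathcal{O}$, so $\mathcal{R}^{(t)}_{\mathcal{O}}(\varrho^{\T_2})=\mathcal{R}^{(t)}_{\mathcal{O}}(\varrho)$ for \emph{every} $t$. This bypasses Lemma~\ref{suf det} altogether. Since PPT states with $\det(T)\neq 0$ and $I_{14}\neq 0$ exist, it also rules out type-2 measurements of $I_1$ and $I_{14}$ at any moment order, a question Sec.~\ref{discussion_sec} leaves open.
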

\begin{proof}
    According to Theorem \ref{r>2}, only the case $t=4$ needs to be considered, which can be reduced to showing $x_\pi^{( \, \vec{j} \, )} - x_{\pi^{-1}}^{( \, \vec{j} \, )}=0$ for every 4-cycle $\pi \in S_4$ and $\vec{j} \in \{1,2\}^4$ according to Lemma \ref{suf det}.
    The first step is to find conditions similarly to those in Corollary \ref{condcoll}. Explicitly inverting the $14\times 14$ matrix $(M_2^{(4)})_{\pi,\pi' \in \tilde{S}_4}$ yields
    \begin{equation*}
    \begin{split}
        12\left(x_{(1234)}^{( \, \vec{j} \, )} - x_{(1234)^{-1}}^{( \, \vec{j} \, )}\right) =  &2        \tr(A_{j_1})\tr(A_{j_2})\tr(A_{j_3}) \tr(A_{j_4}) - 2 \big[ \tr(A_{j_3})\tr(A_{j_4}) \delta_{j_1j_2} + \tr(A_{j_2})\tr(A_{j_4}) \delta_{j_1j_3}\\
        &+ \tr(A_{j_1})\tr(A_{j_4}) \delta_{j_2j_3} \big] + 4 \tr(A_{j_1}A_{j_2}A_{j_3}) \tr(A_{j_4}) -2 \tr(A_{j_1}A_{j_2}A_{j_3}A_{j_4}) -\tr(A_{j_1}A_{j_2}A_{j_4}A_{j_3}) \\
        &+\tr(A_{j_1}A_{j_3}A_{j_2}A_{j_4}) +\tr(A_{j_1}A_{j_3}A_{j_4}A_{j_2})-\tr(A_{j_1}A_{j_4}A_{j_2}A_{j_3})+2\tr(A_{j_1}A_{j_4}A_{j_3}A_{j_2}).
    \end{split}
    \end{equation*}
    For the other $4$-cycles, the resulting equation is identical up to a relabeling.
    
    Note that Eq.~\eqref{cond} consists of a list of identities that were all shown to vanish. We use these identities in the remainder of this proof.
    
    For $r=1$ (or $\vec{j} = (1,1,1,1)$), all traces over the product of four matrices are $\tr(A^4_1)$, implying that the last $6$ vanish. The remaining terms equal $2\tr(A_1)^4 - 6 \tr(A_1)+4\tr(A_1)^3\tr(A_1)$, which is known from (\ref{cond}) to vanish.
    \\
    
    For $r=2$, consider first those $\vec{j}$ where one matrix ($\equiv A$) appears thrice, and the other one ($\equiv B$) once. Then the last six terms all equal $\tr(A^3B)$ and thus vanish. The first terms are then given by
    \begin{equation*}
    \begin{cases}
        2\tr(A)^3\tr(B)-6\tr(A)\tr(B)+4\tr(A^3)\tr(B) \quad & \text{if }\vec{j}=(1,1,1,2) \\
        2\tr(A)^3\tr(B)-2\tr(A)\tr(B)+4\tr(A^2B)\tr(A) \quad & \text{otherwise.}
    \end{cases}.
    \end{equation*} 
    Both of these terms are known to vanish from Eq.~\eqref{cond}.
    \\
    
    Consider now those $\vec{j}$ such that two matrices ($\equiv A,B$) appear twice each. The first $8$ terms are in this case given by
    \begin{equation*}
        2\tr(A)^2\tr(B)^2-2\tr(B)^2+4\tr(A^2B)\tr(A),
    \end{equation*}
    which is already known by be zero from (\ref{cond}). 
    
    The last $6$ terms are either given by $\tr(A^2B^2)$ or $\tr(ABAB)$ with vanishing prefactors, as for $\vec{j}=(1,1,2,2)$, they read 
    \begin{equation*}
        \tr(A^2B^2)\cdot (-10-5+5+10) + \tr(ABAB) \cdot (5-5) = 0,
    \end{equation*}
    and for $\vec{j} = (1,2,1,2)$ they read
    \begin{equation*}
        \tr(A^2B^2)\cdot (-5+5-5+5) + \tr(ABAB) \cdot (5-5) = 0
    \end{equation*}
    instead. Thus, all terms vanish, and the theorem follows.
\end{proof}

\section{The Kempe invariant}\label{app:kempe}
Analogously to the determinant, the first step in constructing a measurement process (or proving its impossibility) is to find permutations $\pi_A,\pi_B,\pi_C \in \tilde{S_3} \coloneqq \{(),(12),(13),(23),(123) \}$, such that $\tr(\varrho^{\otimes 3} V_{\pi_A,\pi_B,\pi_C})$ depends on
$\tr(T^{AB}T^{BC}T^{CA})$ for any tripartite qubit state $\varrho$ with Bloch decomposition as in Eq.~\eqref{bloch 3}.

Since $t=3$ (due to the invariant being of degree 3), this is only possible if one chooses the $T^{AB}$, the $T^{BC}$ and the $T^{CA}$ term once each in the expansion of $\varrho^{\otimes 3}$. Since the choice of the permutation $()$ traces out a full subsystem, any nontrivial objects can only appear for $\pi_A,\pi_B,\pi_C \in \{(12),(13),(23),(123) \}$.

One would now like to argue that, analogously to the determinant, each term including $(123)$ vanishes for $r\leq 2$, but there is an important distinction to make here. The concept of operator Schmidt decomposition only works for bipartite systems. In larger systems, each observable can still be written in the form $\mathcal{O} = \sum\limits_{j=1}^r A_j \otimes B_j \otimes C_j$, but this time $(A_j)_j$,$(B_j)_j$ and $(C_j)_j$ are no longer necessarily orthonormal. Luckily, the consideration from the discussion of the determinant can be generalized anyway:
\begin{lemma}
    Let $t=3$ and consider a three-partite qubit system. Then $x_{(123)}^{(\, \vec{j} \,)}$ (defined in Proposition \ref{explicit}) is always $0$ for $\vec{j} \in \{1,2\}^3$.
\end{lemma}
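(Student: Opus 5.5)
The plan is to redo the computation of $x^{(\vec{j})}_{(123)}$ from Eq.~\eqref{TrA} without assuming orthonormality, and then to show that the resulting expression vanishes \emph{identically} for $2\times 2$ matrices whenever only two distinct matrices enter.

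\textbf{Step 1: the general formula.} The linear system \eqref{t=3 mat} for one tensor factor is unchanged; only its right-hand side changes. By Lemma~\ref{calc}, the entries $\delta_{j_1j_2}$ etc.\ become $\tr(A_{j_1}A_{j_2})$ etc. Inverting the same $5\times5$ matrix (with $x_{(132)}=0$, justified by Lemma~\ref{sym}) therefore gives
\begin{equation*}
x^{(\vec{j})}_{(123)} = -2\,\tr(A_{j_1})\tr(A_{j_2})\tr(A_{j_3}) + 2\big[\tr(A_{j_1})\tr(A_{j_2}A_{j_3})+\tr(A_{j_2})\tr(A_{j_3}A_{j_1})+\tr(A_{j_3})\tr(A_{j_1}A_{j_2})\big] - 4\,\tr(A_{j_1}A_{j_2}A_{j_3}).
\end{equation*}
This is the same linear combination as before, since the inverse matrix is fixed. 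The same computation applies verbatim to the $B$ and $C$ factors. Call the right-hand side $F(A_{j_1},A_{j_2},A_{j_3})$; it is a trilinear form on $2\times 2$ matrices.

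\textbf{Step 2: the diagonal vanishes.} For $A_1=A_2=A_3=A$,
\begin{equation*}
F(A,A,A) = -2\tr(A)^3 + 6\tr(A)\tr(A^2) - 4\tr(A^3).
\end{equation*}
For $2\times2$ matrices, the Newton identities (Cayley--Hamilton) give $\tr(A^3)=\tfrac32\tr(A)\tr(A^2)-\tfrac12\tr(A)^3$. Substituting this shows $F(A,A,A)=0$ for \emph{every} $A$, with no normalization needed. This is the step that replaces the use of $\tr(A^2)=1$ in the proof of Theorem~\ref{r>2}.

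\textbf{Step 3: polarization.} Decompose $F$ into its symmetrization $F_{\mathrm{sym}}$ plus a remainder. The first two groups of terms in $F$ are already symmetric, so the remainder is proportional to $\tr(A_1A_2A_3)-\tr(A_1A_3A_2)$. Since $F_{\mathrm{sym}}(A,A,A)=F(A,A,A)=0$ for all $A$, polarization of the symmetric trilinear form gives $F_{\mathrm{sym}}\equiv 0$.

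\textbf{Step 4: conclusion.} For $\vec{j}\in\{1,2\}^3$ at least two arguments coincide, so the triple is $(A,A,B)$ up to order. By cyclicity of the trace, $\tr(AAB)=\tr(ABA)=\tr(BAA)$, so the antisymmetric remainder vanishes on such triples. Hence $F(A_{j_1},A_{j_2},A_{j_3})=F_{\mathrm{sym}}(A_{j_1},A_{j_2},A_{j_3})=0$, which is the claim.

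The main point of care is Step 1. One must check that the deltas in Example~\ref{t=3 examp} arose only from orthonormality, so that the inverted-matrix coefficients carry over unchanged. One must also check, via Lemma~\ref{sym}, that the kernel ambiguity still allows the choice $x_{(132)}=0$ independently of orthonormality. Both are routine, since the matrix $M^{(3)}_2$ does not depend on the $A_j$.
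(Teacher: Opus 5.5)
Your proof is correct, and it reaches the result by a different route from the paper.

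The paper also starts from the general (non-orthonormal) inversion, Eq.~\eqref{3x3new}. It then checks the cases $\vec{j}=(1,1,1)$ and $\vec{j}=(1,1,2)$ by direct computation in the eigenbasis of $A$. The first case reduces to a polynomial identity in the eigenvalues $\lambda_1,\lambda_2$, which is essentially your Step~2. The second case expands $\tr(AB)$ and $\tr(A^2B)$ in the diagonal entries $B_{11},B_{22}$ and cancels terms.

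You replace that second computation with a structural argument. Since the cubic form $F(A,A,A)$ vanishes identically on $2\times 2$ matrices, polarization kills the entire symmetric part, so that
\begin{equation*}
F(A_1,A_2,A_3) = -2\,\tr\big(A_1[A_2,A_3]\big)
\end{equation*}
holds for all $2\times 2$ matrices. This visibly vanishes whenever two arguments coincide.

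Your route has several advantages:
\begin{itemize}
\item It handles all orderings of $\vec{j}$ at once; the paper writes out only $(1,1,2)$ and leaves the others implicit.
\item It needs neither Hermiticity nor diagonalizability of $A$.
\item It also explains the bipartite results. Theorem~\ref{r>2} is the special case of orthonormal $A_j$. For three distinct arguments, the same identity gives $x^{(1,2,3)}_{(123)}\propto \tr(A_1[A_2,A_3]) \propto i\det(\overrightarrow{\alpha^{(1)}},\overrightarrow{\alpha^{(2)}},\overrightarrow{\alpha^{(3)}})$, matching the value found in Theorem~\ref{class0}.
\end{itemize}
The paper's computation is more pedestrian but needs no appeal to polarization.

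One small point: like Eq.~\eqref{TrA}, your Step~1 formula omits an overall factor $1/12$. The last row of Eq.~\eqref{3x3new} is exactly your combination divided by $12$. This does not affect a vanishing statement.
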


\begin{proof}
    Using (\ref{t=3 mat}) but adapting it to the current situation yields
    \begin{equation}\label{3x3new} \begin{bmatrix}
    x^{(\, \vec{j} \, )}_{()}\\
    x^{(\,\vec{j} \, )}_{(12)}\\
    x^{(\,\vec{j} \, )}_{(13)}\\
    x^{(\,\vec{j} \, )}_{(23)}\\
    x^{(\,\vec{j} \, )}_{(123)}\\
    \end{bmatrix}
= \frac{1}{12} \left(\begin{array} {cccccc} 2 & 0 & 0 & 0 & -2 \\[6pt]
    0 & 1 & -1 & -1 & 2 \\[6pt]
    0 & -1 & 1 & -1 & 2 \\[6pt]
    0 & -1 & -1 & 1 & 2 \\[6pt]
    -2 & 2 & 2 & 2 & -4
    \end{array} \right)\begin{bmatrix}
    \tr(A_{j_1})\tr(A_{j_2})\tr(A_{j_3})\\[6pt]
    \tr(A_{j_1}A_{j_2})\tr(A_{j_3})\\[6pt]
    \tr(A_{j_1}A_{j_3})\tr(A_{j_2})\\[6pt]
    \tr(A_{j_2}A_{j_3})\tr(A_{j_1})\\[6pt]
    \tr(A_{j_1}A_{j_2}A_{j_3})\\[6pt]
    \end{bmatrix}.
    \end{equation}

    For $\vec{j}=(1,1,1)$, this implies (written in terms of the eigenvalues of $A_1 \equiv A$):
    \begin{equation*}
    \begin{split}
        12x_{(123)}^{( \, \vec{j} \,)} &= -2 \tr(A)^3 +6 \tr(A^2)\tr(A) + \tr(A^3) = - 2(\lambda_1 + \lambda_2)^3 + 6 (\lambda_1^2 + \lambda_2^2)(\lambda_1+\lambda_2) -4 \lambda_1^3 + \lambda_2^3 \\
        &=(\lambda_1^3 + \lambda_2^3) (-2+6-4) + (\lambda_1^2 \lambda_2 + \lambda_2^2 \lambda_1) (-2\cdot 3 + 6) = 0.
    \end{split}
    \end{equation*}

    For $\vec{j} = (1,1,2)$, this implies (again written in terms of the eigenvalues of $A_1\equiv A$ and letting $A_2 \equiv B$):
    \begin{equation*}
    \begin{split}
        12x_{(123)}^{( \, \vec{j} \,)} &= -2 \tr(A)^2\tr(B) +2 \tr(A^2)\tr(B) + 4\tr(AB)\tr(A) - 4\tr(A^2B) \\
        &= -2 \tr(A)^2\tr(B) +2 \tr(A^2)\tr(B) + 4 (\lambda_1 B_{11} + \lambda_2 B_{22})(\lambda_1 + \lambda_2) - 4 (\lambda_1^2 B_{11} + \lambda_2^2 B_{22}) \\
        &= -2 \tr(A)^2\tr(B) +2 \tr(A^2)\tr(B) + B_{11}(4\lambda_1^2 +2 \lambda_1 \lambda_2 - 4 \lambda_1^2) + B_{22}(4\lambda_1 \lambda_2 +2\lambda_2^2 - 4 \lambda_2^2) \\
        &= 2\tr(B) \left[ -\tr(A)^2 + \tr(A^2)- 2 \lambda_1 \lambda_2 \right] = 2\tr(B) \left[-(\lambda_1+\lambda_2)^2  +\lambda_1^2 +\lambda_2^2 - 2\lambda_1\lambda_2\right] = 0.
    \end{split}
    \end{equation*}
\end{proof}

Thus, it only remains to discuss the choices of $V_{(12),(12),(12)}$ (and the other two terms of similar form), $V_{(12),(23),(31)}$ (and its five other permutations), as well as $V_{(12),(12),(13)}$ (and the 17 other terms of similar structure). They yield

\begin{equation*}
    \tr(\varrho^{\otimes 3} V_{(ij),(ij),(ij)}) = \frac{1}{8}\bigg[ 1 + |\vec{\alpha}|^2 + |\vec{\beta}|^2 + |\vec{\gamma}|^2  + \tr({(T^{AB})}^{\T}  T^{AB}) + \tr({(T^{BC})}^{\T}  T^{BC}) + \tr({(T^{CA})}^{\T}  T^{CA}) + ||W||_2^2\bigg],
\end{equation*}
where $||W||_2^2 = \sum\limits _{j,k,\ell = 1}^3 W_{jk\ell}^2$, as well as
\begin{equation*}
\begin{split}
    \tr(\varrho^{\otimes 3} V_{(12),(12),(13)}) &= \frac{1}{8}\bigg[1+ |\vec{\alpha}|^2+|\vec{\beta}|^2 + |\vec{\gamma}|^2 + \tr({(T^{AB})}^{\T}  T^{AB}) + \langle \alpha, T^{AC} \gamma \rangle + \langle \beta, T^{BC} \gamma \rangle +\sum_{jk\ell} W_{jk\ell} T_{jk}^{AB} \gamma_\ell\bigg],\\
    \tr(\varrho^{\otimes 3} V_{(12),(23),(13)}) &= \frac{1}{8} \bigg[1+ |\vec{\alpha}|^2+|\vec{\beta}|^2 + |\vec{\gamma}|^2 
    +  \langle \alpha, T^{AB} \beta \rangle  + \langle \alpha, T^{AC} \gamma \rangle + \langle \beta, T^{BC} \gamma \rangle) + \tr(T^{AB}T^{BC}T^{CA})\bigg].
\end{split}
\end{equation*}
The following lemma makes the situation more clear:

\begin{lemma} Let $i_1,i_2,i_3 \in \{1,2,3\}$ be pairwise distinct numbers and let $\varrho$ be an arbitrary tripartite qubit state written in terms of Eq.~\eqref{bloch 3}. Let $V_{\pi_A,\pi_B,\pi_C} = V_{\pi_A}\otimes V_{\pi_B} \otimes V_{\pi_C}$ for $\pi_A,\pi_B,\pi_C \in S_3$. Then
    \begin{equation}
    \begin{bmatrix*}[c]
        \tr(\varrho^{\otimes 3} V_{(i_1i_2),(i_1i_2),(i_1i_2)}) \\[3pt]
        \tr(\varrho^{\otimes 3} V_{(i_1i_2),(i_1i_2),(i_1i_3)}) \\[3pt]
        \tr(\varrho^{\otimes 3} V_{(i_1i_2),(i_1i_3),(i_1i_2)}) \\[3pt]
        \tr(\varrho^{\otimes 3} V_{(i_1i_3),(i_1i_2),(i_1i_2)}) \\[3pt]
        \tr(\varrho^{\otimes 3} V_{(i_1i_2),(i_2i_3),(i_3i_1)}) \\[3pt]
        \end{bmatrix*} = \frac{1}{8}\begin{bmatrix*}[c]
    ||W||_2^2\\
    \sum\limits_{jk\ell} W_{jk\ell} T^{AB}_{jk} \gamma_\ell\\
    \sum\limits_{jk\ell} W_{jk\ell} T^{CA}_{\ell j}\beta_k\\
    \sum\limits_{jk\ell} W_{jk\ell} \alpha_j T^{BC}_{k\ell}\\
    \tr(T^{AB}T^{BC}T^{CA}) \\
    \end{bmatrix*} + \text{\stackanchor{invariants available through}{direct measurements on subsystems.}}
    \end{equation}
    If $\mathcal{O}\in L(\mathbb{C}^2)^{\otimes 3}_{\mathrm{herm.}}$ is an observable of the form $\mathcal{O} = \sum\limits_{j=1}^r A_j \otimes B_j \otimes C_j$, where $r\leq 2$, then the listed invariants are the only nontrivial (i.e. not available through subsystem measurements) invariants that are possibly obtainable.
\end{lemma}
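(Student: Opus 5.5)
To prove the lemma, I will expand $\varrho^{\otimes 3}$ using the Bloch decomposition \eqref{bloch 3} and evaluate each trace with Lemma~\ref{calc}, factor by factor. This is possible because $V_{\pi_A,\pi_B,\pi_C}=V_{\pi_A}\otimes V_{\pi_B}\otimes V_{\pi_C}$ and each term of $\varrho^{\otimes 3}$ is a tensor product over the three parties. A transposition $(i_1i_2)$ on a party has two effects. It multiplies the trace of the fixed copy $i_3$ by the product $\tr(X_{i_1}X_{i_2})$ of the operators on that party in copies $i_1,i_2$. Since $\tr(\sigma_j)=0$ and $\tr(\sigma_j\sigma_k)=2\delta_{jk}$, copy $i_3$ must carry the identity on that party. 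On copies $i_1,i_2$, the Pauli labels on that party must either coincide (both $\sigma_j$) or both be $\mathbbm{1}$.

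I will then do the bookkeeping row by row. For $(i_1i_2)$ on all three parties, copy $i_3$ is forced to be $\mathbbm{1}^{\otimes 3}$, and copies $i_1,i_2$ must be the same Bloch component. This gives the diagonal sum $\frac{1}{8}\big[1+|\vec\alpha|^2+\dots+\|W\|_2^2\big]$, whose only genuinely tripartite term is $\|W\|_2^2$. For two transpositions $(i_1i_2)$ and one $(i_1i_3)$, say on $A,B$ and $C$ respectively: copy $i_3$ is trivial on $A$ and $B$, and copy $i_2$ is trivial on $C$. Matching labels, the only term involving all three parties is the pairing of $W$ in copy $i_1$ with $T^{AB}$ in copy $i_2$ and $\vec\gamma$ in copy $i_3$. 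This gives $\sum W_{jk\ell}T^{AB}_{jk}\gamma_\ell$. By symmetry in the parties, the two other placements give the $T^{CA}\beta$ and $\alpha T^{BC}$ terms. For $(i_1i_2),(i_2i_3),(i_3i_1)$, each copy is trivial on exactly one party. The cross term is therefore $T^{AB}$, $T^{BC}$, $T^{CA}$ chained by index matching, which is $\tr(T^{AB}T^{BC}T^{CA})$. All remaining terms involve only single-party or two-party Bloch data and are collected as subsystem invariants. I will keep track of the normalisation $8^{-3}$ times the trace factors $2^k$ to fix the overall $\frac18$.

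For the last sentence, I will write $\Phi^{(3)}(\mathcal{O})=\sum c_{\pi_A,\pi_B,\pi_C}V_{\pi_A,\pi_B,\pi_C}$ via \eqref{reduction} and Schur--Weyl duality. By Lemma~\ref{sym}, each factor can be normalised to $\tilde S_3$, so that $c_{\pi_A,\pi_B,\pi_C}=\sum_{\vec j}x^{(\vec j)}_{\pi_A}y^{(\vec j)}_{\pi_B}z^{(\vec j)}_{\pi_C}$. The preceding lemma gives $x^{(\vec j)}_{(123)}=0$ for $r\le2$, and the same holds for $y$ and $z$. Hence every coefficient with a $3$-cycle in some factor vanishes. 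Any factor $()$ traces out that party entirely, so the trace is a function of a two-party reduced state, which is already measurable. The only surviving triples are all-transposition triples, and up to relabeling of copies these are exactly the five rows listed. The main obstacle is to verify carefully that no other tripartite cross terms survive, for example $W$ paired with $W$ in a mixed pattern, or $T^{AB}$ with $T^{AB}$, by exhausting the label constraints for each transposition pattern.
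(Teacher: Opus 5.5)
Your proposal is correct and takes the paper's route. You expand $\varrho^{\otimes 3}$ in the Bloch basis and evaluate each party's factor with Lemma~\ref{calc}; the paper displays the three representative traces, and they match your bookkeeping, including the overall factor $2^{-9}\cdot 4^3=\frac18$. For $r\le 2$ you combine the vanishing of $x^{(\vec j)}_{(123)}$ with the fact that a $()$ factor traces out a party, leaving the $27$ all-transposition triples, which are the five listed patterns. The exhaustive label check you flag as the main obstacle goes through: in the mixed patterns $W$ can only sit in copy $i_1$, and in the cyclic pattern no copy can carry $W$, so no further tripartite terms appear.
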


\begin{proposition}\label{Kempe Prop}
    There is no observable $\mathcal{O} = A \otimes B \otimes C \in L(\mathbb{C}^2)^{\otimes 3}_{\mathrm{herm.}}$ such that the invariant $\tr(T^{AB}T^{BC}T^{CA})$ is uniquely recoverable from a measurement of third moments.
\end{proposition}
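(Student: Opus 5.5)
For a product observable $\mathcal{O}=A\otimes B\otimes C$ we have $r=1$, so $\vec{j}=(1,1,1)$ is the only index. Equation~\eqref{reduction} then factorizes completely:
\begin{equation*}
\Phi^{(3)}(\mathcal{O})=P(A^{\otimes 3})\otimes P(B^{\otimes 3})\otimes P(C^{\otimes 3}).
\end{equation*}
Each factor is expanded via Eq.~\eqref{3x3new}, with $c_{(132)}$ set to zero. The preceding lemma shows $x^{(1,1,1)}_{(123)}=0$, so only the permutations in $\{(),(12),(13),(23)\}$ survive in each tensor factor.

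\textbf{Step 1: reduce to the three transpositions.} By the lemma before this proposition, $\tr(T^{AB}T^{BC}T^{CA})$ can only come from triples $(\pi_A,\pi_B,\pi_C)$ of transpositions. The identity permutation traces out a full party and produces only subsystem invariants.

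\textbf{Step 2: compute the transposition coefficients.} Read off from Eq.~\eqref{3x3new} with $A_{j_\ell}=A$:
\begin{equation*}
12\,x_{(12)}=12\,x_{(13)}=12\,x_{(23)}=-\tr(A^2)\tr(A)+2\tr(A^3)=:12\,a .
\end{equation*}
The three values coincide because all traces are symmetric in the slots. Define $b$ and $c$ analogously for $B$ and $C$. The nontrivial part of $\mathcal{R}^{(3)}_{\mathcal{O}}$ is therefore
\begin{equation*}
abc\sum_{\pi_A,\pi_B,\pi_C\in\{(12),(13),(23)\}}\tr\!\left(\varrho^{\otimes 3}V_{\pi_A,\pi_B,\pi_C}\right).
\end{equation*}

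\textbf{Step 3: count the terms.} Feeding the full symmetric sum into the last lemma gives:
\begin{itemize}
\item 3 triples with all transpositions equal, each contributing $\|W\|_2^2$;
\item 6 triples for each of the three ``two equal, one different'' patterns, contributing the three $W$--$T$--vector terms;
\item 6 triples with all transpositions distinct (the cyclic $(i_1i_2),(i_2i_3),(i_3i_1)$ type), each contributing $\tr(T^{AB}T^{BC}T^{CA})$.
\end{itemize}
This reproduces exactly the fixed combination~\eqref{r=1 Kempe Eq}, namely
\begin{equation*}
3\|W\|_2^2+6\sum W(\cdots)+6\tr(T^{AB}T^{BC}T^{CA}),
\end{equation*}
with overall prefactor $abc/8$, plus subsystem invariants.

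\textbf{Step 4: conclude non-recoverability.} Moments $t=1,2$ have degree $\le2$ and cannot contain the cubic term. Every $t\le3$ product measurement therefore only ever accesses $\tr(T^{AB}T^{BC}T^{CA})$ inside this fixed linear combination with $\|W\|_2^2$ and the $W$-terms. The last step is to exhibit two states that agree on all type-1-accessible data, namely all subsystem invariants, all $t\le 2$ moments, and the combination~\eqref{r=1 Kempe Eq}, but differ in $\tr(T^{AB}T^{BC}T^{CA})$. The first thing I would try is a one-parameter family that trades $\|W\|_2^2$ against the triple-$T$ term.

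The main obstacle is Step 4. The other Makhlin-type subsystem invariants and the $t=2$ data must be kept fixed while $\|W\|_2^2$ and $\tr(T^{AB}T^{BC}T^{CA})$ vary, and positivity of the states must be ensured. I would use small perturbations of the maximally mixed state with $\vec\alpha=\vec\beta=\vec\gamma=0$. The $W$--$T$--vector terms then vanish, and the invariants that must match are the bipartite ones of each $T^{XY}$ together with $\|W\|_2^2$. The construction takes $T^{AB}=T^{BC}=\epsilon\mathbbm{1}$ and $T^{CA}=\pm\epsilon\mathbbm{1}$, which flips the sign of the triple product; this sign change is a partial-transpose-type freedom only up to local orthogonal, not unitary, transformations. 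Each $\det(T^{XY})$ is left unchanged, and $\|W\|_2^2$ is shifted by $4\epsilon^3$ via a suitable $W$ to compensate. The coefficient $6/3=2$ fixes the required shift exactly. Finally, I would check that the discrete and higher-degree quantities play no role at $t\le3$.
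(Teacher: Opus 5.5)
Your Steps 1--3 are exactly the paper's proof. For $r=1$ the three transposition coefficients coincide; your value $\frac{1}{12}\bigl(2\tr(A^3)-\tr(A^2)\tr(A)\bigr)$ is what Eq.~\eqref{3x3new} actually gives, and the expression printed in the paper differs, but only the equality matters. Hence all 27 transposition triples carry weight $abc$, and the only non-subsystem content is the fixed combination \eqref{r=1 Kempe Eq}. The paper stops there.

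Your Step~4, however, cannot work as designed, because $\|W\|_2^2$ is itself type-1 accessible at $t=2$. For $\mathcal{O}=\sigma_3^{\otimes 3}$ one has $\langle U^\dagger\mathcal{O}U\rangle_\varrho=\sum_{jk\ell}W_{jk\ell}u_jv_kw_\ell$ with independent uniformly random unit vectors $\vec u,\vec v,\vec w$, so $\mathcal{R}^{(2)}_{\sigma_3^{\otimes3}}=\|W\|_2^2/27$. You in fact list $\|W\|_2^2$ among the quantities that must match and then shift it.

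Consequently, in your family $\vec\alpha=\vec\beta=\vec\gamma=0$ the term $S:=\sum_{jk\ell}W_{jk\ell}\bigl[T^{AB}_{jk}\gamma_\ell+\alpha_jT^{BC}_{k\ell}+T^{CA}_{\ell j}\beta_k\bigr]$ vanishes. Product measurements then deliver both $\|W\|_2^2$ and $3\|W\|_2^2+6\tr(T^{AB}T^{BC}T^{CA})$, so the triple product \emph{is} recoverable there, and no witness pair exists in that family. There are two minor slips as well. First, $\tr(\epsilon^3\mathbbm{1}_3)=3\epsilon^3$, so the required shift would be $12\epsilon^3$, not $4\epsilon^3$. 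Second, $T^{CA}\mapsto -T^{CA}$ does flip $\det(T^{CA})$; this is harmless only because $\det$ is not type-1 accessible.

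The missing idea is that the compensation must come from $S$ at fixed $\|W\|_2^2$, which requires a nonzero local Bloch vector. Take $\vec\alpha=\vec\beta=0$, $\vec\gamma=g\,\hat e_3$ and $T^{AB}=T^{BC}=\epsilon\mathbbm{1}$. Compare $(T^{CA},W)=(\epsilon\mathbbm{1},W)$ with $(-\epsilon\mathbbm{1},W')$, where $\|W'\|_2=\|W\|_2$ and $\epsilon g\sum_j(W'_{jj3}-W_{jj3})=6\epsilon^3$. Such $W,W'$ exist whenever $\|W\|_2\ge 3\epsilon^2/g$, e.g.\ by rotating within the span of $e_1\otimes e_1\otimes e_3$ and $e_1\otimes e_2\otimes e_3$.

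For this pair, all one- and two-party type-1 invariants agree: the $CA$ marginals differ only in $\det(T^{CA})$, since $\vec\alpha=0$. All purities, $\|W\|_2^2$ and $S+\tr(T^{AB}T^{BC}T^{CA})$ also agree, while the triple product differs by $6\epsilon^3$. Small parameters keep both operators positive. With this replacement your Step~4 becomes a genuine strengthening of the paper's argument: the triple product is not type 1 even when all product observables with $t\le 3$ are combined, which the paper only asserts via the fixed-combination observation.
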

\begin{proof}
    Notice that (\ref{3x3new}) implies that for $\vec{j}=(1,1,1)$:
    \begin{equation}
        x_{(12)}^{(1,1,1)} = x_{(13)}^{(1,1,1)} = x_{(23)}^{(1,1,1)} = 2 \tr(A)^3 - \tr(A^2) \tr(A).
    \end{equation}
    While this is not zero in general, this implies that the prefactor in front of $V_{\pi_A,\pi_B,\pi_C}$ is independent of the choice of 
    
    $\pi_A,\pi_B,\pi_C \in \{(12),(13),(23)\}$. Thus, the only nontrivial measurable invariant is given by
    \begin{equation}\label{Kempe r=1}
        3 ||W||_2^2 + 6 \sum_{jk\ell} W_{jk\ell} \left[ T^{AB}_{jk} \gamma_\ell + \alpha_j T^{BC}_{k\ell} + T^{CA}_{\ell j} \beta_k\right] + 6 \tr(T^{AB}T^{BC}T^{CA}).
    \end{equation}
\end{proof}

Notably, a measurement of (\ref{Kempe r=1}) is possible. For example, using $\mathcal{O}_1 = \left(\mathbbm{1} + \sigma_3\right)^{\otimes3}$ leads to 

$c_{\pi_A,\pi_B,\pi_C} = (2\tr(A)^3 - \tr(A^2)\tr(A))^3 = 512 \neq 0$.
\\

A plausible claim is now that it should be possible using multiple $r=2$ measurements to recover $\tr(T^{AB}T^{BC}T^{CA})$. Let 

\begin{equation*}
    \hat{c} \coloneqq     \begin{bmatrix*}[l]
        c_{(12),(12),(12)} + c_{(13),(13),(13)} + c_{(23),(23),(23)}\\
        c_{(12),(12),(13)} + c_{(12),(12),(23)} + c_{(13),(13),(12)} + c_{(13),(13),(23)} + c_{(23),(23),(12)} + c_{(23),(23),(13)}\\
        c_{(12),(13),(12)} + c_{(12),(23),(12)} + c_{(13),(12),(23)} + c_{(13),(23),(13)} + c_{(23),(12),(23)} + c_{(23),(13),(23)}\\
        c_{(13),(12),(12)} + c_{(23),(12),(12)} + c_{(12),(23),(23)} + c_{(23),(13),(13)} + c_{(12),(23),(23)} + c_{(13),(23),(23)}\\
        c_{(12),(23),(13)} + c_{(23),(13),(12)} + c_{(13),(12),(23)} + c_{(12),(13),(23)}+ c_{(13),(23),(12)}+ c_{(23),(12),(13)} \\
        \end{bmatrix*}.
\end{equation*}
Explicit calculation shows
\begin{equation}\label{strange type II}
    \hat{c}\left(\mathcal{O}_2 = \mathbbm{1}^{\otimes 3} + \sigma_3^{\otimes 3}\right) = \begin{bmatrix} 8/9 \\ 0 \\0\\0\\0 \end{bmatrix},
\end{equation}
allowing the measurement of $||W||_2^2$. Furthermore,
\begin{equation}
    \hat{c}\left(\mathcal{O}_3 = \mathbbm{1}^{\otimes 2} \otimes \left(\mathbbm{1} -\sigma_1 \right)+ \sigma_3^{\otimes 2} \otimes \left(\mathbbm{1} -\sigma_1 \right)\right) = \begin{bmatrix} 8/9 \\ 16/9 \\ 0 \\0\\0  \end{bmatrix},
\end{equation}
additionally allowing measurement of $\sum\limits_{jk\ell} W_{jk\ell} T_{jk}^{AB} \gamma_\ell$. Using the same observable (but permuting on which subsystem the $\mathbbm{1} - \sigma_1$ terms act) gives access to $\sum\limits_{jk\ell} W_{jk\ell} T_{\ell j}^{CA} \beta_k$ and $\sum\limits_{jk\ell} W_{jk\ell} \alpha_j T_{k\ell}^{BC}$ via symmetry. Thus, $\tr(T^{AB}T^{BC}T^{CA})$ can now be recovered from the possibility of measurement of $(\ref{Kempe r=1})$ and the four other measurements. This proves:

\begin{proposition}\label{kempe prop 2}
    Using the observables 
    \begin{align}\label{Kempe obs}
    \mathcal{O}_2&:=\mathbbm{1} \otimes \mathbbm{1}  \otimes \mathbbm{1}  + \sigma_3 \otimes \sigma_3 \otimes \sigma_3,\\
    \mathcal{O}_{3,a} &:= \left(\mathbbm{1} -\sigma_1 \right) \otimes \mathbbm{1} \otimes \mathbbm{1}+ \left(\mathbbm{1} -\sigma_1 \right) \otimes \sigma_3 \otimes \sigma_3,\\
    \mathcal{O}_{3,b} &:= \mathbbm{1} \otimes \left(\mathbbm{1} -\sigma_1 \right) \otimes \mathbbm{1}+ \sigma_3 \otimes \left(\mathbbm{1} -\sigma_1 \right) \otimes \sigma_3,\\
    \mathcal{O}_{3,c} &:= \mathbbm{1} \otimes \mathbbm{1} \otimes \left(\mathbbm{1} -\sigma_1 \right)+ \sigma_3 \otimes \sigma_3 \otimes \left(\mathbbm{1} -\sigma_1 \right)
    \end{align}
    (all fulfilling $r = 2$) in addition to some product observables on qubit subsystems, the invariant $I_{\mathrm{Kempe}}$ of a tripartite qubit system can be recovered via a measurement of third moments in the sense of Definition~\ref{def meas}:
    \end{proposition}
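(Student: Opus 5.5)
The plan is to reduce the recovery of $I_{\mathrm{Kempe}}$ to solving a small linear system for the five nontrivial degree-three quantities listed in the preceding lemma:
\[
\|W\|_2^2,\quad S_C:=\sum_{jk\ell} W_{jk\ell}T^{AB}_{jk}\gamma_\ell,\quad S_B:=\sum_{jk\ell} W_{jk\ell}T^{CA}_{\ell j}\beta_k,\quad S_A:=\sum_{jk\ell} W_{jk\ell}\alpha_jT^{BC}_{k\ell},\quad \tr(T^{AB}T^{BC}T^{CA}).
\]

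First, I will recall that by the explicit formula for $I_{\mathrm{Kempe}}$, all terms except $\tr(T^{AB}T^{BC}T^{CA})$ are type 1 invariants of two-qubit subsystems. These are $|\vec\alpha|^2$, $|\vec\beta|^2$, $|\vec\gamma|^2$, and the three $\langle\cdot,T\cdot\rangle$ terms. They are obtained from product observables of the form $\mathcal{O}_1\otimes\mathcal{O}_2\otimes\mathbbm{1}$ and similar, using the observables of Tab.~\ref{obs table} with $I_{12}$, $I_4$, $I_7$ (and $I_2$ where needed). Likewise, every "invariant available through direct measurements on subsystems" appearing in the preceding lemma is assumed known from such product measurements. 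So it suffices to show that $\tr(T^{AB}T^{BC}T^{CA})$ is a linear combination of measured third moments and these known quantities.

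Second, I will use that for any observable with $r\le 2$, the preceding lemmas (vanishing of $x_{(123)}$ for $\vec j\in\{1,2\}^3$, and the $\tr(\varrho^{\otimes3}V_{\pi_A,\pi_B,\pi_C})$ table) give
\[
\mathcal{R}^{(3)}_{\mathcal{O}}(\varrho)=\tfrac18\,\hat c(\mathcal{O})\cdot\big(\|W\|_2^2,\;S_\bullet,\;S_\bullet,\;S_\bullet,\;\tr(T^{AB}T^{BC}T^{CA})\big)^{\T}+\text{known}.
\]
Here the three middle entries of $\hat c$ group the prefactors according to which of $S_A,S_B,S_C$ they multiply. This grouping is read off from which subsystem carries the "odd" transposition.

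Third, I will assemble the system. The product observable $(\mathbbm{1}+\sigma_3)^{\otimes3}$ from Proposition~\ref{Kempe Prop} yields the combination \eqref{Kempe r=1} with nonzero coefficient $512$. By Eq.~\eqref{strange type II}, $\mathcal{O}_2$ isolates $\|W\|_2^2$. The computed vector $\hat c(\mathcal{O}_{3,c})=(8/9,16/9,0,0,0)$ gives $S_C$ once $\|W\|_2^2$ is known. By the symmetry of the construction under permuting the three subsystems (which permutes the entries of $\hat c$ accordingly), $\mathcal{O}_{3,a}$ and $\mathcal{O}_{3,b}$ give $S_A$ and $S_B$.

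Finally, I will substitute these into \eqref{Kempe r=1}. Since the coefficient of $\tr(T^{AB}T^{BC}T^{CA})$ there is $6\neq0$, this solves for $\tr(T^{AB}T^{BC}T^{CA})$ and hence for $I_{\mathrm{Kempe}}$. All moments used are $t=3=\deg$, as required by Definition~\ref{def meas}.

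The main obstacle is the explicit evaluation of $\hat c(\mathcal{O}_{3,c})$ (and of $\hat c(\mathcal{O}_2)$). This requires computing the $x^{(\vec j)}_\pi$ via \eqref{3x3new} for all $\vec j\in\{1,2\}^3$ on each tensor factor, with non-orthonormal summands ($\mathbbm{1}$ and $\sigma_3$, or $\mathbbm{1}-\sigma_1$ repeated), and summing the products. It is here that one must check that the fifth entry vanishes and the second is nonzero. A secondary point is to justify carefully that permuting subsystems maps $\mathcal{O}_{3,c}$ to $\mathcal{O}_{3,a}$ and $\mathcal{O}_{3,b}$ and correspondingly permutes $S_C$ to $S_A$ and $S_B$, respecting the index conventions $T^{CA}_{\ell j}$.
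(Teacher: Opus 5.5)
Your proposal is correct and follows the paper's proof. The product observable $(\mathbbm{1}+\sigma_3)^{\otimes 3}$ yields the fixed combination \eqref{Kempe r=1}, $\mathcal{O}_2$ isolates $\|W\|_2^2$, and $\mathcal{O}_{3,c}$ (with $\hat c=(8/9,16/9,0,0,0)$) together with its subsystem permutations $\mathcal{O}_{3,a},\mathcal{O}_{3,b}$ gives the three mixed $W$-terms; since all other contributions are type-1 two-qubit invariants, one solves for $\tr(T^{AB}T^{BC}T^{CA})$ and hence $I_{\mathrm{Kempe}}$. The $\hat c$ evaluation you flag as the main work is exactly what the paper asserts by explicit calculation, and it checks out: for the factor pair $\mathbbm{1},\sigma_3$, at most one transposition coefficient $x^{(\vec j)}_{(ij)}$ is nonzero for each $\vec j$, so the cross-pattern entries of $\hat c$ vanish.
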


\end{document}